\algnewcommand{\algorithmicstruct}{\textbf{struct}}
\algnewcommand{\Struct}[1]{\algorithmicstruct\ #1}
\newcommand{\op}[1]{{\sf #1}} 
\newcommand{\Insert}{\op{PIPQ-Insert}}
\newcommand{\DeleteMin}{\op{PIPQ-DeleteMin}}
\newcommand{\Search}{\op{Search}}
\newcommand{\SearchDelete}{\op{SearchDelete}}
\newcommand{\SearchPhysDel}{\op{SearchPhysDel}}
\newcommand{\WInsert}{\op{Worker-Insert}}
\newcommand{\WDeleteMin}{\op{Worker-DeleteMin}}
\newcommand{\LInsert}{\op{L-Insert}}
\newcommand{\LDeleteMin}{\op{L-DeleteMin}}
\newcommand{\LDelete}{\op{L-DeleteMaxP}}
\title{PIPQ: Strict Insert-Optimized Concurrent Priority Queue}
\author{Olivia Grimes}{Lehigh University, Bethlehem, PA, USA}{oag221@lehigh.edu}{https://orcid.org/0009-0003-4934-5720}{}
\author{Ahmed Hassan}{Lehigh University, Bethlehem, PA, USA}{ahmed.hassan@lehigh.edu}{https://orcid.org/0000-0003-0232-305X}{}
\author{Panagiota Fatourou}{FORTH ICS, Heraklion, Greece \and University of Crete, Department of Computer Science, Heraklion, Greece}{faturu@ics.forth.gr}{https://orcid.org/0000-0002-6265-6895}{}
\author{Roberto Palmieri}{Lehigh University, Bethlehem, PA, USA}{palmieri@lehigh.edu}{https://orcid.org/0000-0002-1530-4088}{}
\authorrunning{O. Grimes, A. Hassan, P. Fatourou and R. Palmieri} 
\keywords{Priority Queue, Concurrent Data Structures, Synchronization}
\begin{document}

\maketitle

\begin{abstract}
This paper presents PIPQ, a strict and linearizable concurrent priority queue whose design differs from existing solutions in literature because it focuses on enabling parallelism of insert operations as opposed to accelerating delete-min operations, as traditionally done. In a nutshell, PIPQ's structure includes two levels: the worker level and the leader level. The worker level provides per-thread data structures enabling fast and parallel insertions. The leader level contains the highest priority elements in the priority queue and can thus serve delete-min operations. Our evaluation, which includes an exploration of different data access patterns, operation mixes, runtime settings, and an integration into a graph-based application, shows that PIPQ outperforms competitors in a variety of cases, especially with insert-dominant workloads.
\end{abstract}

\section{Introduction}

Concurrent data structures are at the heart of modern applications and thus, their efficient and scalable implementation is of great importance. 
Priority queues, particularly, are widely used data structures with many applications in diverse domains~\cite{CFK+23,smq,PFP21-I,PFP21-II,PFP21-III,DBLP:conf/spaa/ZhangPJ24,DBLP:conf/scoop/BenaichoucheCDCMR96,DBLP:journals/csur/GuoYYLL23,10.1145/3472456.3472463}. Applications include graph and encoding algorithms, big data analysis, task scheduling and load balancing, event-driven simulation, query optimization, databases (e.g., in transaction management, deadlock handling, process handling), 
network routing, and more.

A priority queue supports two operations: \textit{insert}, which inserts a key-value pair in the priority queue, and \textit{delete-min}, which deletes and returns the element with the smallest key (indicating the \textit{highest priority}) in the queue.

The last decade has been characterized by innovations in the design of priority queues focusing mostly on improving the performance of delete-min~\cite{linden,spray,k-lsm,multi-queues,smq,practical-scalable}, which is unquestionably a sequential bottleneck for most designs since these operations all target the same highest priority element. These innovations focus mostly on relaxing the semantics of the priority queue itself to allow for more parallelism when serving the delete-min operation. In relaxed designs, the rank of the removed element is within some defined bound of the highest priority element, though it may not be the highest priority element itself. However, relaxed semantics are not sufficient for many applications in which the single highest priority element must be removed each time. Some examples that may require strict semantics include risk management systems to ensure critical tasks are addressed first, real-time systems such as medical systems and devices, which must optimally treat patients, and in security applications in which high-priority messages or threats must be processed ahead of all other requests.

Despite the highlighted importance of such strict behavior of priority queues in many applications, the focus on relaxed priority queues in the last decade left a clear gap in literature and made the performance of state-of-the-art strict priority queues significantly inferior to their relaxed counterparts.
However, if relaxation is not a viable option, the performance of applications needing strict semantics will always be limited by the underlying priority queue.
In this paper, we fill this gap as follows.
Optimizing on delete-min in the strict setting is difficult due to its sequential nature.
Thus, while still aiming to alleviate the performance impact of the pessimistic nature of delete-min, we shift our focus to maximizing the performance and parallelism of insert operations. In fact, achieving a significant speedup for insert operations can offset the overhead of the delete-min operations. Beyond that, there is a class of applications for which the performance of insert operations is important. Examples include graph analytics~\cite{10508807}, data series analysis~\cite{CFK+23,PFP21-I,PFP21-II,PFP21-III}, and computation over streamed data~\cite{10.1145/3165266}.

The core intuition is to improve the performance of insert operations by using a set of data structures, one per thread, to avoid synchronization with other threads when the inserted element is not among those with the highest priority.
When this condition is met, which we claim to be the common case for many applications~\cite{10508807,CFK+23,PFP21-I,PFP21-II,PFP21-III,10.1145/3165266}, insertions can effectively be embarrassingly parallel. This ensures perfect scalability, Non-Uniform Memory Access (NUMA)-awareness, and near-optimal performance.

We developed this intuition into PIPQ\footnote{Pronounced ``Pip Q'', standing for Parallel Inserts Priority Queue.}, our linearizable, concurrent priority queue whose main strength is its high-performance insert operation.
PIPQ's design encompasses a hierarchical structure:
\begin{itemize}
    \item In the first level, called the \textit{worker level},
we deploy one min-heap~\cite{multiproc-prog-book} per thread. Insertions are performed to the thread's min-heap unless the thread has observed the highest priority element to be inserted.

    \item The second level, called the \textit{leader level}, collects the highest-priority elements in PIPQ, hence creating a set of candidate elements to be returned by delete-min.
\end{itemize}

Delete operations remove elements from the leader level, which contains the most minimal element(s)
inserted by each thread, and thus the most minimal element(s) in the entire structure.
To avoid degrading the performance of delete-min due to this hierarchical structure, we employ the combining technique~\cite{combining,cc-synch} to minimize contention between concurrent removals.

We develop PIPQ in C++\footnote{The source code for PIPQ is publicly available here: https://github.com/sss-lehigh/pipq} and compare its performance against two state-of-the-art strict linearizable priority queues, namely the Lind\'{e}n-Jonsson priority queue~\cite{linden}, and the Lotan-Shavit priority queue~\cite{lotan-shavit}. Experiments are conducted on an Intel server equipped with 96 cores over four processors.
Applications include a microbenchmark, in which threads execute a mix of workloads; benchmarks that mimic the access patterns generated by recent applications of priority queues, in which inserting is a dominating factor; and the widely used Single-Source Shortest Path (SSSP) algorithm. 
Across experiments, our evaluation reveals that the proposed algorithm exhibits great performance benefits compared to competitors when insertions dominate the workload. Importantly, PIPQ does not sacrifice performance in workloads exhibiting an equal split of both operations, nor in delete-min heavy workloads. 


\section{Related Work}
\label{sec:related-work}

Lotan and Shavit~\cite{lotan-shavit} were the first to use a Skiplist~\cite{pugh} as a concurrent priority queue, a design which has since largely become the standard for implementing concurrent priority queues. Notably, the Lotan and Shavit~\cite{lotan-shavit} concurrent priority queue is not linearizable, instead achieving the weaker condition of being quiescently consistent~\cite{multiproc-prog-book}.
Sundell and Tsigas~\cite{sundell-tsigas} created a lock-free, linearizable concurrent priority queue based on a skip-list.
Lind\'{e}n and Jonsson then presented their strict, linearizable priority queue algorithm~\cite{linden}; their enhancement of Lotan and Shavit's priority queue to the delete-min operation delays physical deletion of elements to improve performance. The evaluation provided by Lind\'{e}n and Jonsson shows that their priority queue consistently outperforms that of Sundell and Tsigas for various types of experiments.
Braginsky et al.~\cite{CBPQ} similarly use a skiplist-based design, modifying data-nodes to be lock-free chunks and leveraging elimination to speed up concurrent delete-mins, and inserts, which insert to the first chunk. While PIPQ is optimized to increase the parallelism of insert operations, the work in~\cite{CBPQ} still uses a non-thread-local structure but reduces synchronization through coalescing elements.
Zhang and Dechev~\cite{LF-multidem} implement a lock-free priority queue based on a multi-dimensional list. Their implementation is only quiescently consistent.
Similar to these solutions, PIPQ aims to create a strict linearizable priority queue, but with a new design which does not include use of a skip-list and aims to optimize insertions, all while achieving linearizability.

More recent work on priority queues has focused on relaxed solutions.
In these designs, the element removed by a delete-min may not be the highest priority element itself, but is within some defined bound of it.
Spraylist~\cite{spray} is a well-known relaxed priority queue, and many other relaxed solutions have been proposed~\cite{k-lsm,practical-scalable,SW17,multi-queues,smq}. Unlike these relaxed solutions that optimize for the delete-min operation, PIPQ provides a strict priority queue that optimizes for insert.

A recent related work of interest is the Stealing Multi-Queue (SMQ) priority scheduler by Postnikova et al.~\cite{smq}, which innovates on the prior Multi-Queues work~\cite{multi-queues}. SMQ uses a design similar to that of the worker level of PIPQ, such that it maintains per-thread heaps. As a result, insert operations are optimized, and, in addition to that, it relaxes the semantics of the delete-min operation with the use of so-called stealing buffers. Despite their similarities, removing the relaxation of SMQ would require additional synchronization, which is effectively the goal accomplished by PIPQ.


\section{Overview of PIPQ }
\label{sec:alg}

PIPQ is a linearizable priority queue that provides the traditional priority queue API: \texttt{Insert}, which inserts some key-value pair, and \texttt{DeleteMin}, which removes the element with the smallest (i.e., highest priority) key in the structure. Our priority queue supports duplicate keys and/or values.
Figure~\ref{fig:pq_model} illustrates PIPQ's architecture, including an example instantiation relevant to a specific thread (red / bolded).

\begin{figure}[h]
  \includegraphics[width=\linewidth]{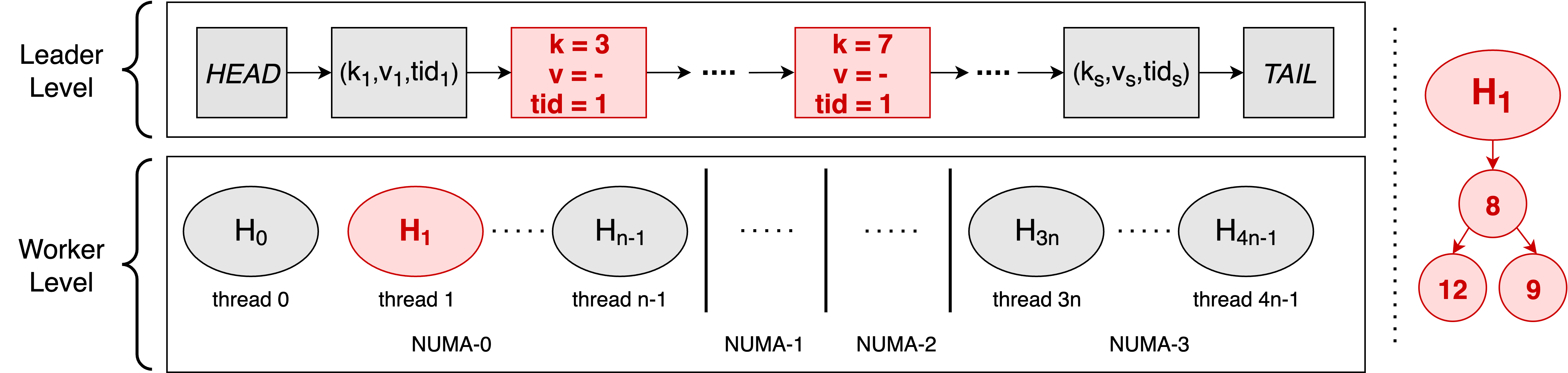}
  \caption{PIPQ on a 4-NUMA node testbed. Each oval represents a per-thread worker-level min-heap; ``n'' represents the number of threads per NUMA-node. Squares represent nodes of the leader-level linked list; ``s'' represents the size of the leader-level linked list. 
  }
  \label{fig:pq_model}
\end{figure}

To create highly parallel insertions, PIPQ maintains a two-level hierarchy of supporting data structures. The two levels include the worker level and the leader level.
The worker level provides fast, parallel insertions with minimal synchronization because each thread $p$ inserts to its own worker-level heap, $H_p$.
Each $H_p$ is implemented as a conventional min-heap protected by a single global lock. 
The simplicity of this level's design is justified by the fact that most of the time no other threads need access to $H_p$, as subsequently discussed.
At the leader level, the algorithm maintains a sorted linked list $L$ of keys, which stores the highest priority elements in PIPQ, and is meant to serve delete-min requests.

Elements may be inserted to the worker or leader level and may move between levels in order to maintain the following invariant:
\textit{for each thread $p$, every element in $L$ tagged with $p$ (that is, $p$ inserted the element into the global data structure) has greater or equal priority than the highest priority element in the worker-level heap of $p$, $H_p$.}
This ensures that when we remove the highest priority element from the leader level, it is, in fact, the highest priority element in the priority queue, allowing us to achieve linearizability without relaxing operation semantics. 
The items in red in Figure~\ref{fig:pq_model} provide an example instantiation of PIPQ relative to a specific thread, denoted thread 1.
The highest priority element in $H_1$ has priority 8. The corresponding nodes inserted by thread 1 in $L$ have a higher priority (smaller value of the key, \texttt{k}) than 8, and are ordered in $L$ by \texttt{k}.

A thread $p$ invoking insert on PIPQ will insert to the leader level only when the new element is of higher priority than the highest priority element in $H_p$ (or when $H_p$ is empty). Otherwise, the element will be inserted into $H_p$.
The leader level is intentionally kept small (explained below), so we expect most insertions to occur at the worker level.

The delete-min operation leverages the combining technique~\cite{combining,cc-synch} as used in several prior works~\cite{nvm-fifo-queue,combining-4-persistence,comb-numa-locks,comb-core}.
Electing one thread to perform a sequence of delete-min operations, which are inherently sequential, avoids the costly contention of many threads attempting to access and remove the highest priority element in the leader-level linked list.
We call this thread the \textit{coordinator}.
A thread $p$ that performs a delete-min operation on PIPQ first announces its operation and then attempts to become the \textit{leader} of its Non-Uniform Memory Access (NUMA) node. Leaders of each NUMA node compete to become the coordinator; thus, there are up to $n$ leaders competing for the coordinator role at any given time, where $n$ is the number of NUMA nodes in the machine.
The coordinator is responsible for serving the delete-min operations of threads that execute on its NUMA node.
Thus, either $p$ will become the coordinator and complete its own request along with the other requests of its NUMA node, or its request will be completed by some other thread of its NUMA node that won the race to become the coordinator.

Depending on the sequence of insert and delete-min operations, the leader-level linked list $L$ may grow and shrink arbitrarily. However, the size of $L$ is an important factor in the overall performance of PIPQ. In fact, having many elements in $L$ hinders the performance of insertions for two reasons. First, traversals on $L$ involve many elements. Second, a large size of $L$ increases the likelihood that an insertion has to act at the leader level as opposed to at the worker level, since the range of the priority of elements in $L$ increases with its size.
For these reasons, we limit the size of $L$ by requiring insertions to move elements from $L$ down to its worker-level heap if the number of elements that the thread has inserted in $L$ exceeds a certain threshold, which we call \texttt{CNTR\_MAX}.

On the other hand (and less intuitively), having too few elements in $L$ hinders the performance of delete-min operations.
To successfully linearize delete-min operations while ensuring the properties that $L$ needs to satisfy, each thread has to have at least two of the elements it has inserted in $L$; the justification for requiring at least two elements is detailed in Section~\ref{sec:leader-level-ll}. If this requirement is not met, the coordinator must promote an element from the thread's worker-level heap.
Having few elements in $L$ means that the coordinator would need to perform such a promotion more frequently. We address this issue by defining a second threshold, \texttt{CNTR\_MIN} (with a value less than that of \texttt{CNTR\_MAX}), and use it to introduce a helping mechanism.

The idea behind helping is that a thread can do work that does not directly impact the operation it is performing but can ``help'' other operations complete their work in a more efficient manner.
Since threads performing the delete-min operation must wait on the coordinator, we initiate helping during this waiting period to reduce the work required by the coordinator. Specifically, if the number of elements inserted by some thread $p$ in $L$ is less than \texttt{CNTR\_MIN}, $p$ will promote an element from $H_p$ up to the leader level instead of waiting for the coordinator to eventually do so.


We maintain an array of atomic counters containing one slot per thread such that each thread $p$'s slot in the counters array stores the number of elements currently in $L$ that were inserted by $p$ to the global structure. We refer to $p$'s counter value as $L\_count_p$.
The use of helping and the gap between \texttt{CNTR\_MIN} and \texttt{CNTR\_MAX} is meant to achieve the following two goals:
(1) it is less costly for a waiting thread $p$ to help by moving elements from $H_p$ to $L$, compared to the coordinator needing to do so; thus, we reduce the frequency with which the coordinator must do additional work, and 
(2) by helping only when $L\_count_p$ is less than the minimum threshold (which is, in turn, less than the maximum threshold), we reduce the frequency with which an insertion has to move an element down from the leader level to the worker level.

Before discussing the details of PIPQ, it is important to highlight that the leader-level linked list, $L$, differs from traditional linked lists in that it requires tracking which thread inserts each element in order to move them to the corresponding worker-level min-heap when necessary. To do so, $L$ stores tuples \texttt{(key,val,tid\textsubscript{p})}, where \texttt{tid\textsubscript{p}} represents the unique identifier of the thread, $p$, that first inserted this key-value pair in the priority queue. $L$ is sorted based on the key and supports the following operations:
\textit{a)} \Call{L-Insert}{key, val, tid\textsubscript{p}}, which inserts the tuple \texttt{(key,val,tid\textsubscript{p})} in $L$; 
\textit{b)} \Call{L-DeleteMin}{}, which deletes and returns the element with the smallest (highest priority) key in $L$; and
\textit{c)} \Call{L-DeleteMaxP}{lead\_largest, tid\textsubscript{p}}, which removes the element with the largest (least priority) \texttt{key} that was inserted to PIPQ by the thread with identifier \texttt{tid\textsubscript{p}} (to be subsequently inserted into $H_p$).


\section{Operational Workflow}
\label{sec:apis}

Figure~\ref{fig:pq_model_combined} contains flow diagrams for the \texttt{DeleteMin} and \texttt{Insert} APIs of PIPQ. Recall the invariant maintained between levels, such that the priority of each element in $L$ must be greater than or equal to the priority of the highest priority element from their respective worker level heaps. Deciding on which level to insert, as well as the movement of elements between levels, aims to maintain this invariant. For completeness, the entire pseudocode of PIPQ is provided in Section~\ref{sec:ds-design}.

\textbf{Insertion Flow.}
We first describe how an insert operation works in PIPQ, taking into consideration the different levels to which a new element might be inserted. The following description corresponds to the diagram on the left in Figure~\ref{fig:pq_model_combined}.
The steps mentioned below refer to the numbered blocks in the diagram.

Let us consider an insert operation performed by thread $p$, identified by \texttt{tid\textsubscript{p}}, that inserts key \texttt{k} and value \texttt{v}. Per Step 1, the thread begins by locking its local, worker-level min-heap, $H_p$. In Step 2, it checks the minimum (i.e., highest priority) element in $H_p$'s key, \texttt{k\textsubscript{p}}, and compares it to \texttt{k}. This comparison determines whether the insertion will follow the so called \textit{fast path} of the algorithm or one of two slow paths, named the \textit{slower} and \textit{slowest} path. If \texttt{k} is greater than or equal to \texttt{k\textsubscript{p}} (i.e., has a lesser priority), then the fast path is taken as the insertion is thus able to be performed in a completely parallel manner. Specifically, $p$ inserts the key-value pair, \texttt{k} and \texttt{v}, into $H_p$ (Step 3), and then unlocks $H_p$ and returns (Step 11).

On the other hand, if \texttt{k} is found to be less than the minimum of $H_p$ at Step 2, then it is necessary to compare \texttt{k} to the priority of elements in the leader level in order to maintain proper ordering of elements. 
The comparisons in Steps 4 and 7 determine if the element should be inserted to the leader-level, in which case the thread follows
either the slower path (orange, middle dashed box) or the slowest path (red, right-most dashed box), or if the element should be inserted to the worker level via the fast path (following Step 7 to 3).

Recall that in order to maximize the number of insertions to a thread's local heap without affecting the efficiency of insertions to $L$, we set a maximum number of elements for $L$, \texttt{CNTR\_MAX}.
Assuming Step 4 finds that $L\_count_p$ is not equal to \texttt{CNTR\_MAX} (and is thus smaller than it), we follow the slower path.
Thus, in Step 5, we insert to $L$ using \Call{L-Insert}{} and then perform the atomic fetch-and-add operation on $L\_count_p$ (Step 6) to increment it. Finally, we unlock $H_p$ and return in Step 11.

On the contrary, if $L\_count_p$ is determined to be equal to \texttt{CNTR\_MAX} back in Step 4, this indicates that if the new element is inserted to $L$, then an element must also be moved down to the worker-level. If \texttt{k} is the largest element in $L$ tagged with \texttt{tid\textsubscript{p}}, then it would be wasted work to insert the element into $L$ just to subsequently remove it to be inserted into $H_p$. To avoid this situation, we maintain a per-thread pointer, \texttt{lead\_largest\textsubscript{p}}, which points to the node containing the largest key in $L$ tagged with some \texttt{tid\textsubscript{p}}. Step 7 compares \texttt{k} to the priority of the node pointed to by \texttt{lead\_largest\textsubscript{p}} to determine if we can insert the element to the worker level. If \texttt{k} is in fact larger or equal in priority to the current largest element (Step 7), then $p$ inserts the key-value pair, \texttt{k} and \texttt{v}, into $H_p$ (Step 3), and then unlocks $H_p$ and returns (Step 11).

\begin{figure}[h]
  \includegraphics[width=\linewidth]{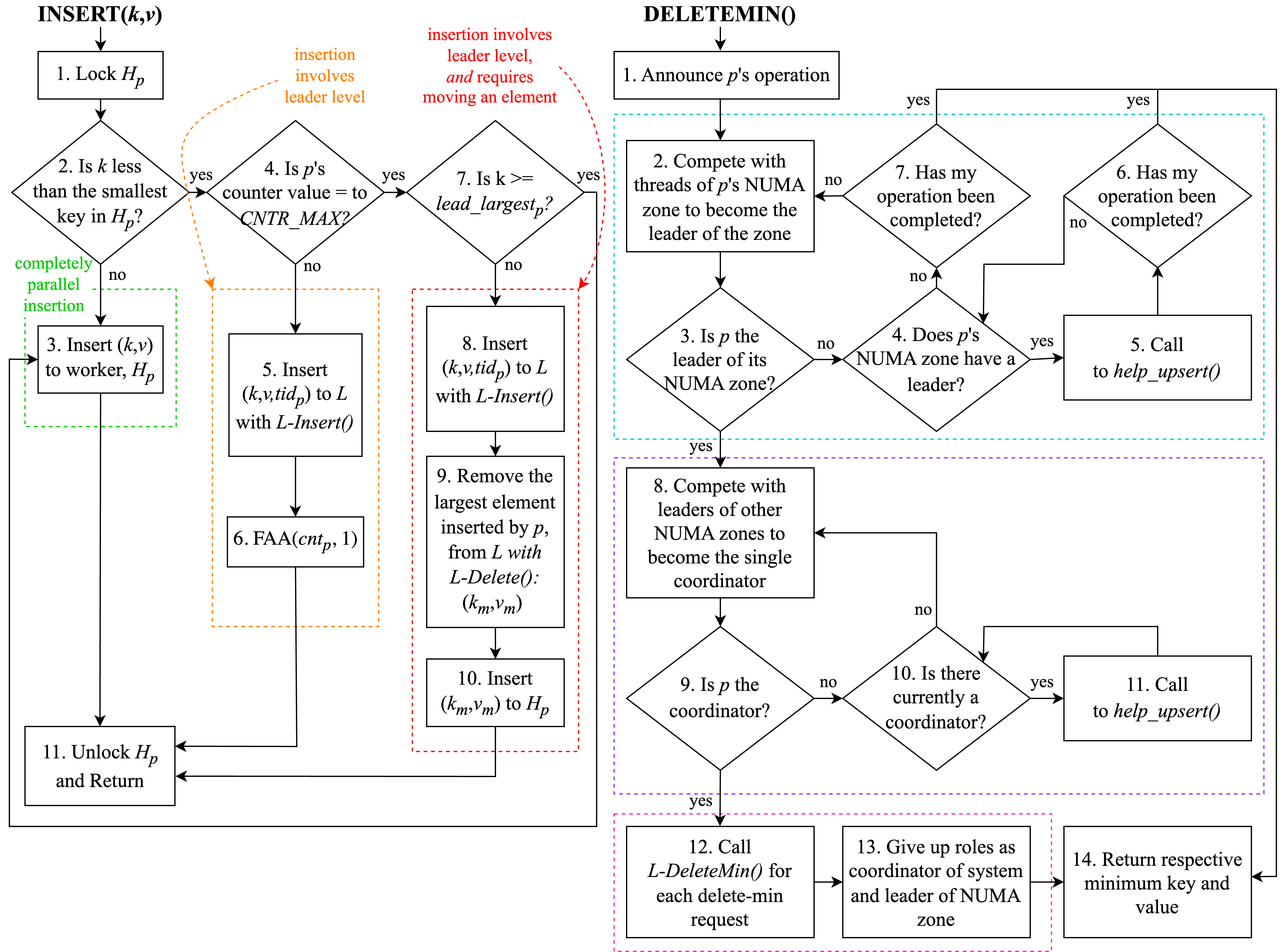}
  \caption{Flow diagrams for the Insert (left) and Delete-Min (right) operations. In each diagram, $p$ represents the thread performing the respective operation.}
  \label{fig:pq_model_combined}
\end{figure}

Otherwise, the new element must be inserted to $L$, and thus the slowest path is followed. In Step 8, \Call{L-Insert}{} is called to insert tuple (\texttt{k}, \texttt{v}, \texttt{tid\textsubscript{p}}) into $L$. Then in Step 9, \Call{L-DeleteMaxP}{} is called to remove the node pointed to by \texttt{lead\_largest\textsubscript{p}} from $L$. 
During the traversal of $L$, the algorithm tracks the relevant predecessor of \texttt{lead\_largest\textsubscript{p}} (i.e., the first node tagged with \texttt{tid\textsubscript{p}} that appears before the node pointed to by \texttt{lead\_largest\textsubscript{p}}) in order to reset the pointer. Thus, the node pointed to by \texttt{lead\_largest\textsubscript{p}} is removed from $L$, \texttt{lead\_largest\textsubscript{p}} is reset, and \Call{L-DeleteMaxP}{} returns the relevant key and value (\texttt{k\textsubscript{m}}, \texttt{v\textsubscript{m}}). Finally, (\texttt{k\textsubscript{m}}, \texttt{v\textsubscript{m}}) is inserted to $H_p$ in Step 10, and in Step 11 $H_p$ is unlocked and the operation completes.

\textbf{Delete-Min Flow.}
We now describe the flow of the delete-min operation. Steps mentioned refer to those in the right diagram of Figure~\ref{fig:pq_model_combined}.

The delete-min operation begins with thread $p$ announcing its operation in Step 1.
The blue (top-most) dashed box in the Figure shows the competition between threads of the same NUMA node to become the leader of their node. Per Steps 3 and 4, if $p$ fails to become the leader, it waits until either its operation is completed by the current coordinator (Step 6) or it eventually succeeds in becoming the leader.
While waiting in Steps 4-6, the operation calls a helper routine, \Call{Help-Upsert}{}, which checks the calling thread's counter value, $L\_count_p$, and compares it to \texttt{CNTR\_MIN}.
If the comparison finds that $L\_count_p$ is less than \texttt{CNTR\_MIN}, thread $p$ will move the highest priority element from $H_p$ to $L$, updating its counter value accordingly.

If a thread becomes the leader of its NUMA node (Step 8), then it competes to become the coordinator of the entire system, as reflected in the purple (middle) dashed box. If it does not succeed, it follows a nearly identical waiting routine as previously described, the only difference being that it does not check if its operation has been completed since the coordinator only completes operations carried by threads in its own NUMA node.

Once a thread becomes the coordinator (Step 12), it completes the delete-min operations in a combining manner~\cite{combining,cc-synch}. This is done by calling \Call{L-DeleteMin}{} for each entry in the NUMA-local announce array.
After each \Call{L-DeleteMin}{} call, it atomically decrements $L\_count_p$.
Recall that in order to maintain correctness at the leader level, each thread $p$ must have at least two elements in the linked list associated with its identifier, \texttt{tid\textsubscript{p}}.
Thus, after decrementing, if the new value of the counter is less than two, the coordinator must move an element from the relevant worker-level heap up to the leader-level. To do so, the coordinator first gets the lock on the relevant worker-level heap, $H_p$ (unless the coordinator happens to be the thread associated with identifier \texttt{tid\textsubscript{p}}). It next removes the highest priority from $H_p$, and then inserts it (augmented with \texttt{tid\textsubscript{p}}) into $L$, and finally unlocks $H_p$ if necessary (i.e., if the lock was acquired).
Before proceeding to the next active request, the coordinator informs the originating thread that its operation has been completed via the announce array.
Once the coordinator has completed all the active operations, it gives up its role as the coordinator and then as the leader of its NUMA node (Step 13) and finally returns its operation (Step 14) to the application.



\section{Internal Data Structures Design}
\label{sec:ds-design}

\subsection{Structures and Definitions}

Algorithms~\ref{alg:heap-ll-impl} and \ref{alg:pipq-def} contain definitions, variables, and structures for the worker-level min-heap, the leader-level linked list, and PIPQ as a whole.
Note that not all details are provided, but what is present should be sufficient to properly understand the provided pseudocode.

\begin{algorithm}[h]
\scriptsize
\caption{Worker-level min-heap and leader-level linked list definitions}\label{alg:heap-ll-impl}
\begin{algorithmic}[1]

    \LComment{Worker-level node definition}
    \State \Struct{HeapNode}
    \State \hspace{1em} $key$ : integer (the priority)
    \State \hspace{1em} $val$ : val\_t

    \Statex
    
    \LComment{Worker-level min-heap definition}
    \State \Struct{Heap}
    \State \hspace{1em} $list$ : pointer to HeapNode
    \State \hspace{1em} $size$ : integer
    \State \hspace{1em} $lock$ : pointer to integer

    \Statex

    \LComment{Leader-level node definition}
    \State \Struct{LeaderListNode}
    \State \hspace{1em} $key$ : integer (the priority)
    \State \hspace{1em} $val$ : val\_t
    \State \hspace{1em} $tid$ : integer (thread's identifier)
    \State \hspace{1em} $next$ : pointer to LeaderListNode

    \Statex

     \LComment{Leader-level linked list definition}
    \State \Struct{LeaderList}
    \State \hspace{1em} $head$ : pointer to LeaderListNode (sentinel)
    \State \hspace{1em} $tail$ : pointer to LeaderListNode (sentinel)
    \State \hspace{1em} $max\_offset$ : int
\end{algorithmic}
\end{algorithm}

\begin{algorithm}[h]
\scriptsize
\caption{PIPQ definition}\label{alg:pipq-def}
\begin{algorithmic}[1]
    
\State \Struct{PIPQ}
    
    \State \hspace{1em} $worker\_heaps[THREADS]$ : pointer to Heap
    \State \hspace{1em} $leader\_list$ : LeaderList
    \State \:
    \hspace{1em} \LComment{Locks protecting the Leaders (one per NUMA node) and the Coordinator}
    \State \hspace{1em} $compete\_coord\_locks[NUMA\_NODES]$ : pointer to integer
    \State \hspace{1em} $coord\_lock$ : pointer to integer
    
    \State \:
    
    \hspace{1em} \LComment{Additional metadata}
    \State \hspace{1em} \Struct{AnnounceStruct}
    \State \hspace{2em} $status$ : boolean
    \State \hspace{2em} $key$ : integer (priority)
    \State \hspace{2em} $val$ : val\_t
    \State \:
    \State \hspace{1em} $announce[NUMA\_NODES]$ : pointer to (array of) AnnounceStruct \Comment{For DeleteMin combining}
    \State \hspace{1em} $lead\_largest\_ptrs[THREADS]$ : pointer to LeaderListNode \Comment{Least priority element in $L$}
    \State \hspace{1em} $leader\_counters[THREADS]$ : pointer to integer \Comment{Number of elements in $L$}
    
    \State \:
    
    \hspace{1em} \LComment{The following variables (prefixed with ``$t\_$'') are all thread-local}
    \State \hspace{1em} $t\_local\_heap$ : pointer to Heap \Comment{Thread's local worker-level heap}
    \State \hspace{1em} $t\_leader\_counter$ : pointer to integer \Comment{Thread's pointer to $L\_count_p$}
    \State \hspace{1em} $t\_lead\_largest$ : pointer to LeaderListNode \Comment{Thread's least priority element in L}
    \State \hspace{1em} $t\_tid$ : constant integer \Comment{Thread's identifier, as stored in $LeaderListNode$s}
    \State \hspace{1em} $t\_compete\_coord\_lock$ : pointer to integer \Comment{Thread's NUMA-local leader lock}
    \State \hspace{1em} $t\_num\_workers$ : constant integer \Comment{Number of workers in NUMA node}
    \State \hspace{1em} $t\_announce$ : array of AnnounceStruct \Comment{NUMA-local, for DeleteMin combining}
    
    \State \:
    \hspace{1em} \LComment{User-defined config passed to PIPQ's constructor}
    \State \hspace{1em} $HLS, THREADS, CNTR\_MIN, CNTR\_MAX, MAX\_OFFSET$ : constant integer
    \State \:
    \State \hspace{1em} PIPQ($heap\_size$, $nthreads$, $cntr\_min$, $cntr\_max$, $max\_offset$) : 
    \State \hspace{2em} $HLS$($heap\_size$)
    \State \hspace{2em} $THREADS$($nthreads$)
    \State \hspace{2em} $CNTR\_MIN$($cntr\_min$)
    \State \hspace{2em} $CNTR\_MAX$($cntr\_max$)
    \State \hspace{2em} $MAX\_OFFSET$($max\_offset$) {}
\end{algorithmic}
\end{algorithm}


\subsection{Leader-Level Linked List}
\label{sec:leader-level-ll}

The implementation of the leader-level linked list structure, $L$, combines characteristics from two state-of-the-art lock-free algorithms, namely
Harris' sorted singly-linked list algorithm~\cite{harris} and the Lind\'{e}n-Jonsson priority queue algorithm~\cite{linden}, which implements the priority queue using a skip list; we borrow ideas from the level-zero list implementation.

The Harris work introduces the idea of marking nodes for removal before performing physical deletion (i.e., unlinking the node) to implement a lock-free linked list.
The actual technique used for marking nodes relies on stealing~\cite{DBLP:journals/pacmpl/BaudonRG23} and manipulating the last bit of the pointer\footnote{Memory is properly aligned to enable the bit-stealing technique.} pointing to the next node. The physical deletion can then be performed.

An important difference between the Harris algorithm and Lind\'{e}n-Jonsson's level-zero algorithm lies in their approach to logically marking nodes for removal before performing physical deletion. In fact, in order for the Lind\'{e}n-Jonsson algorithm to mark a node as logically removed, the preceding node's next pointer is marked instead of the next pointer of the node itself. The algorithm must mark the previous node for deletion in order to maintain a prefix of logically deleted nodes; having this prefix is an optimization that allows for delaying physical removal, which improves performance.

In our leader-level linked list implementation, we use the same optimization as the Lind\'{e}n-Jonsson algorithm in the \Call{L-DeleteMin}{} operation. However, we cannot use this marking approach for the \Call{L-DeleteMaxP}{} operation because \Call{L-DeleteMaxP}{} removes elements from the middle of the list (as opposed to the front, as in \Call{L-DeleteMin}{}).
Instead, the \Call{L-DeleteMaxP}{} operation follows the approach of Harris' algorithm and marks the next pointer of the node to be deleted.
In fact, if we were to mark the previous node's next pointer for removal, it may cause a concurrent insertion to be falsely inserted. Specifically, say a concurrent insertion inserts its new node, $n$, directly after some node $r$. It is then possible that $n$ is inserted before the physical removal of $r$, and upon the physical removal of $r$, $n$ would also be removed. This situation is prevented by marking $r$'s next pointer since this would prevent the concurrent insertion after $r$.

These two different marking approaches present a correctness issue: an \Call{L-DeleteMin}{} operation and an \Call{L-DeleteMaxP}{} operation cannot simultaneously mark the same node. Note that 
it is only possible for one \Call{L-DeleteMin}{} and one \Call{L-DeleteMaxP}{} operation on $L$ to interfere at any single time, since at each moment there is only one active thread performing \Call{L-DeleteMin}{} (i.e., the coordinator), and only one thread performing \Call{L-DeleteMaxP}{} on a specific element, since a thread $p$ only ever moves an element tagged with \texttt{tid\textsubscript{p}}.
To remedy this issue and guarantee that the \Call{L-DeleteMaxP}{} and \Call{L-DeleteMin}{} operations never interfere, we maintain at least two elements in $L$ from each thread.
For this reason, we require that \texttt{CNTR\_MIN} (and thus, \texttt{CNTR\_MAX} as well) must be at least two.

Note that we explored other designs for the leader level and found our current implementation to be the most effective. Despite the linear complexity of linked lists, our use of \texttt{CNTR\_MIN} and \texttt{CNTR\_MIN}, which reduces the likelihood that either operation must follow its worst-case path, combined with maintaining a small size of the list, makes the complexity not a practical concern. More details are provided in Section~\ref{sec:complexity}.

\subsubsection*{Implementation Details}

The leader-level linked list has two immutable sentinel nodes pointed to by \texttt{head} and \texttt{tail}, that contain $-\infty$ and $+\infty$, respectively.
The actual elements of the list are stored in nodes between the \texttt{head} and \texttt{tail} nodes. 
Each node contains a next pointer and the three fields comprising the tuple \texttt{key}, \texttt{val}, and \texttt{tid\textsubscript{p}}.

The leader-level linked list ($L$) supports the following APIs: \Call{L-Insert}{}, \Call{L-DeleteMaxP}{}, and \Call{L-DeleteMin}{}.
Internally, $L$ additionally includes helper functions \Call{search}{}, \Call{searchDelete}{}, and \Call{searchPhysDel}{}. These helper functions are all variations of the search function provided by Harris' linked list~\cite{harris}.
The implementation is lock-free, like that of Harris' linked list~\cite{harris} and Lind\'{e}n and Jonsson's priority queue~\cite{linden}.

\textbf{Helper Functions \& Marking.} Each of the three helper functions, \Call{search}{}, \Call{searchDelete}{}, and \Call{searchPhysDel}{}, must traverse the leader-level linked list by following pointers which may be marked as either \textit{logically deleted}, or \textit{moved}. We steal two bits from each node's next pointer to support the two marking techniques.
We denote the bit associated with logical deletion as the DELMIN bit, and the bit associated with moving an element as the MOVING bit. Recall that a node is considered logically deleted if the next pointer of its predecessor node's DELMIN bit is set, and moved if its own next pointer's MOVED bit is set.

The three search functions use the following methods to safely traverse the nodes and set or unset the two bits. Note that there are variants for each function, expressed between the ``\{\}'' braces, and the description of the return values for each function discusses them from left to right, respectively.
\begin{itemize}
    \item \textbf{\Call{Is\_\{ LogDel | Moving \}\_Ref}{ptr*}}: returns True/False depending on whether the LOGDEL bit or MOVING bit, respectively, is set in \texttt{ptr}.
    \item \textbf{\Call{Get\_\{ LogDel | Moving \}\_Ref}{ptr*}}: returns modified \texttt{ptr} with LOGDEL bit set or MOVING bit set, respectively.
    \item \textbf{\Call{Get\_\{ NotLogDel | Unmarked \}\_Ref}{ptr*}}: returns modified \texttt{ptr} \textit{without} the LOGDEL bit set, or neither bit set, respectively.
\end{itemize}

\begin{algorithm}[h]
\scriptsize
\caption{The \Call{L-Insert}{} method and its helping \Call{Search}{} function of the leader-level linked list.}\label{alg:l-insert}

\begin{algorithmic}[1]
\Function{L-Insert}{LeaderList $list$, LeaderListNode *$lead\_largest$, int $key$, val\_t $val$, int $tid$}
    \Repeat
        \State $l\_node, r\_node \gets \Call{search}{list, key}$
        \State $new\_node \gets \Call{NewNode}{key, val, tid, r\_node}$ \label{line:newNode}
        \If{\Call{CAS}{\&$l\_node$.next, $r\_node$, $new\_node$}} \label{line:ins_lin}
            \If{\textbf{not} *$lead\_largest$ \textbf{or} key > *$lead\_largest$.key} 
                \State *$lead\_largest$ = $new\_node$
            \EndIf
            \State \Return
        \EndIf
    \Until{True}
\EndFunction

\:

\Function{search}{LeaderList $list$, int $key$}
    \Repeat
        \LComment{1. Find l\_node and r\_node}
        \State \textcolor{red}{$prev\_log\_del \gets False$}
        \State $x \gets list$.head
        \State $x\_next \gets x$.next
        \Repeat \label{line:repeat_clause}
            \If{\textbf{not} $\Call{\textcolor{red}{is\_moving\_ref}}{x\_next}$}
                \State $l\_node \gets x$
                \State $l\_node\_next \gets \Call{\textcolor{red}{get\_notlogdel\_ref}}{x\_next}$
            \EndIf
            \State $x \gets \Call{get\_unmarked\_ref}{x\_next}$
            \If{$x = list$.tail}
                \State \textbf{break}
            \EndIf
            \State \textcolor{red}{$prev\_log\_del = \Call{is\_logdel\_ref}{x\_next}$}
            \State $x\_next \gets x$.next
        \Until{$x$.key $\geq key$ \textbf{and} \label{line:until_clause}
         \Statex \: \textbf{not} $\Call{\textcolor{red}{is\_moving\_ref}}{x\_next}$ \textbf{and}
         \Statex \: \textcolor{red}{\textbf{not} $prev\_log\_del$}}
        \State $r\_node \gets x$ \label{line:set_right}
        \State \:
        \LComment{2. Check if l\_node and r\_node are adjacent}
        \If{$l\_node\_next = r\_node$} \label{line:search_check}
            \If{\Call{\textcolor{red}{is\_logdel\_ref}}{$l\_node$.next} \textbf{or}
             \Statex \: ($r\_node$.next \textbf{and} \Call{\textcolor{red}{is\_moving\_ref}}{$r\_node$.next})}
                \State \textbf{continue}
            \EndIf
            \State \Return [$l\_node$, $r\_node$]
        \EndIf
        \State \:
        \LComment{3. Remove one or more "moving" nodes}
        \If{\Call{CAS}{\&$l\_node$.next, $l\_node\_next$, $r\_node$}} \label{line:search_cas}
            \If{\Call{\textcolor{red}{is\_logdel\_ref}}{$l\_node$.next} \textbf{or} \label{line:search_check_2}
             \Statex \: ($r\_node$.next \textbf{and} \Call{\textcolor{red}{is\_moving\_ref}}{$r\_node$.next})}
            \State \textbf{continue}
            \EndIf
            \State \Return [$l\_node$, $r\_node$]
        \EndIf
    \Until{True}
\EndFunction

\end{algorithmic}
\end{algorithm}

\textbf{L-Insert.}
The \Call{L-Insert}{} operation (Algorithm~\ref{alg:l-insert}) begins by calling helper function \Call{search}{}, which locates the two nodes to insert between, namely \texttt{l\_node} and \texttt{r\_node}. The search function is very similar to that of Harris'; the differences in our implementation (red text) account for the different types of marking used by PIPQ.

In lines~\ref{line:repeat_clause}-\ref{line:until_clause},
\Call{search}{} must first pass by the prefix of logically deleted nodes, and then when finding \texttt{l\_node} and \texttt{r\_node}, account for the fact that a concurrent operation may have marked a node between the left and right nodes as moving, but not yet performed physical deletion (see the check in Line~\ref{line:search_check}). If this is the case, a compare-and-swap operation in Line~\ref{line:search_cas} of \Call{search}{} is performed to carry out the physical deletion.
Finally, \texttt{l\_node} and \texttt{r\_node} are returned to \Call{L-Insert}{}.

\Call{L-Insert}{} then allocates \texttt{new\_node} by calling \Call{NewNode}{} (Line~\ref{line:newNode}) which sets fields \texttt{key}, \texttt{val}, \texttt{tid\textsubscript{p}}, and \texttt{next}. Finally, a CAS is performed in Line~\ref{line:ins_lin} of \Call{L-Insert}{} to swing \texttt{l\_node}'s next field to point to \texttt{new\_node}. Upon a successful CAS, the algorithm returns. Otherwise, the operation is attempted again.

\textbf{L-DeleteMaxP.} Recall that the \Call{L-DeleteMaxP}{} function removes the last element in $L$ tagged with the calling thread's identifier, to be inserted to the worker level.
\Call{L-DeleteMaxP}{} (Algorithm~\ref{alg:l-delmaxp}) begins by storing a copy of \texttt{lead\_largest} in pointer \texttt{start\_lead\_largest}, to be later used when removing this node from $L$.

The algorithm then enters a loop and calls helper method \Call{searchDelete}{} in Line~\ref{line:ins_move_lr}, which identifies and returns \texttt{l\_node} and \texttt{r\_node}, such that \texttt{r\_node} is the node to be removed from $L$ to move down to the worker-level, and \texttt{l\_node} is its direct predecessor.

\Call{searchDelete}{} begins by traversing $L$ until it encounters \texttt{start\_lead\_largest} (i.e., the last node in $L$ tagged with \texttt{tid\textsubscript{p}}, the calling thread $p$'s identifier). Note that in \Call{L-DeleteMaxP}{} we must store pointer \texttt{start\_lead\_largest} (which initially points to the same node as \texttt{lead\_largest}), in case \texttt{lead\_largest} is reset in Line~\ref{line:set_last_ptr}, and then the subsequent CAS in Line~\ref{line:failed-cas} fails and the function must loop and try again.

\begin{algorithm}[H]
\scriptsize
\caption{The \Call{L-DeleteMaxP}{} method and supporting search functions of the leader-level linked list.}\label{alg:l-delmaxp}
\begin{algorithmic}[1]

\Function{L-DeleteMaxP}{LeaderList $list$, LeaderListNode *$lead\_largest$, int $tid$}
    \State $start\_lead\_largest \gets *lead\_largest$
    \Repeat
        \State $l\_ptrs \gets (lead\_largest, start\_lead\_largest)$
        \State $l\_node, r\_node \gets \Call{searchDelete}{list, l\_ptrs, tid}$ \label{line:ins_move_lr}
        \State $r\_node\_next \gets r\_node$.next
        
        \If{\textbf{not} $\Call{is\_marked\_ref}{r\_node\_next}$}
            \If{\Call{CAS}{$\&r\_node$.next, $r\_node\_next,$ \Call{get\_moving\_ref}{$r\_node\_next$}}} \label{line:ins_move_cas}
                \State \textbf{break}
            \EndIf
        \EndIf
    \Until{True}
        
    \If{\textbf{not} \Call{CAS}{\&$l\_node$.next, $r\_node$, $r\_node\_next$}} \label{line:ins_move_phys}
        \State \Call{searchPhysDel}{$list$, $r\_node$} \label{line:ins_move_phys2}
    \EndIf
    \State \Return [$r\_node$.key, $r\_node$.val]
\EndFunction

\:

\Function{searchDelete}{LeaderList $list$, LeaderListNode $start\_node$,
\Statex \: LeaderListNode $lead\_largests = (lead\_largest, start\_lead\_largest)$, int $tid$}
    \State $r\_node \gets start\_node$
    \State $x \gets start\_node$
    \State $x\_next \gets x$.next
    \Repeat
        \LComment{1. Find l\_node and r\_node}
        \While{True}
            \If{\textbf{not} $\Call{is\_moving\_ref}{x\_next}$}
                \State $cur\_l\_node \gets x$
                \State $cur\_l\_node\_next \gets \Call{get\_notlogdel\_ref}{x\_next}$
            \EndIf
            \State $x \gets \Call{get\_unmarked\_ref}{x\_next}$
            \If{$x = list$.tail}
                \State \textbf{break}
            \EndIf
            \State $x\_next \gets x$.next
            \If{\textbf{not} $\Call{is\_moving\_ref}{x\_next}$ \textbf{and} $x$.tid = $tid$}
                \State $new\_lead\_largest \gets r\_node$
                \State $l\_node = cur\_l\_node$
                \State $l\_node\_next = cur\_l\_node\_next$
                \State $r\_node = x$
                \If{$x = start\_lead\_largest$}
                    \State \textbf{break} \label{line:break-lead-larg}
                \EndIf
            \EndIf
        \EndWhile
        \State $*lead\_largest \gets new\_lead\_largest$ \label{line:set_last_ptr}
        \State \:
        \LComment{2. Check if l\_node and r\_node are adjacent}
        \If{$l\_node\_next = r\_node$} \label{line:check-phys-del-1}
            \State \Return \textit{[l\_node, r\_node]}
        \EndIf
        \State \:
        \LComment{3. Remove one or more "moving" nodes}
        \If{\Call{CAS}{\&$l\_node$.next, $l\_node\_next$, $r\_node$}} \label{line:failed-cas}
            \State \Return [$l\_node$, $r\_node$]
        \EndIf \label{line:check-phys-del-2}
        \State $x \gets list$.head
        \State $x\_next \gets x$.next
        \State $r\_node \gets new\_lead\_largest$
    \Until{1}
\EndFunction

\:

\Function{searchPhysDel}{LeaderList $list$, LeaderListNode $search\_node$}
    \Repeat
        \LComment{1. Find l\_node and r\_node}
        \State $prev\_log\_del \gets$ False
        \State $found \gets$ False
        \State $x \gets list$.head
        \State $x\_next \gets x$.next
        \Repeat
            \If{\textbf{not} $\Call{is\_moving\_ref}{x\_next}$}
                \State $l\_node \gets x$
                \State $l\_node\_next \gets \Call{get\_notlogdel\_ref}{x\_next}$
            \EndIf
            \State $x \gets \Call{get\_unmarked\_ref}{x\_next}$
            \If{$x = list$.tail} \textbf{break}
            \EndIf
            \If{$x = search\_node$} $found$ = True
            \EndIf
            \State $prev\_log\_del = \Call{is\_logdel\_ref}{x\_next}$
            \State $x\_next \gets x$.next
        \Until{$found$ = True \textbf{and} \textbf{not} $\Call{is\_moving\_ref}{x\_next}$ \textbf{and} \textbf{not} $prev\_log\_del$}
        \If{$found$ = False}
            \LComment{$search\_node$ has been removed by another thread}
            \State \Return
        \EndIf
        \State $r\_node \gets x$
        \State \:
        \LComment{2. Perform CAS to physically unlink $r\_node$ (i.e., $search\_node$)}
        \If{\Call{CAS}{\&$l\_node$.next, $l\_node\_next$, $r\_node$}}
            \State \Return
        \EndIf
    \Until{True}
\EndFunction

\end{algorithmic}
\end{algorithm}

During traversal, \Call{searchDelete}{} tracks the nearest predecessor of the node pointed to by \texttt{start\_lead\_largest} that is also tagged with \texttt{tid\textsubscript{p}} in variable \texttt{new\_lead\_largest}, in order to reset \texttt{lead\_largest}.
Once the traversal is complete (i.e., \texttt{start\_lead\_largest} has been found), \texttt{lead\_largest} is reset to \texttt{new\_lead\_largest} (Line~\ref{line:set_last_ptr} of \Call{searchDelete}{}). Note that access to \texttt{lead\_largest} is protected by the underlying lock acquired on the associated thread's heap.
As in \Call{search}{}, lines~\ref{line:check-phys-del-1}-\ref{line:check-phys-del-2} of the algorithm ensure that there are no marked nodes between \texttt{l\_node} and \texttt{r\_node} (removing any if found), before finally returning the two pointers. 

Once the left and right nodes are returned to \Call{L-DeleteMaxP}{}, a successful CAS in Line~\ref{line:ins_move_cas} indicates that the node has been logically removed from the list and the algorithm breaks from the loop; else it tries again. Lines~\ref{line:ins_move_phys}-\ref{line:ins_move_phys2} ensure that the node is physically removed from the list; if the CAS in Line~\ref{line:ins_move_phys2} fails, the search function \Call{searchPhysDel}{} is called to ensure physical removal of \texttt{r\_node}. \Call{searchPhysDel}{} varies only slightly from \Call{search}{}, such that instead of searching for a certain key value, it searches for \texttt{r\_node}. 
If the algorithm finds the node, then it will detect that it needs to perform physical removal before returning, as the other search methods do; if it is not found, then another thread has physically unlinked it.
Finally, the key and value of the element to move down to the worker-level are returned, which is subsequently handled by the thread in the calling function.


\begin{algorithm}[h]
\scriptsize
\caption{The \Call{L-DeleteMin}{} method of the leader-level linked list.}\label{alg:l_del-min}
\begin{algorithmic}[1]

\Function{L-DeleteMin}{LeaderList $list$}
    \State $offset \gets 0$
    \State $x \gets list$.head
    \Repeat \label{line:delmin_repeat}
        \State $offset \gets offset + 1$
        \State $x\_next \gets x$.next
        \If{$\Call{get\_notlogdel\_ref}{x\_next} = list.$tail} \label{line:delmin_empty1}
            \State \Return EMPTY \label{line:delmin_empty2}
        \EndIf
        \If{$\Call{is\_logdel\_ref}{x\_next}$}
            \State \textbf{continue}
        \EndIf
        \State $x\_next \gets $ \Call{fetch\_and\_or}{\&$x$.next, 1} \label{line:delmin_fao}
    \Until{$x \gets \Call{get\_notlogdel\_ref}{x\_next}$ \textbf{and} \textbf{not} $\Call{is\_logdel\_ref}{x\_next}$} \label{line:delmin_until}
    \State $new\_head \gets x$
    \If{$offset > MAX\_OFFSET$} \label{line:delmin_offset}
        \State $list$.head.next $\gets$ \Call{get\_logdel\_ref}{$new\_head$} \label{line:delmin_offset_2}
    \EndIf
    \State \Return [$x$.key, $x$.val, $x$.tid]
\EndFunction

\end{algorithmic}
\end{algorithm}

\textbf{L-DeleteMin.}
\Call{L-DeleteMin}{} (Algorithm~\ref{alg:l_del-min}) removes the highest priority element from $L$, which is also the highest priority element in the entire data structure. 
Like the Lind\'{e}n-Jonsson algorithm, \Call{L-DeleteMin}{} performs batch physical deletions once some predetermined offset of logically deleted nodes has been reached.

The thread performing \Call{L-DeleteMin}{} first traverses past the prefix of logically deleted nodes in lines~\ref{line:delmin_repeat}-\ref{line:delmin_until}.
When the thread successfully marks the first element as logically deleted, which is verified by checking \texttt{x\_next} in Line~\ref{line:delmin_until} (i.e., the return value of the fetch-and-or from Line~\ref{line:delmin_fao}), it exits the loop.
If the offset has exceeded the maximum offset, the thread then resets the head's next pointer to the node just marked for logical deletion (lines~\ref{line:delmin_offset}-\ref{line:delmin_offset_2}).
Note that, unlike the Lind\'{e}n-Jonsson algorithm, we do not need to perform a compare-and-swap to unlink the prefix since only one thread is ever performing a delete-min operation at the time. Nonetheless, it is still beneficial for performance to delay physical deletion, as concurrent insertions face fewer cache misses when traversing the list.

\subsection{Worker-Level Min-Heap}
\label{sec:worker-heaps}

The min-heap is implemented as a pre-allocated array. If it becomes full, an additional array is allocated and linked to the first array.
Each min-heap is guarded by a single lock.
There are only ever up to two threads competing for a heap $H_p$'s lock: the thread $p$, which is local to the heap, and the thread that is the coordinator performing a delete-min operation. In practice, however, we find it very uncommon for a coordinator thread to need access to the worker level thanks to our helping mechanism, as detailed in Section~\ref{sec:alg}.


\subsection*{Implementation Details}

Algorithms~\ref{alg:worker-ins} and \ref{alg:worker-delmin} contain pseudocode for the worker-level insert and delete-min, respectively. For clarity, in the pseudocode we do not include details on handling the case in which the heap becomes full (i.e., there are more than $HLS$ elements in the heap); in the actual implementation, we support chaining heaps together by following next pointers to additional lists.

\begin{algorithm}[h]
\scriptsize
\caption{Inserting to a worker-level min-heap.}
\label{alg:worker-ins}
\begin{algorithmic}[1]
\Function{Worker-Insert}{Heap $heap$, int $key$, val\_t $val$}
    \If{$heap$.size $= 0$} \label{line:worker-ins-size0}
        \State $heap$.list[0] $\gets (key, val)$
        \State $heap$.size $\gets heap$.size $+ 1$
        \State \Return \label{line:worker-ins-size0-end}
    \EndIf 

    \State $idx \gets heap$.size
    \State $p\_idx \gets $ \Call{Parent}{$idx$}

    \While{$idk > 0$ \textbf{and} $key < heap$.list[$p\_idx$].key} \label{line:worker-ins-while}
        \State $heap$.list[$idx$] $\gets heap$.list[$p\_idx$]
        \State $idx \gets p\_idx$
        \State $p\_idx \gets $ \Call{Parent}{$idx$}
    \EndWhile
    \State $heap$.list[$idx$] $\gets (key, val)$ \label{line:worker-ins-set}
    \State $heap$.size $\gets heap$.size $+ 1$
    \State \textbf{return}
\EndFunction
\end{algorithmic}
\end{algorithm}

\begin{algorithm}[h]
\scriptsize
\caption{Removing the highest priority element from a worker-level min-heap.}
\label{alg:worker-delmin}
\begin{algorithmic}[1]
\Function{Worker-DeleteMin}{Heap $heap$}
    \If{$heap$.size $= 0$} \label{line:worker-delmin-special0}
        \State \Return EMPTY
    \EndIf

    \If{$heap$.size $= 1$}
        \State $heap$.size $\gets 0$
        \State \Return ($heap$.list[0].key, $heap$.list[0].val) \label{line:worker-delmin-special1}
    \EndIf

    \State \:

     
    \State $ret \gets$ ($heap$.list[0].key, $heap$.list[0].val) \label{line:worker-delmin-read0}
    \State $(key,val) \gets heap$.list[$heap$.size $- 1$]
    \State $heap$.size $\gets heap$.size $- 1$  \label{line:worker-delmin-decsize}
    \State $idx \gets 0$ 
    \State $left\_idx$, $right\_idx \gets$ \Call{Left\_Child}{$idx$}, \Call{Right\_Child}{$idx$} \label{line:worker-delmin-read1}
    \While{($left\_idx < heap$.size \textbf{and} $key > heap$.list[$left\_idx$].key) \textbf{or} \label{line:worker-delmin-while}
        \Statex \: ($right\_idx < heap$.size \textbf{and} $key > heap$.list[$right\_idx$].key)}

        \If{$right\_idx < heap$.size \textbf{and} $heap$.list[$left\_idx$].key >= $heap$.list[$right\_idx$].key}
            \State $heap$.list[$idx$].key $\gets heap$.list[$right\_idx$].key
            \State $heap$.list[$idx$].val $\gets heap$.list[$right\_idx$].val
            \State $idx = right\_idx$
        \Else 
            \State $heap$.list[$idx$].key $\gets heap$.list[$left\_idx$].key
            \State $heap$.list[$idx$].val $\gets heap$.list[$left\_idx$].val
            \State $idx = left\_idx$
        \EndIf

        \State $left\_idx \gets$ \Call{Left\_Child}{$idx$}
        \State $right\_idx \gets$ \Call{Right\_Child}{$idx$}
    \EndWhile \label{line:worker-delmin-endwhile}

    \State \:

    \State $heap$.list[$idx$].key $\gets key$ \label{line:worker-delmin-set0}
    \State $heap$.list[$idx$].val $\gets val$ \label{line:worker-delmin-set1}
    \State \Return $ret$
\EndFunction
\end{algorithmic}
\end{algorithm}

\textbf{Worker-Insert.}
\Call{Worker-Insert}{} inserts (\texttt{key}, \texttt{val}) to the worker-level min heap specified by variable \texttt{heap} in the pseudocode.
Lines~\ref{line:worker-ins-size0}-\ref{line:worker-ins-size0-end} of Algorithm~\ref{alg:worker-ins} handle the special case in which the size of the heap is 0, in which case the first element is set to (\texttt{key}, \texttt{val}), the heap's size is increased, and then the function returns. 
Otherwise, the algorithm reads the size of the heap into variable \texttt{idx} to determine the index of the next empty spot in the heap list. The parent of \texttt{idx}, \texttt{p\_idx}, is also read so that in the loop beginning in Line~\ref{line:worker-ins-while}, \texttt{key} can be compared to the key of its parent to determine if it should move up the heap. The loop continues until the root has been reached, or \texttt{key} is larger than or equal to that of its parent, each of which indicates it is properly ordered. Finally, \texttt{key} and \texttt{val} are set at this determined location in the heap (Line~\ref{line:worker-ins-set}), its size is incremented, and the function returns.

\textbf{Worker-DeleteMin.}
\Call{Worker-DeleteMin}{} (Algorithm~\ref{alg:worker-delmin}) begins by handling special cases in which the heap's size is 0 or 1 (lines~\ref{line:worker-delmin-special0}-\ref{line:worker-delmin-special1}).
Otherwise, Line~\ref{line:worker-delmin-read0} reads the return value (i.e., the first element in the heap list) into variable \texttt{ret}, and the next line reads the \texttt{key} and \texttt{val} of the last element in the heap, to be propagated down the tree from the root until it is properly ordered in the heap. The heap's size is decremented (Line~\ref{line:worker-delmin-decsize}), and then \texttt{idx} is set to 0 (i.e., the position of the root), and its left and right children's indexes are determined so that \texttt{key} can be compared to the key of its children.
Thus, the loop in lines~\ref{line:worker-delmin-while}-\ref{line:worker-delmin-endwhile} iterates down the tree, swapping with one of \texttt{idx}'s children (specifically, the child with a smaller key) until \texttt{key} is less than or equal to both of its children, or a leaf node has been reached. Finally, the (\texttt{key},\texttt{val}) pair is set to its new position in the heap (lines~\ref{line:worker-delmin-set0}-\ref{line:worker-delmin-set1}), and the function returns.


\subsection{PIPQ APIs}

\textbf{Insert API.}
Algorithm~\ref{alg:pipq-insert} shows \Call{PIPQ-Insert}{}, PIPQ's Insert API. The function highly correlates to many steps of the Insertion Flow provided in Figure~\ref{fig:pq_model_combined}. Specifically, it determines to which level the element should be inserted, and if to the leader-level, additionally which path must be followed.

Thread $p$ performing \Call{PIPQ-Insert}{} to insert (\texttt{key}, \texttt{val}) begins by attempting to acquire $H_p$'s lock, $l_p$ (i.e., \texttt{t\_local\_heap.lock} in the pseudocode), in lines~\ref{line:pipq-ins-lock0}-\ref{line:pipq-ins-lock1}; this attempt is repeated until a successful CAS in Line~\ref{line:pipq-ins-lock1} indicates that the lock has been acquired.

The check in Line~\ref{line:pipq-ins-check0} indicates whether the thread must consult with the leader level. Specifically, \texttt{key} must be compared to the priority of elements in $L$ if $H_p$ is empty, or if \texttt{key} is of a higher priority than the highest priority element in $H_p$. If this is \textit{not} the case, then $p$ follows the fast-path of the algorithm (Line~\ref{line:pipq-ins-fast0}) and inserts (\texttt{key}, \texttt{val}) into $H_p$, releases $l_p$, and returns.

Otherwise, if the leader-level must be consulted, then the check in Line~\ref{line:pipq-ins-cntr-check} determines which of the two paths to follow. If $p$'s counter value, $L\_count_p$, is not equal to \texttt{CNTR\_MAX}, then it is smaller than it, and thus the slower path is followed (Line~\ref{line:pipq-ins-slower}), and so $p$ calls \Call{L-Insert}{} to insert (\texttt{key}, \texttt{val}) into $L$, increments $L\_count_p$, and then unlocks $l\_p$ and returns. If $L\_count_p$ is equal to \texttt{CNTR\_MAX}, then a final check if made which determines if the fast or slowest path will be followed.
Specifically, if \texttt{lead\_largest\textsubscript{p}} is not NULL, and the priority of the element which it points to is greater than that of \texttt{key}, then (\texttt{key}, \texttt{val}) can be safely inserted to the worker level.
Otherwise, it must be inserted into $L$, and thus
the slowest path is followed: in addition to performing \Call{L-Insert}{} to insert (\texttt{key}, \texttt{val}) into $L$, $p$ must subsequently call \Call{L-Delete}{} to move the least priority element tagged with \texttt{tid\textsubscript{p}} from $L$ to $H_p$. Finally, $p$ unlocks $l_p$ and returns.

\begin{algorithm}[h]
\scriptsize
\caption{Insert API of PIPQ, which determines which level to insert the key-value pair to (the worker-level, or the leader-level), and then calls the relevant API(s) to execute the insertion.}
\label{alg:pipq-insert}
\begin{algorithmic}[1]
\Function{PIPQ-Insert}{int $key$, key\_t $val$}
    \While{True}
        \State $lock\_val \gets t\_local\_heap$.lock \label{line:pipq-ins-lock0}
        \If{$lock\_val$ \% $2 = 0$}
            \If{CAS(\&$t\_local\_heap$.lock, $lock\_val$, $lock\_val + 1$)}  \label{line:pipq-ins-lock1}
                \LComment{Lock on local heap is acquired}
                \If{$t\_local\_heap$.size = 0 \textbf{or} $key < t\_local\_heap$.list[0].key} \label{line:pipq-ins-check0}
                    \LComment{Must compare to the leader-level}
                    \If{$t\_leader\_counter$$\rightarrow$$val = CNTR\_MAX$} \label{line:pipq-ins-cntr-check}
                        \If{$t\_lead\_largest$ \textbf{and} $key$ >= $t\_lead\_largest$$\rightarrow$$key$}
                            \LComment{Fast-path}
                            \State \Call{Worker-Insert}{$key$, $val$}
                        \Else
                            \LComment{Slowest-path}
                            \State \Call{L-Insert}{$leader$, $t\_lead\_largest$, $key$, $val$, $t\_tid$}
                            \State $(key_{\text{move}},val_{\text{move}}) = $ \Call{L-Delete}{$leader$, $t\_lead\_largest$, $t\_tid$}
                            \State \Call{Worker-Insert}{$key_{\text{move}}$, $val_{\text{move}}$}
                        \EndIf
                    \Else
                        \LComment{Slower-path}
                        \State \Call{L-Insert}{$leader$, $t\_lead\_largest$, $key$, $val$, $t\_tid$} \label{line:pipq-ins-slower}
                        \State ($t\_leader\_counter$$\rightarrow$$val) \gets (t\_leader\_counter$$\rightarrow$$val + 1)$
                    \EndIf
                \Else
                    \LComment{Fast-path}
                    \State \Call{Worker-Insert}{$key$, $val$} \label{line:pipq-ins-fast0}
                \EndIf
                \LComment{Unlock the heap}
                \State $t\_local\_heap$.lock $\gets t\_local\_heap$.lock $+ 1$
                \State \Return
            \EndIf
        \EndIf
    \EndWhile
\EndFunction
\end{algorithmic}
\end{algorithm}

\textbf{DeleteMin API.}
Algorithm~\ref{alg:PIPQ-deletemin} provides the Delete-Min API of PIPQ (\Call{PIPQ-DeleteMin}{}), as well as additional functions \Call{TryCompeteCoordinator}{}, \Call{TryBecomeCoordinator}{}, and \Call{Coordinate}{}, which compose the path a thread takes to become the Coordinator. \Call{PIPQ-DeleteMin}{} simply announces the operation, and then calls \Call{TryCompeteCoordinator}{}, in which threads belonging to the same NUMA node, $m$, compete for the NUMA-local lock, \texttt{compete\_coord\_lock\textsubscript{m}}. The thread that succeeds to acquire the lock becomes the leader of the NUMA node, and then calls \Call{TryBecomeCoordinator}{} to compete with leaders of other NUMA nodes to become the single Coordinator. Note that threads performing \Call{PIPQ-DeleteMin}{} that are not the Coordinator make calls to \Call{Help-Upsert}{} (See Algorithm~\ref{alg:help_upsert}) in lines~\ref{line:pipq-trycomp-helpup} and \ref{line:pipq-trybec-helpup} to help maintain the size of $L$.
A thread that successfully acquires \texttt{coord\_lock} becomes the Coordinator, and calls \Call{Coordinate}{}. \Call{Coordinate}{} iterates through the NUMA-local announce array, and calls \Call{Execute-Announced-DeleteMin}{} (see Algorithm~\ref{alg:do-delmin}) for each active request. Once \Call{Execute-Announced-DeleteMin}{} returns (which sets the element to be returned by the calling thread), it sets the status to False which indicates to the corresponding thread that its operation has been completed.

\begin{algorithm}[h]
\scriptsize
\caption{The DeleteMin API of PIPQ, and additional functions \Call{TryCompeteCoordinator}{}, \Call{TryBecomeCoordinator}{}, and \Call{Coordinate}{}, which compose the path a thread takes to become the Coordinator.}
\label{alg:PIPQ-deletemin}
\begin{algorithmic}[1]

\LComment{PIPQ's DeleteMin API}
\Function{PIPQ-DeleteMin}{}
    \State $t\_announce[t\_tid].status = True$
    \State \Call{TryCompeteCoordinator}{}
    \State \Return $t\_announce[t\_tid].ret\_val$
\EndFunction

\State \:

\LComment{Competition between threads belonging to the same NUMA node to become its Leader}
\Function{TryCompeteCoordinator}{}
    \While{True}
        \State $lock\_val \gets *t\_compete\_coord\_lock$
        \If{$lock\_val$ \% $2 = 0$}
            \If{CAS($t\_compete\_coord\_lock$, $lock\_val$, $lock\_val+1$)}
                \LComment{Leader lock is acquired, now compete for coordinator}
                \State \Call{TryBecomeCoordinator}{}
                \State $*t\_compete\_coord\_lock \gets *t\_compete\_coord\_lock+1$
                \State \Return
            \EndIf
        \Else
            \While{$*t\_compete\_coord\_lock = lock\_val$}
                \State \Call{Help-Upsert}{} \label{line:pipq-trycomp-helpup}
                \If{\textbf{not} $t\_announce[t\_tid].status$}
                    \State \Return
                \EndIf
            \EndWhile
        \EndIf
    \EndWhile
\EndFunction

\State \:

\LComment{Competition between Leaders to become the single Coordinator}
\Function{TryBecomeCoordinator}{}
    \While{True}
        \State $lock\_val \gets *coord\_lock$
        \If{$lock\_val$ \% $2 = 0$}
            \If{CAS($coord\_lock$, $lock\_val$, $lock\_val+1$)}
                \State \Call{Coordinate}{}
                \State $*coord\_lock \gets *coord\_lock+1$
                \State \Return
            \EndIf
        \Else
            \While{$*coord\_lock = lock\_val$}
                \State \Call{Help-Upsert}{} \label{line:pipq-trybec-helpup}
            \EndWhile
        \EndIf
    \EndWhile
\EndFunction

\State \:

\LComment{The Coordinator role, performing the combining effort}
\Function{Coordinate}{}
    \For{$idx$ = 0 to $t\_num\_workers$}
        \If{$t\_announce[idx].status$}
            \State \Call{Execute-Announced-DeleteMin}{idx}
            \State $t\_announce[idx].status = False$
        \EndIf
    \EndFor
\EndFunction

\end{algorithmic}
\end{algorithm}

The final function, \Call{Execute-Announced-DeleteMin}{} is shown in Algorithm~\ref{alg:do-delmin}. The algorithm begins by calling \Call{L-DeleteMin}{} to remove the highest priority element. Assuming PIPQ is not empty (checked in Line~\ref{line:execute-emptycheck}), the relevant counter of the element just removed is retrieved, and its value decremented in lines~\ref{line:execute-getcntr}-\ref{line:execute-deccntr}. Then, \texttt{key\textsubscript{min}} and \texttt{val\textsubscript{min}} are set in the announce array to be returned to the relevant thread.

The remainder of the code (lines~\ref{line:execute-rem0}-\ref{line:execute-rem1}) handles pulling an element up from the relevant worker (that associated with \texttt{tid}) if necessary (i.e., if the counter value is less than 2, via the check in Line~\ref{line:execute-rem0}). The process follows similar logic to \Call{Help-Upsert}{} (Algorithm~\ref{alg:help_upsert}, discussed below), but may need to pull up an element from any heap as opposed to its local one. 

Lines~\ref{line:execute-lock0}-\ref{line:execute-lock1} attempt to acquire the lock on the relevant worker-level heap. If the lock is successfully acquired (indicated by a successful CAS in Line~\ref{line:execute-lock1}) then \Call{Worker-DeleteMin}{} is called to remove the highest priority element from the relevant worker heap, and then a call is made to \Call{L-Insert}{} to insert it to the leader level.
The relevant counter is incremented, the lock is released, and the function returns. It is also possible that the thread associated with \texttt{tid} concurrently upserts an element to $L$; if this is the case, then the lock acquisition will fail, and the thread will see that the counter has been incremented in Line~\ref{line:execute-nolock} and is able to return without pulling up an element itself.

\begin{algorithm}[h]
\scriptsize
\caption{The \Call{Execute-Announced-DeleteMin}{} method calls \Call{L-DeleteMin}{}, and then after modifying the relevant counter value, determines if it needs to pull an element up from the worker-level.}
\label{alg:do-delmin}
\begin{algorithmic}[1]
\Function{Execute-Announced-DeleteMin}{int $idx$}
    \State $counter\_min \gets 2$
    \State $(k_{\text{min}}, v_{\text{min}}, tid) = $ \Call{L-DeleteMin}{$leader\_list$}
    \If{$k_{\text{min}}$ \textbf{not} EMPTY} \label{line:execute-emptycheck}
        \LComment{Update the counter value for thread identifier tid, and update the announce array}
        \State $cntr\_tid = $ \Call{get\_counter}{$tid$} \label{line:execute-getcntr}
        \State \Call{add\_and\_fetch}{$cntr\_tid$, -1} \label{line:execute-deccntr}
        \State $t\_announce[idx]$.key $ = k_{\text{min}}$
        \State $t\_announce[idx]$.val $ = v_{\text{min}}$
        \State \:
        \LComment{Check if it is necessary to pull up an element from the worker-level}
        \If{$*cntr\_tid < counter\_min$} \label{line:execute-rem0}
            \State $worker\_heap \gets worker\_heap\_ptrs[tid]$
            \State $lead\_largest \gets lead\_largest\_ptrs[tid]$
            \While{True}
                \State $lock\_val \gets worker\_heap$.lock \label{line:execute-lock0}
                \If{$worker\_heap$.lock \% $2 = 0$}
                    \If{CAS(\&$worker\_heap$.lock, $lock\_val$, $lock\_val+1$)} \label{line:execute-lock1}
                        \State $(key,val) = $ \Call{Worker-DeleteMin}{$worker\_heap$}
                        \If{$*cntr\_tid = 0$}
                            \LComment{Must set the pointer to NULL if the counter becomes 0 so it is properly reset in L-Insert}
                            \State $lead\_largest$ = NULL
                        \EndIf
                        \If{$key$ \textbf{not} EMPTY}
                            \State \Call{L-Insert}{$leader$, $lead\_largest$, $key$, $val$, $tid$}
                            \State \Call{add\_and\_fetch}{$cntr\_tid$, 1}
                        \Else  
                            \State \textbf{break}
                        \EndIf
                        \State $worker\_heap$.lock $\gets worker\_heap$.lock $+ 1$
                        \State \Return
                    \EndIf
                \Else
                    \While{$lock\_val = worker\_heap$.lock}
                        \If{$*cntr\_tid >= counter\_min$} \label{line:execute-nolock}
                            \LComment{The thread identified by tid upserted an element}
                            \State \Return \label{line:execute-rem1}
                        \EndIf
                    \EndWhile
                \EndIf
            \EndWhile
        \EndIf        
    \EndIf
\EndFunction
\end{algorithmic}
\end{algorithm}

\textbf{Help-Upsert.}
The \Call{Help-Upsert}{} function (Algorithm~\ref{alg:help_upsert}) is a simple helper function, which is critical to the performance of \Call{PIPQ-DeleteMin}{} (and as a result, PIPQ as a whole).
The function, called by some thread $p$, compares its thread-local counter value, $L\_count_p$ (stored in pointer \texttt{t\_leader\_counter} in the pseudocode) to the value of \texttt{CNTR\_MIN} to determine if it should help by moving an element from its worker level $H_p$ up to the leader level.

Specifically, if $L\_count_p$ is less than \texttt{CNTR\_MIN} per the check in Line~\ref{line:help_upsert-cntrcheck}, then the thread will attempt to move an element up. Before doing so, it must acquire $l_p$ (\texttt{t\_local\_heap.lock} in the pseudocode), the lock protecting $H_p$, by checking if it is unlocked in lines~\ref{line:help_upsert-cntrcheck}-\ref{line:help_upsert-checklock} and if so, performing the CAS in Line~\ref{line:help_upsert-caslock} to attempt to acquire it. Upon a successful CAS, the lock has been acquired. The method then calls \Call{Worker-DeleteMin}{} on $H_p$, and then inserts the element returned (if it is not empty) to $L$ with \Call{L-Insert}{}. It then increments $L\_count_p$, releases $l_p$, and returns.

\begin{algorithm}[h]
\scriptsize
\caption{Helper function called by threads waiting for their delete-min operation to be completed by the Coordinator.}
\label{alg:help_upsert}
\begin{algorithmic}[1]
\Function{help-upsert}{}
    \If{$*t\_leader\_counter$ < $CNTR\_MIN$} \label{line:help_upsert-cntrcheck}
        \State $lock\_val = t\_local\_heap$.lock \label{line:help_upsert-readlock}
        \If{$lock\_val$ \% $2 = 0$} \label{line:help_upsert-checklock}
            \If{CAS(\&$t\_local\_heap$.lock, $lock\_val$, $lock\_val+1$)} \label{line:help_upsert-caslock}
                \State $(key,val) = $ \Call{Worker-DeleteMin}{$t\_local\_heap$}
                \If{$key$ \textbf{not} EMPTY}
                    \State \Call{L-Insert}{$leader$, $t\_lead\_largest$, $key$, $val$, $t\_tid$}
                    \State \Call{add\_and\_fetch}{$t\_leader\_counter$, 1}
                \EndIf
                \State $t\_local\_heap$.lock $\gets t\_local\_heap$.lock$+1$
                \State \Return
            \EndIf
        \EndIf
    \EndIf
\EndFunction
\end{algorithmic}
\end{algorithm}

\newtheorem{thm}{Theorem}
\newtheorem{invar}[thm]{Invariant} 


\section{Correctness}
\label{sec:correctness}

We assume a typical asynchronous system with $n$ threads which communicate by accessing shared variables. 
To prove the correctness of PIPQ, we break it down to its two-level components: 
the worker level and the leader level. 
At the worker level, each thread $p$ has its own min-heap $H_p$ protected by a single global lock $l_p$. 
Lock $l_p$ also protects access to $lead\_largest_p$. 
At the leader level, PIPQ maintains a linked list $L$ of keys.
Consider an execution $\alpha$ of PIPQ. 
All our claims below are in reference to this execution. 

Recall that $L$ contains two sentinel nodes, one storing the key $-\infty$ pointed to by pointer $head$,
and another storing the key $+\infty$ pointed to by the pointer $tail$. 
The {\em elements} (or {\em nodes}) of $L$ at some point $t$ are the keys (nodes) that are traversed 
if we start from the node pointed to by $head$ at $t$
and we move forward following the value (at $t$) of the next pointer of each node.
We call $p$-element an element of $L$ which is tagged with $p$ (i.e., the $tid$ field 
of the element's struct has the value $p$).
Initially, $L$ contains just the two
sentinel nodes and it is sorted.

\vspace{5pt}
\noindent
{\bf Properties of $H_p$, for each $p \in \{1, \ldots n \}$.}
We denote by \WInsert\ (\WDeleteMin) an insert (delete-min) operation on $H_p$.
PIPQ ensures that at each point $t$, the only threads that may attempt to perform operations on $H_p$ 
or update $lead\_largest_p$ are $p$ and the  thread that plays the role of the coordinator at $t$. 
By the flow-chart of \Insert\ (Fig.~\ref{fig:pq_model_combined}), we see that a thread $p$,
maintains $l_p$ from the beginning to the end of the execution of a \Insert. 
PIPQ also ensures that during the execution of any instance of $help\_upsert$ 
(that may be invoked while $p$ is executing a \DeleteMin\ operation), 
$p$ holds $l_p$. The coordinator also acquires $l_p$ before it deletes an element from $H_p$ and releases
it after inserting it to $L$ (this is done during a \DeleteMin\ operation if the coordinator 
determines that the number of elements that are tagged with $p$ in $L$ is smaller than $2$). 
These imply that $H_p$ and $lead\_largest_p$ is accessed in an atomic way.
For simplicity of the proof, assume that  $lead\_largest_p$ is read once 
just after $l_p$ is acquired, whereas it is always updated just before the lock is released.

After acquiring its coarse-grain lock $l_p$, 
$p$ executes the serial code of \WInsert\ or \WDeleteMin\ on $H_p$. 
We linearize a \WInsert\ or a \WDeleteMin\ at the time that it acquires $l_p$.
The following lemma is an immediate consequence of the above arguments; part (\ref{largest}) 
requires additional inspection of the flow chart of \Insert\ (Steps 2, 4, 7). 

Consider any time $t$ where no Worker-operation is active on $H_p$, i.e., $H_p$ is in a quiescent state.
Then, the {\em set of elements contained in $H_p$} at $t$ is the set of keys the array implementing the min-heap contains.
PIPQ ensures that when a thread $p$ performs \LInsert\ to insert an element in $L$, it holds $l_p$
and thus no worker operation is active on $H_p$ while the \LInsert\ is performed. The same is true
when a combiner $q$ executing a \DeleteMin\ initiated by $p$ moves up an element from $H_p$ to $L$.

\begin{lemma}
\label{lem:hp}
For each thread $p$, 
\begin{enumerate}
\item \label{hp-lin} $H_p$ is a linearizable priority queue; each worker operation that is applied on $H_p$ 
is atomic (as it is applied while the thread that executes it holds the lock $l_p$);
\item All accesses to $lead\_largest_p$ are performed atomically (while lock $l_p$ is held). 
Whenever $lead\_largest_p$ is read, it points to the $p$-element with the largest key in $L$ among all $p$-elements.
\item \label{largest} While $p$ executes an \LInsert\ to insert a $p$-element in $L$, it holds the lock $l_p$; moreover, the inserted element
is of a greater or equal priority than that of the highest priority element in $p$'s worker-level heap, $H_p$, at the time
that $p$ acquired $l_p$; 
\item While $p$ inserts a $p$-element into $L$ 
it holds the lock $l_p$; 
\item \label{insert-once} Every key that appears as an argument of some \Insert\ operation invoked by $p$ (in $\alpha$) is inserted 
either in $H_p$ or in $L$ (and not in both).
\end{enumerate}
\end{lemma}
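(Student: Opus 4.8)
The plan is to derive all five parts from the structural facts set up immediately before the statement — namely that $H_p$ and $lead\_largest_p$ are accessed only by $p$ or by the current coordinator, and only while the coarse lock $l_p$ is held, with $lead\_largest_p$ read just after $l_p$ is acquired and written just before it is released — together with a direct reading of the three paths of \Insert\ (Algorithm~\ref{alg:pipq-insert}), of $help\_upsert$ (Algorithm~\ref{alg:help_upsert}), and of the coordinator's pull-up in \op{Execute-Announced-DeleteMin} (Algorithm~\ref{alg:do-delmin}). For part 1, observe that $l_p$ is a \CAS-based mutual-exclusion lock whose counter is odd exactly when the lock is held, so its critical sections are totally ordered in real time; every \WInsert\ or \WDeleteMin\ runs entirely inside one such critical section, and the bodies of \WInsert\ and \WDeleteMin\ consist of straight-line code maintaining the array representation of a binary min-heap. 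Hence each worker operation executes atomically and behaves as the corresponding sequential min-heap operation; taking its lock-acquisition instant (which lies inside its execution interval) as its linearization point produces a legal sequential priority-queue history, so $H_p$ is a linearizable priority queue.

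Parts 4 and 5 are read off the control flow. In \Insert, $l_p$ is acquired in lines~\ref{line:pipq-ins-lock0}-\ref{line:pipq-ins-lock1} and released only at the end, whichever of the fast, slower, or slowest path is taken; $help\_upsert$ and the coordinator's pull-up likewise bracket their \LInsert\ call with acquisition and release of the relevant heap lock. This gives part 4: whenever a $p$-element is inserted into $L$, the inserting thread holds $l_p$ at that moment. For part 5, each path of \Insert\ performs exactly one insertion of the argument key — into $H_p$ via \WInsert\ on the fast path, into $L$ via \LInsert\ on the slower and slowest paths (the slowest path additionally moves a \emph{different}, already-present element down to $H_p$, which does not touch the argument key) — and since \WInsert\ always completes and \LInsert\ retries its \CAS\ until it succeeds, the argument key is inserted in $H_p$ or in $L$ and not in both.

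For part 2, atomicity of every access to $lead\_largest_p$ is immediate from the lock discipline and the read-once / write-last convention. For the pointer invariant I would prove the stronger statement that $lead\_largest_p$ is NULL when $p$ has no element in $L$ and otherwise points to the \emph{earliest-inserted} $p$-element among those of maximal key; this strengthening is what makes the induction go through. The inductive step splits by the operation touching $L$ or $lead\_largest_p$: an \LInsert\ of a $p$-element advances $lead\_largest_p$ exactly when the new key strictly exceeds the stored maximum (or the pointer was NULL), which preserves the stronger statement; an \LDelete\ removes precisely the node $lead\_largest_p$ points to and, by its traversal, resets $lead\_largest_p$ to the nearest preceding $p$-element in list order, which by sortedness and the fact that \LInsert\ threads a new node of equal key in front of the existing equal-key nodes is again the earliest-inserted maximal-key $p$-element (well-defined because at least two $p$-elements remain after the removal); and an \LDeleteMin\ removes the global minimum of $L$, which is not the node $lead\_largest_p$ points to whenever at least two $p$-elements exist — if it were, all $p$-elements would share the global-minimum key, but then the later-inserted ones have been threaded strictly ahead of $lead\_largest_p$, so $lead\_largest_p$ is not the first node. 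For part 3, I would list the three sites at which $p$ itself runs \LInsert\ on a $p$-element: the slower path (Step 5) and the slowest path (Step 8), each entered only because the Step-2 comparison found the argument key strictly below the current minimum of $H_p$ — and, $l_p$ being held continuously from before Step 2 through the \LInsert, that minimum is exactly the highest-priority element of $H_p$ at lock-acquisition time, so the inserted $p$-element has strictly higher priority; and $help\_upsert$, where the inserted element is precisely the minimum just extracted by \WDeleteMin, so its priority equals that of the highest-priority element of $H_p$ at the instant $l_p$ was taken. In either case the inserted element has priority at least as high, as part 3 requires.

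The part the statement itself singles out is part 3, but it reduces to the short case analysis above once the three \LInsert\ call sites are identified and the continuous possession of $l_p$ across Step 2 is noted. The genuinely delicate point is the \LDeleteMin\ (and \LDelete) case of part 2: keeping the $lead\_largest_p$ invariant alive across these removals forces one to combine the front-insertion tie-breaking of \LInsert\ with the separately maintained fact that each thread keeps at least two of its elements in $L$, and to dispatch the corner case in which \LDeleteMin\ removes $p$'s last $p$-element — there $lead\_largest_p$ is transiently stale, and one must note that no \Insert\ reads it while $p$'s count is below \texttt{CNTR\_MAX}, while the coordinator's pull-up resets it (to NULL, then through \LInsert) before releasing $l_p$. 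Everything else follows directly from the pre-statement scaffolding and the code walk above.
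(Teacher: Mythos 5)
Your proof takes a genuinely more explicit route than the paper, which for this lemma simply declares it ``an immediate consequence of the above arguments'' (the lock discipline and access restrictions on $H_p$ and $lead\_largest_p$ laid out in the paragraph preceding the statement) and points at Steps 2, 4, 7 of the \Insert\ flow chart for part~\ref{largest}. You instead carry out the argument line by line, which is both sound and more informative. For parts 1, 4, and 5 your reading of the lock protocol and the three \Insert\ paths matches what the paper gestures at, just made concrete. The real divergence is in part~2: you recognize that the pointer invariant for $lead\_largest_p$ is not actually immediate under ties and under the transient-staleness window when $L\_count_p$ drops below~2, and you repair this by strengthening the invariant (``earliest-inserted maximal-key $p$-element''), combining the front-threading tie-break of \LInsert\ with the guarantee from Lemma~\ref{lem:lcount} and the guard on line~\ref{line:pipq-ins-cntr-check} that $lead\_largest_p$ is never read by \Insert\ until the counter is at \texttt{CNTR\_MAX}. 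None of this machinery appears in the paper; your version is the careful one, and the paper's claim of immediacy glosses over exactly the case you call delicate.

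One small gap in your part~3 enumeration: you list the slower path, the slowest path, and $help\_upsert$ as the sites where $p$ itself calls \LInsert\ on a $p$-element, but there is a fourth: when $p$ is the coordinator inside \op{Execute-Announced-DeleteMin} and the $tid$ of the element it just dequeued from $L$ is $p$'s own, so the pull-up extracts the minimum of $H_p$ (under $l_p$) and reinserts it as a $p$-element. You do mention the coordinator's pull-up in your preamble, so this is an omission in the enumeration rather than a conceptual error, and the argument there is identical to your $help\_upsert$ case (the inserted element is exactly the minimum of $H_p$ at the point $l_p$ was acquired). It should be added for completeness — and note that when the paper later invokes Lemma~\ref{lem:hp}(\ref{largest}) it also uses it for the case where the combiner $q\ne p$ inserts a $p$-element, a case the lemma as literally stated (``while $p$ executes an \LInsert'') does not cover; this is a loose end in the paper, not in your proposal.
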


\noindent
{\bf Properties of $L$.}
We say that a node in $L$ is \textit{logically deleted} if the DELMIN bit in its predecessor's next pointer is marked (i.e., it is set to true).
A node in $L$ is \textit{moved} if the MOVING bit of its next pointer is marked. 
We say that a node is {\em marked} if either the DELMIN or the MOVING bit of its next pointer is marked. 
A node in $L$ that is neither logically deleted nor moved is {\em active}.

Lemma~\ref{lem:lcount} studies the number of (active) $p$-elements (for each $p$) that $L$ contains at each point in time. 

\begin{lemma}
\label{lem:lcount}
For each thread $p$, whenever a \LDeleteMin\ is invoked, if $H_p$ is not empty, 
then the value of $L\_count_p$ is greater than or equal to $2$
indicating that there are at least two active $p$-elements in $L$. 
\end{lemma}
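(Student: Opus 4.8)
The plan is to prove Lemma~\ref{lem:lcount} by a global induction on the steps of the execution $\alpha$, maintaining two coupled invariants together with the crucial fact that the coordinator processes the entries of the announce array strictly one at a time (function \Call{Coordinate}{}). Invariant $\mathcal{I}_1$: for every thread $p$ and at every point in $\alpha$, $L\_count_p$ is at most the number of active $p$-elements in $L$, with the single exception of the interval between a \LDeleteMin\ setting the DELMIN bit of $head$'s next pointer while the first element is a $p$-element and the coordinator executing the matching decrement of $L\_count_p$ in \Call{Execute-Announced-DeleteMin}{}. Invariant $\mathcal{I}_2$: for every $p$ and at every point, if $L\_count_p < 2$ then $H_p$ is empty, with the single exception that the coordinator is \emph{mid-restoration} for $p$, i.e. it is inside \Call{Execute-Announced-DeleteMin}{} for a request tagged with $p$, past the decrement of $L\_count_p$, and has not yet either raised $L\_count_p$ to a value $\ge 2$ or observed \WDeleteMin\ return EMPTY on $H_p$. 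From coordinator seriality it follows that at any point where \LDeleteMin\ is invoked neither exception is active for any thread, since the previous \LDeleteMin\ (if any), together with its decrement and restoration, has already completed.

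First I would enumerate the four program points that write $L\_count_p$: the increment on the slower path of \Insert, the increment in \Call{Help-Upsert}{}, the increment in the restoration part of \Call{Execute-Announced-DeleteMin}{}, and the decrement in \Call{Execute-Announced-DeleteMin}{}; only the last is a decrement and it is performed by the coordinator alone. Pairing each modification of $L$ with its counter write, and using Lemma~\ref{lem:hp} for the atomicity of all accesses to $H_p$ and $lead\_largest_p$ and for the fact that only $p$ and the current coordinator ever touch $H_p$, I would check that every step preserves $\mathcal{I}_1$ and that the only violation window is the named one; the spot needing care is that \LInsert\ and \LDelete\ are not bracketed atomically with any counter update, and that the \LInsert/\LDelete\ pair executed on the slowest path of \Insert\ is counter-neutral while momentarily raising the active count by one, which is on the safe side for $\mathcal{I}_1$. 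Next I would enumerate the three program points that can turn $H_p$ from empty to non-empty: the top-level fast-path \WInsert\ of \Insert\ (taken only when $H_p$ is already non-empty), and the fast-path and slowest-path \WInsert\ calls inside the \texttt{CNTR\_MAX} branch of \Insert\ (taken only when $L\_count_p$ equals \texttt{CNTR\_MAX}, which is at least $2$ since \texttt{CNTR\_MAX} exceeds \texttt{CNTR\_MIN} and the latter is at least $2$); none of the three can create a state with $H_p$ non-empty and $L\_count_p < 2$.

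Then I would verify $\mathcal{I}_2$ step by step. The only step that can decrease $L\_count_p$ is the coordinator's decrement, and at that moment the coordinator is, by seriality, not mid-restoration for $p$, so by the inductive hypothesis its pre-state satisfies ``$L\_count_p < 2$ implies $H_p$ empty'' without exception; I would then show the subsequent restoration loop either raises $L\_count_p$ back to $\ge 2$ or terminates on \WDeleteMin\ returning EMPTY (so that $H_p$ is then empty), which discharges the exception correctly. The delicate sub-case is the interleaving of this restoration loop with a concurrent \Insert\ or \Call{Help-Upsert}{} run by $p$: if the coordinator's \CAS\ on $l_p$ fails it spins on $l_p$ while repeatedly re-checking $L\_count_p$, and meanwhile $p$ --- seeing $L\_count_p < 2$, hence that $L\_count_p$ is not equal to \texttt{CNTR\_MAX}, and (by $\mathcal{I}_2$ applied to the pre-state) $H_p$ empty --- is forced onto the slower path, which inserts into $L$ and performs the increment, so the coordinator's re-check eventually succeeds. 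One must also note that $L\_count_p$ may transiently dip to $-1$ in the window where a pending-increment $p$-element is removed by \LDeleteMin, and confirm that this window, like the one in $\mathcal{I}_1$, never contains a \LDeleteMin\ invocation.

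Finally, I would combine the invariants at a point $t$ where \LDeleteMin\ is invoked with $H_p$ non-empty: since no exception is active at $t$, $\mathcal{I}_2$ yields $L\_count_p \ge 2$, and then $\mathcal{I}_1$ yields that the number of active $p$-elements in $L$ at $t$ is at least $L\_count_p$, i.e. at least $2$ --- exactly the claim. I expect the main obstacle to be this interleaving analysis of the coordinator's restoration loop against $p$'s own heap operations, namely showing that the ``mid-restoration'' exception in $\mathcal{I}_2$ is always eventually discharged and never lets $H_p$ become non-empty while $L\_count_p$ is still below $2$; this is exactly where the coupling between $\mathcal{I}_2$ and coordinator seriality carries the weight.
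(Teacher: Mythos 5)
Your decomposition into $\mathcal{I}_1$ and $\mathcal{I}_2$ with exception windows, closed off by coordinator seriality, is a formalized version of the paper's own (much more informal) argument: both proofs hinge on the same three observations, namely that $L\_count_p$ is touched only by $p$ and the current coordinator, that $p$'s own actions never decrement it, and that the coordinator's restoration phase must complete before the next \LDeleteMin\ is invoked. Your explicit enumeration of the counter-write sites and of the three program points that can make $H_p$ non-empty is also correct, and your observation that none of the latter can fire while $L\_count_p < 2$ is exactly the load-bearing fact.

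There is, however, a concrete hole in your ``delicate sub-case'' for $\mathcal{I}_2$. You claim that while the coordinator is mid-restoration for $p$ and $p$ holds $l_p$, thread $p$, seeing $L\_count_p < 2$ and ``(by $\mathcal{I}_2$ applied to the pre-state) $H_p$ empty,'' is ``forced onto the slower path.'' This does not follow: if the ``pre-state'' is before the coordinator's decrement then $L\_count_p$ was $\ge 2$ there and $\mathcal{I}_2$ gives no information about $H_p$; if it is after the decrement, that is exactly the window where $\mathcal{I}_2$'s exception is in force, so $\mathcal{I}_2$ is suspended. And the problematic case really does occur: when the decrement lands $L\_count_p$ at $1$ with $H_p$ non-empty, $p$ may be running a \Insert\ whose key is no smaller than the current min of $H_p$, in which case $p$ takes the top-level fast path, writes only into $H_p$, and never touches $L$ or $L\_count_p$ --- so the coordinator's counter re-check never succeeds. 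The restoration is still discharged, but for a different reason: $p$ eventually releases $l_p$, the coordinator acquires it, and, since $H_p$ is non-empty, \WDeleteMin\ returns a real element that the coordinator itself upserts. Your separate, correct claim that $H_p$ cannot go from empty to non-empty while $L\_count_p < 2$ is what prevents the exception from becoming ``stuck,'' but it does not yield the ``forced onto the slower path'' assertion as you stated it; that step needs to be replaced with the eventual-release argument above.
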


\begin{proof}
Recall that $L\_count_p$ is an atomic fetch-and-add object. 
At most two processes may access $L\_count_p$ at each point in time, namely $p$ (when it inserts to the leader-level) and
the current coordinator (which may or may not be $p$).

During a \Insert, 
$p$ either increments  $L\_count_p$ (if it follows the slower path and inserts the new key in $L$), 
or leaves $L\_count_p$ unchanged (if it follows the fast path since it does not access $L$, or if it follows the slowest path since after it inserts an element in $L$, it also deletes another one from it,
thus eliminating the need for updating $L\_count_p$).

Assume now that some thread $q$ acts as the coordinator and removes a $p$-element from $L$ (and thus decrements $L\_count_p$).
PIPQ ensures that if the new value of $L\_count_p$ (after decrementing it) is less than $2$, then $q$ will not complete the \LDeleteMin\ until the highest priority element of $H_p$ has been moved into $L$ (unless $H_p$ is empty).
Thus, either $q$ will succeed at acquiring lock $l_p$, and perform a \WDeleteMin\ on $H_p$, and then insert the returned element to the leader-level with \LInsert, or thread $p$ itself (which is thus waiting for its delete-min operation to complete) will perform \op{help\_upsert} and move an element up on behalf of $q$.
If $q$ moves the element up, then it increments $L\_count_p$, else $p$ does and $q$ sees the increment and returns.
Thus, $q$ may cause $L\_count_p$ to have a value lower than $2$ only while it serves a delete-min request of $p$, and will ensure the value is incremented back to be at least $2$ (assuming $H_p$ is non-empty) before completing the operation.
Since combining ensures that there is only one thread executing \LDeleteMin\ at each point in time, the claim follows.
\end{proof}

$L$ is implemented in a way that respects certain properties 
proved for the Lind\'{e}n-Jonsson {\em et al.} algorithm in~\cite{linden}
and the Harris algorithm in~\cite{harris}. 
Recall that the only differences between our various search functions (\Search, \SearchDelete, and \SearchPhysDel) and that of Harris account for the different types of marking used by PIPQ, and additionally for the latter two functions, what exactly is being searched for (a certain key, or a specific node via passing a pointer to it).
Such differences cannot jeopardize any of the properties of the Harris' \Search.

\begin{lemma}[{\bf \cite{harris}, Sections 4.1 and 5.1}]
\label{lem:search properties}
Consider any instance $I$ of \Search\ with key $k$. 
Let $l\_node$ and $r\_node$ be the pointers returned by $I$; call {\em left node} and {\em right node} the nodes pointed to by these pointers, respectively.
Then, there is some point $t$ during the invocation and the response of $I$, at which all the following conditions hold:
\begin{enumerate}
\item the left node and the right node are nodes of $L$ at $t$,
\item the key of the left node must be less than $k$ (unless $k$ is the highest priority in $L$, in which case the left node may be a logically deleted node whose key is greater than or equal to $k$) and the key of the right node must be greater than or equal to $k$,  
\item neither node is marked as moved and the right node is not logically deleted, and
\item the right node must be the immediate successor of the left node in $L$. 
\end{enumerate}
\end{lemma}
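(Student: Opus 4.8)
The statement is, up to the two‑bit marking scheme, the classical search‑correctness lemma of Harris (\cite{harris}, Sections~4.1 and~5.1) together with the prefix‑of‑logically‑deleted‑nodes idea borrowed from Lind\'{e}n--Jonsson~\cite{linden}, so the plan is to reproduce Harris' argument and then audit every place where the DELMIN/MOVING distinction, the \LDeleteMin\ batched physical deletion, or the \LDelete\ marking could alter it. First I would state and prove, by induction on the iterations of the inner \texttt{repeat} loop (lines~\ref{line:repeat_clause}--\ref{line:until_clause}), the loop invariant: at the end of each iteration $x$ has been reached from $head$ along a chain of pointers each read with neither bit set or with only the DELMIN bit set; $l\_node$ is the last visited node whose own next pointer was read without the MOVING bit (hence $l\_node$ was not moved at that read) and $l\_node\_next$ is that value with the DELMIN bit stripped; and $prev\_log\_del$ records whether the pointer leading to the current $x$ carried the DELMIN bit. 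The loop‑exit condition at Line~\ref{line:until_clause} then hands us the node $r\_node$ (Line~\ref{line:set_right}) with key $\geq k$, whose own next pointer was just read without the MOVING bit, and which is not logically deleted because the pointer reaching it did not carry DELMIN ($prev\_log\_del$ is false). The residual case where $k$ is the global minimum and the successor of $head$ is a retained logically deleted node (produced by the reset at lines~\ref{line:delmin_offset}--\ref{line:delmin_offset_2} of \LDeleteMin) is exactly what the parenthetical exception in part~2 allows, and I would dispatch it separately.

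Next I would identify the instant $t$ and establish adjacency (part~4) and the remaining per‑node properties. There are two return sites. If $l\_node\_next = r\_node$ at Line~\ref{line:search_check}, the two nodes were adjacent when $l\_node$'s next pointer was last read; I would take $t$ to be that read, reinforced by the re‑reads guarded by Line~\ref{line:search_check} showing $l\_node$'s next pointer has no DELMIN bit (so $r\_node$ is not logically deleted) and $r\_node$'s next pointer has no MOVING bit (so $r\_node$ is not moved). If the nodes are not adjacent, the CAS at Line~\ref{line:search_cas} succeeds only if $l\_node$'s next pointer still equals the clean pointer $l\_node\_next$, and I would take $t$ to be the instant of that successful CAS: immediately after it, $l\_node$'s next pointer equals $r\_node$ with both bits clear, so $l\_node$ is active, $r\_node$ is not logically deleted, and the two are adjacent; the post‑CAS re‑check at Line~\ref{line:search_check_2} then certifies, by monotonicity of the MOVING bit, that $r\_node$'s next pointer carried no MOVING bit at $t$ as well. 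In both cases $l\_node$ is the last visited node with a non‑MOVING next pointer, keys are immutable and $L$ stays sorted, so $l\_node$'s key is less than $k$ (barring the minimum‑key exception), and the chain from $head$ through $l\_node$ to $r\_node$ is intact at $t$, giving parts~1--3. Each failed re‑check triggers a \texttt{continue}, so no incorrect triple is ever returned.

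\textbf{The main obstacle.} The real work is the classical Harris difficulty, now doubled: \Search\ reads $l\_node$'s and $r\_node$'s next pointers at (possibly) different real times, and we must exhibit one instant at which \emph{all four} conditions hold simultaneously and at which both nodes are still reachable from $head$. My plan is to prove separately that each of the two bits is monotone (once set on a given pointer word it is never cleared, and the node is eventually unlinked), so that a ``clean'' reading at a later re‑check certifies cleanliness at the earlier $t$; to invoke the sortedness invariant of $L$ (maintained because \LInsert\ only splices a node between a strictly‑smaller and a $\geq$ key, \LDeleteMin\ removes a prefix, and \LDelete\ removes an interior node); and, crucially, to use the interference bound established in Section~\ref{sec:leader-level-ll} — \texttt{CNTR\_MIN}~$\geq 2$ forces at any time at most one active \LDeleteMin\ and one active \LDelete, which never operate on the same node — to exclude the pathological scenario in which a single node is concurrently DELMIN‑marked on its predecessor and MOVING‑marked on itself. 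With that exclusion the two re‑checks in \Search\ suffice to pin down $t$, and Harris' reachability argument (a node whose incoming clean pointer is read and then confirmed cannot have been unlinked in between without the confirming read failing) carries over verbatim. A secondary, routine obstacle is verifying that skipping the logically‑deleted prefix via \texttt{get\_unmarked\_ref}, \texttt{get\_notlogdel\_ref}, and the $prev\_log\_del$ flag faithfully mimics Harris' ``ignore marked nodes'' traversal even though, unlike Harris, our DELMIN bit sits on the predecessor's pointer rather than on the node's own.
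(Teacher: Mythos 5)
Your proposal is correct and matches the paper's approach, which is simply to import this lemma from Harris~\cite{harris} (hence the citation in the lemma statement) and to observe, in the paragraph preceding it, that PIPQ's two-bit marking and prefix-of-logically-deleted-nodes modifications ``cannot jeopardize any of the properties of the Harris' \Search.'' You have supplied the detailed loop-invariant reconstruction of Harris' argument, the identification of the witness time $t$ at the two return sites, and the audit of where DELMIN versus MOVING marking matters, all of which the paper leaves implicit by delegating to the cited reference.
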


The next invariant states that $L$ is a list (i.e., it does not contain any cycles),
whose last element is the node pointed to by $tail$. Its proof is based on Lemma~\ref{lem:search properties}
and requires inspection of the steps where changes to the next pointers of the nodes of $L$ are performed
to argue that none of these steps violates the invariant (in a way similar as Insert and Delete 
does not corrupt the Harris' list in~\cite{harris} or DeleteMin does not
corrupt the Lind\'{e}n-Jonsson list in~\cite{linden}). 

\begin{invar}
\label{lem:prefix}
$L$ is a list whose last element is the sentinel node pointed to by $tail$.
\end{invar}

Recall that it is only possible for one \LDeleteMin\ and one \LDelete\ operation 
on $L$ to interfere at any single time, since there is only ever one active thread performing \LDeleteMin\ at a time (i.e., the coordinator), and only one thread performing \LDelete\ on a specific element at a time.
\LDeleteMin\ removes the element with the smallest key in $L$, 
whereas \LDelete\ by some process $p$ removes the $p$-element with the largest key in $L$, 
which is pointed to by $lead\_largest_p$ (by Lemma~\ref{lem:hp}(\ref{largest})).
If $H_p$ is non-empty, Lemma~\ref{lem:lcount} implies that there are at least two $p$-elements in $L$.
These imply that \LDeleteMin\ and \LDelete\ 
do not attempt to delete the same element of $L$ in this case. 
On the other hand, if \LDeleteMin\ removes a $p$-element causing $L\_count_p$ to be less than 2
after its completion, 
then it should be that $H_p$ is empty (by Lemma~\ref{lem:lcount}) and thus, it has no element to pull up to the leader-level.
In this case, a future \LDeleteMin\ will still never contend with a \LDelete\ because a thread $p$ will only perform \LDelete\ 
if $L\_count_p$ is equal to CNTR\_MAX, which must be at least 2.
Thus, in either case, a  \LDeleteMin\ and \LDelete\ (that are executing concurrently) never attempt to delete the same element, as stated in Lemma~\ref{lem:no-interference}. 
This further implies that a node will never be both logically deleted 
\texttt{and} moved according to the relevant marked pointers (Corollary~\ref{cor:marked-both}).

\begin{lemma}
\label{lem:no-interference}
An instance of \LDeleteMin\ and an instance of \LDelete\
that are concurrent never attempt to delete the same node in $L$.
\end{lemma}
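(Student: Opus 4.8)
The plan is to establish the lemma by a case analysis on the value of $L\_count_p$ at the moment when the concurrent \LDelete\ by $p$ is invoked, combined with the observation (already noted in the text preceding the lemma) that the only way a \LDeleteMin\ and a \LDelete\ can be concurrent is if both target $p$-elements for the same thread $p$: \LDeleteMin\ removes the globally smallest key, which is necessarily a $p$-element only if it happens to be tagged with $p$, while $p$'s \LDelete\ removes the $p$-element of largest key, which is the node pointed to by $lead\_largest_p$ (by Lemma~\ref{lem:hp}(\ref{largest}), strengthened through Lemma~\ref{lem:hp}(2)). So the real content is: when both operations are live and both touch $p$-elements, the ``smallest $p$-element'' and the ``largest $p$-element'' are genuinely distinct nodes.

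First I would fix the invocation time $\tau$ of $p$'s \LDelete. By the \Call{PIPQ-Insert}{} code (the slowest path, Line~\ref{line:pipq-ins-cntr-check} and the branch below it), \LDelete\ is only ever called when $L\_count_p = \texttt{CNTR\_MAX}$, and the text guarantees $\texttt{CNTR\_MAX} \geq \texttt{CNTR\_MIN} \geq 2$. Hence at $\tau$ there are at least two $p$-elements in $L$ (using the correspondence between $L\_count_p$ and the number of active $p$-elements established in the proof of Lemma~\ref{lem:lcount}), so the smallest $p$-element and the largest $p$-element are distinct nodes at $\tau$. Next I would argue this separation persists for the whole window of overlap: during $p$'s \LDelete\, $p$ holds $l_p$ (Lemma~\ref{lem:hp}(\ref{largest})), so no other operation — in particular no \op{help\_upsert} by $p$ and no promotion by the coordinator on $H_p$ — can insert or remove a $p$-element, and an \Insert\ by $p$ cannot be running concurrently either. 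The only concurrent modifier of the $p$-sublist is therefore the coordinator's \LDeleteMin, which can only strip $p$-elements off the low end; it can never make the current $lead\_largest_p$ node coincide with the node it is itself removing, because to reach that node it would have to have already removed the (at least one) smaller $p$-element sitting between them, and each \LDeleteMin\ removes exactly one element. A short induction on the number of \LDeleteMin\ steps that elapse during the overlap window closes this.

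The remaining subtlety, which I expect to be the main obstacle, is the boundary case flagged in the text: a \LDeleteMin\ that removes a $p$-element and drives $L\_count_p$ below $2$ afterwards. Here one cannot appeal to ``at least two $p$-elements'' at the moment of that removal. The argument is that in this case Lemma~\ref{lem:lcount} forces $H_p$ to have been empty when that \LDeleteMin\ was invoked, so no promotion from $H_p$ happens, and — crucially — no \emph{future} \LDelete\ by $p$ can be started while $H_p$ stays empty and $L\_count_p < \texttt{CNTR\_MAX}$, since \LDelete\ is gated on $L\_count_p = \texttt{CNTR\_MAX} \geq 2$. So after such a ``draining'' \LDeleteMin\ there is simply no concurrent \LDelete\ to interfere with until enough inserts have repopulated $L$ past \texttt{CNTR\_MAX}, at which point we are back in the generic case with $\geq 2$ $p$-elements. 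I would write this as the second case of the case split, citing Lemma~\ref{lem:lcount} for the emptiness of $H_p$ and the \Call{PIPQ-Insert}{} guard for the re-entry condition, and conclude that in neither case do the two operations ever mark or unlink the same node of $L$.
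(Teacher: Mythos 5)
Your proposal has the right skeleton — the gate $L\_count_p = \texttt{CNTR\_MAX} \geq 2$ at \LDelete\ invocation, the case split on whether $L\_count_p$ can drop below two, and the appeal to Lemma~\ref{lem:lcount} — and these are exactly the ingredients the paper uses. The angle differs slightly: the paper conditions on each \LDeleteMin\ invocation (via Lemma~\ref{lem:lcount}, every concurrent \LDeleteMin\ starts with $L\_count_p \geq 2$ whenever $H_p$ is non-empty, and a single \LDeleteMin\ removes only the list minimum, so it cannot be $lead\_largest_p$), whereas you condition on the \LDelete\ invocation time~$\tau$ and try to push separation forward through the overlap window.

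That ``push forward'' step is where there is a genuine gap. You justify persistence by saying that each \LDeleteMin\ removes exactly one element, so no single \LDeleteMin\ can both clear out the smaller $p$-element and then target $lead\_largest_p$, and then invoke ``a short induction on the number of \LDeleteMin\ steps.'' But that observation alone does not rule out the dangerous sequence: a \emph{chain} of \LDeleteMin\ instances, each removing one $p$-element, eventually leaving $lead\_largest_p$ as the global minimum, at which point the next \LDeleteMin\ would target it. Counting ``one at a time'' never closes that chain. What closes it is the throttling buried in the proof of Lemma~\ref{lem:lcount}: once a \LDeleteMin\ decrements $L\_count_p$ below two, the coordinator's \Call{Execute-Announced-DeleteMin}{} does not return (and hence no further \LDeleteMin\ is issued) until $L\_count_p$ is restored — and while $p$ holds $l_p$ during its \LDelete, the coordinator cannot promote from $H_p$, so it simply spins. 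Your argument needs to say this explicitly; without it, the induction hypothesis you'd want (``still $\geq 2$ $p$-elements before the next \LDeleteMin'') has no inductive step. The paper avoids the whole issue by applying Lemma~\ref{lem:lcount} afresh at each \LDeleteMin\ invocation, which is both shorter and sound. A secondary imprecision: in your boundary case you say Lemma~\ref{lem:lcount} ``forces $H_p$ to have been empty when that \LDeleteMin\ was invoked''; the statement of Lemma~\ref{lem:lcount} only constrains $L\_count_p$ at invocation time, not after a decrement — the inference that $H_p$ was empty comes from the ``coordinator does not complete until $L\_count_p \geq 2$ or $H_p$ is empty'' behavior in the proof of Lemma~\ref{lem:lcount}, not from its statement. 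Phrasing it as the paper does (``causing $L\_count_p$ to be less than 2 \emph{after its completion}'') makes the deduction correct.
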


\begin{corollary}
\label{cor:marked-both} 
A node in $L$ will never have both its DELMIN and its MOVING bit marked.
\end{corollary}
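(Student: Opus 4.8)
The plan is to recast Corollary~\ref{cor:marked-both} as the statement that \emph{no node of $L$ is ever simultaneously logically deleted and moved}, and then to \emph{extend} Lemma~\ref{lem:no-interference} from concurrent to arbitrary pairs of operations. (A node $v$ effectively carries a set DELMIN bit exactly when the next pointer of its predecessor in $L$ is DELMIN-marked, i.e.\ $v$ is logically deleted, and carries a set MOVING bit exactly when its own next pointer is MOVING-marked, i.e.\ $v$ is moved; the two sentinels are never moved, so they are trivially fine.) I would first record, by code inspection, that: (i) the DELMIN bit is set only inside \LDeleteMin\ (Lines~\ref{line:delmin_fao} and~\ref{line:delmin_offset_2}), as it removes the current minimum, and the MOVING bit only inside \LDelete\ (Line~\ref{line:ins_move_cas}), as it removes the $p$-element of largest key --- the node that $lead\_largest_p$ points to, by Lemma~\ref{lem:hp}; (ii) once set on a node that is still in $L$, neither bit is ever cleared, because every CAS that rewrites a node's next pointer to a value lacking a bit uses the bit-free pointer as its expected value and so fails while the bit is present; and (iii) the operation that removes a node physically unlinks it --- \LDelete\ within the same invocation (the CAS at Line~\ref{line:ins_move_phys}, or else \SearchPhysDel\ at Line~\ref{line:ins_move_phys2}) and \LDeleteMin\ through its batched unlinking (Lines~\ref{line:delmin_offset}--\ref{line:delmin_offset_2}).

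Next I would argue by contradiction. Suppose some node $v$ becomes logically deleted by an instance $D$ of \LDeleteMin\ and moved by an instance $M$ of \LDelete\ of a thread $p$. Both $D$ and $M$ remove $v$ from $L$, so by Lemma~\ref{lem:no-interference} they cannot be concurrent: one of them removes $v$ strictly before the other begins. If $M$ comes first, then by the time $D$ runs $v$ has already been physically unlinked and is no longer in $L$; since $D$ locates the predecessor of the minimum by traversing $L$ forward from $head$, it never reaches $v$ and cannot logically delete it --- contradiction. If $D$ comes first, then (using that logically deleted nodes form a prefix of $L$, and that, by fact (ii), neither the DELMIN bit nor the edge between $v$'s predecessor and $v$ can change) $v$ remains logically deleted --- and sitting in that prefix --- at all times until it is batch-unlinked; in particular $v$ is logically deleted when $M$ reads $lead\_largest_p$ and again when $M$ marks $v$. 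But $M$ runs only on the slowest path of \Insert, which $p$ takes only while holding $l_p$ and only when $L\_count_p = \mathtt{CNTR\_MAX} \ge 2$; combined with Lemma~\ref{lem:lcount} and the fact that \LInsert\ always links a new $p$-element of the current maximum key strictly before all existing ones (so $lead\_largest_p$ is the \emph{last} $p$-element of largest key in list order), this forces at least two active $p$-elements at or before $v$ in $L$. That is impossible if $v$ itself is logically deleted, since then everything at or before $v$ in $L$ is logically deleted too, i.e.\ $L\_count_p = 0$. Contradiction; hence no such $v$, and the corollary follows.

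The step I expect to be the main obstacle is the ``$D$ comes first'' case, specifically making rigorous the claim that a logically deleted node cannot be $lead\_largest_p$ while $p$ is allowed to run \LDelete. This needs (a) the structural sub-invariant that \LInsert\ inserts each new $p$-element of the current maximum key strictly before the existing ones, so $lead\_largest_p$ is the last such node in list order; and (b) a careful accounting of $L\_count_p$ against the number of active $p$-elements across the asynchronous interleaving of the fetch-and-adds with the list modifications (the slower path does an \LInsert\ then an increment, \LDeleteMin\ does a removal then a decrement, the slowest path does an \LInsert\ and an \LDelete\ with no net change), together with the observation that, as soon as a coordinator decrement would drop $L\_count_p$ below $2$, the coordinator blocks on the worker-heap lock held by $p$ while still holding $coord\_lock$, so no further \LDeleteMin\ runs and $L\_count_p$ stays $\ge 1$ throughout $p$'s \Insert. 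Everything else --- the ``set only here / never cleared'' observations, the prefix structure (following from Invariant~\ref{lem:prefix} and the form of \LDeleteMin), and the reduction to Lemma~\ref{lem:no-interference} --- is routine code inspection; indeed the paper can probably dispatch the corollary in a sentence or two by appealing to Lemma~\ref{lem:no-interference} and the same $L\_count_p \ge 2$ reasoning that already underlies it.
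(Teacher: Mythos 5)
Your proof follows essentially the same route as the paper: both hinge on Lemma~\ref{lem:no-interference} together with the $L\_count_p \ge 2$ bookkeeping (and the prefix/suffix structure of $L$ from Invariant~\ref{inv:prefix-suffix}) to rule out that a single node is ever the target of both a \LDeleteMin\ and a \LDelete. The paper dispatches the corollary in one sentence as ``further implied'' by Lemma~\ref{lem:no-interference}; you correctly notice that the lemma as stated covers only concurrent instances and explicitly supply the non-concurrent cases (M-first via physical unlinking, D-first via the prefix argument), which is the same idea made more careful rather than a different approach. One small slip: \LInsert\ links a new $p$-element of maximal key \emph{after} (not ``strictly before'') the existing $p$-elements in list order, though the conclusion you draw --- that $lead\_largest_p$ is the last $p$-element in list order --- is right.
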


Invariant~\ref{inv:prefix-suffix} states that $L$ is comprised of two parts, 
a prefix that contains all logically-deleted nodes and a suffix 
of non logically-deleted nodes that is sorted.
Lemma~\ref{lem:no-interference} and Corollary~\ref{cor:marked-both} allows us to argue that PIPQ maintains 
the properties of the Lind\'{e}n-Jonsson {\em et al.} algorithm. 
The fact that a \LDelete\ and a \LDeleteMin\ operation never delete the same node (Lemma~\ref{lem:no-interference})
also implies that \LDelete\ respects the properties of Delete from the Harris algorithm. 
These and Lemma~\ref{lem:search properties} allows us to prove the invariant.
The proof of part~\ref{prefix} use arguments from the proof of the invariant that appears in Section $5$ in~\cite{linden}.
The rest of the parts mostly come from arguments needed to prove that the Harris' algorithm is linearizable.


\begin{invar}
\label{inv:prefix-suffix}
The following claims hold:
\begin{enumerate}
\item \label{prefix} The logically deleted nodes in $L$ form a prefix of the list;
\item \label{suffix} The suffix of $L$ that is comprised of nodes that are not logically deleted is sorted;
\end{enumerate}
\end{invar}

\begin{corollary}
The element with the highest priority in $L$ is its leftomost (first) active node.
\end{corollary}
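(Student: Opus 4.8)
The plan is to read this off almost immediately from Invariants~\ref{lem:prefix} and~\ref{inv:prefix-suffix}. Fix any point $t$ in the execution $\alpha$, and recall the convention that the \emph{elements} of $L$ are the non-sentinel nodes reachable from $head$ at $t$, with priority determined by the key (a smaller key meaning higher priority). The statement to establish is that, whenever $L$ contains at least one active element at $t$, the active element with the smallest key is exactly the first active node met when traversing $L$ from $head$.

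First I would observe that ``first active node'' is well defined: by Invariant~\ref{lem:prefix}, at $t$ the list $L$ is acyclic and terminates at the $tail$ sentinel, so the traversal from $head$ visits each node at most once and induces a total order on the nodes of $L$. Next, by Invariant~\ref{inv:prefix-suffix}(\ref{prefix}) the logically deleted nodes form a prefix of $L$, hence every node that is not logically deleted --- in particular every \emph{active} node --- lies in the complementary suffix $S$. By Invariant~\ref{inv:prefix-suffix}(\ref{suffix}), $S$ is sorted, i.e.\ reading the keys of the nodes of $S$ in traversal order yields a non-decreasing sequence (non-decreasing rather than strictly increasing because PIPQ admits duplicate keys). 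The active nodes of $L$ are precisely the nodes of $S$ whose MOVING bit is unset, so in traversal order they form a (not necessarily contiguous) subsequence of $S$; a subsequence of a non-decreasing sequence is non-decreasing, so the keys of the active nodes, read from $head$ onward, are non-decreasing.

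From this the claim follows: the first active node has a key that is $\le$ the key of every other active node, i.e.\ it has priority at least as high as every other active element of $L$; if several active elements tie for the smallest key, the first active node is one of them, which still validates the statement. (One can additionally note that this first active node is exactly what \LDeleteMin\ returns, which is how the corollary will be used when proving linearizability of \DeleteMin.)

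The proof is essentially a one-line consequence of the invariants, so there is no genuine obstacle. The one point that needs a moment of care --- and the only thing the write-up must pin down --- is the scope of ``the element with the highest priority in $L$'': logically deleted and ``moved'' nodes remain physically reachable from $head$ until they are unlinked, and in fact a ``moved'' node may sit, in list order, \emph{before} the first active node while carrying a key no larger than it; but such nodes are no longer elements of the priority queue, so restricting attention to active nodes (and, implicitly, excluding the $-\infty$ head sentinel) is exactly what makes the statement correct. Making this explicit, everything else is the sorted-subsequence observation above.
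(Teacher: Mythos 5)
Your proof is correct and matches the paper's intent: the corollary is stated without proof as an immediate consequence of Invariant~\ref{inv:prefix-suffix}, and your argument simply makes that inference explicit (prefix of logically deleted nodes, sorted non-deleted suffix, active nodes form a non-decreasing subsequence of that suffix). Your parenthetical care about "moved" nodes potentially preceding the first active node is a worthwhile clarification of what "highest priority in $L$" must mean, but it does not change the route taken.
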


We are now ready to assign linearization points to the operations supported by $L$.
We linearize a successful \LInsert\ at the point that it successfully executes the CAS
of line~\ref{line:ins_lin}. We linearize a \LDeleteMin\ that deletes a node $v$
at the point that it logically deletes $v$, i.e. at the atomic operation that sets the
DELMIN bit of the previous node to true. We linearize a \LDelete\ at the point
that it sets the MOVING bit of the node it deletes to true.  
Note that no \LInsert\ or \LDelete\ in $L$ may ever be unsuccessful
(since we support duplicates and delete specific elements that we know exist in the list).
An unsuccessful \LDeleteMin\ returns EMPTY and is linearized when the traversal of the list reaches the tail pointer (Line~\ref{line:delmin_empty1} in Algorithm~\ref{alg:l_del-min}). This only may happen if there are not any nodes in the list, or all nodes in the list are marked as logically deleted.
We argue that our implementation of $L$ is linearizable. 
The following lemma is an immediate consequence of how we assigned linearization points
to operations on $L$ (and how the algorithm works). 

\begin{lemma}
\label{lem:L-within}
The linearization point of each \LInsert, \LDeleteMin, or  \LDelete\ 
is within the execution interval of the operation.
\end{lemma}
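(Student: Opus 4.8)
The plan is to proceed by a case analysis over the four classes of operations on $L$ --- a successful \LInsert, a \LDeleteMin\ that removes a node, a \LDelete, and an unsuccessful \LDeleteMin\ --- and, for each, simply to identify the concrete atomic step in the pseudocode that was designated as the linearization point and observe that this step is executed by the thread running the operation (the coordinator, in the case of \LDeleteMin) strictly between the operation's invocation and its response. Since the chosen linearization points are all code lines of the operations themselves, the statement is almost immediate; the only thing that genuinely needs to be checked is that none of these particular steps can be performed by a helper on behalf of the operation.

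First I would handle \LInsert. By definition it is linearized at the successful \CAS\ of line~\ref{line:ins_lin} in Algorithm~\ref{alg:l-insert}. This \CAS\ is reached only after \LInsert\ has been invoked and its \Search\ subroutine has returned, and upon success the operation returns on the next line; hence the \CAS\ lies inside the execution interval. Since \LInsert\ supports duplicates, every \LInsert\ eventually succeeds, so each such operation has exactly one such step. For a \LDeleteMin\ that deletes a node $v$, the linearization point is the fetch-and-or of Line~\ref{line:delmin_fao} of Algorithm~\ref{alg:l_del-min} that sets the DELMIN bit of $v$'s predecessor; this executes inside the traversal loop after the operation begins and before it reads and returns the triple $(x.key, x.val, x.tid)$, so again it is within the interval. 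For \LDelete, the point is the \CAS\ of Line~\ref{line:ins_move_cas} of Algorithm~\ref{alg:l-delmaxp} that sets the MOVING bit of the node being removed; it is executed within the outer repeat loop, the loop is exited immediately afterward, and the operation then returns $(r\_node.key, r\_node.val)$ --- so it falls between invocation and response. Finally, an unsuccessful \LDeleteMin\ is linearized at the read of the next pointer that yields $tail$ in Line~\ref{line:delmin_empty1}, which happens immediately before the EMPTY return and is therefore inside the interval.

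The one subtlety --- and the closest thing to an obstacle, though a mild one --- is to confirm that each of these steps is performed by the operation itself rather than by a concurrent helper. Unlike the physical unlinking of nodes, which may be carried out by a concurrent \Search, \SearchDelete, or \SearchPhysDel\ and would therefore be a poor choice of linearization point, the logical-deletion mark of \LDeleteMin\ is set only by the unique thread currently acting as coordinator (combining guarantees at most one active \LDeleteMin), and, by Lemma~\ref{lem:no-interference} together with the structural fact that a thread $p$ only ever moves a $p$-element, the MOVING mark of a given node is set only by the single thread moving that element; the \CAS\ of line~\ref{line:ins_lin} is likewise the inserting thread's own step. Consequently each linearization point is an atomic event executed by the operation in question, which immediately yields the lemma.
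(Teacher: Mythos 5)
Your proposal is correct and matches the paper's reasoning: the paper treats this lemma as an immediate consequence of the fact that each linearization point is a concrete atomic step (the \CAS\ of line~\ref{line:ins_lin}, the fetch-and-or of line~\ref{line:delmin_fao}, the \CAS\ of line~\ref{line:ins_move_cas}, or the read at line~\ref{line:delmin_empty1}) executed by the operation itself between its invocation and its response, and offers no further proof. Your explicit case analysis, together with the observation that these particular steps are never performed by a helping thread (unlike physical unlinking), simply spells out what the paper leaves implicit.
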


Let $\ell_{L}$ be the linearization order we derive by the linearization points of the \LInsert, \LDeleteMin, and \LDelete\ operations (of $\alpha$). 
Let $\sigma_{L}$ be the sequential execution that we get by applying the operations of $\ell_{L}$ sequentially
starting from an empty list. 
We say that an operation $op$ on $L$ (in $\alpha$) is {\em consistent} 
with respect to $\ell_{L}$, when the response value of $op$
is the same as that of the corresponding operation in $\sigma_{L}$.

By Lemma~\ref{lem:no-interference}, 
concurrent instances of \LDeleteMin\ and \LDelete\ do not attempt to delete the same node.
By inspection of the pseudocode we see that 
\LDeleteMin\ marks the next field of the last logically deleted node identifying
that its next node, $v$, is now logically-deleted, and returns the key, value and tid of $v$.
Lemma~\ref{lem:no-interference} and Invariant~\ref{inv:prefix-suffix} imply that this is indeed 
the active element with the smallest key in $L$. On the contrary, \LDelete\ always operates
on the non logically-deleted part of the list, i.e. it operates on the sorted suffix of the list
which contains the non logically-deleted nodes.
Lemma~\ref{lem:no-interference} and the facts
that a) \Search, \LInsert\ and \LDelete\ have the same properties as Search, Insert and Delete
in Harris' implementation and b) the way we assign linearization points is similar to that in~\cite{harris},
imply that \LDelete\ is consistent with respect to $\ell_{L}$. Lemma~\ref{lem:L-consistency}
formalizes these claims.

\begin{lemma}
\label{lem:L-consistency}
The response of a \LDeleteMin\ or a \LDelete\ operation is consistent with respect to $\ell_{L}$. 
\end{lemma}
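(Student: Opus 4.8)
The plan is to prove Lemma~\ref{lem:L-consistency} via a \emph{representation invariant} relating the concrete list to the abstract sequential state, followed by a case check that each linearization point performs the expected sequential update. For a configuration $C$ of $\alpha$, let $A(C)$ be the multiset of $(key,val,tid)$ triples stored in the \emph{active} nodes of $L$ at $C$ (active meaning neither logically deleted nor moved, as defined before Lemma~\ref{lem:lcount}), and let $S(C)$ be the state of $\sigma_{L}$ obtained by applying, in $\ell_{L}$ order, every operation whose linearization point precedes $C$. First I would prove $A(C) = S(C)$ for all $C$ by induction over the steps of $\alpha$; this uses Invariant~\ref{lem:prefix} (so that ``the active nodes of $L$'' is well defined and every active node is reachable from $head$) and Invariant~\ref{inv:prefix-suffix} (so the active nodes form a sorted suffix preceded by a prefix of logically deleted nodes). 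The heart of the induction is showing that the \emph{only} steps that change $A$ are the three events chosen as linearization points: the successful CAS of line~\ref{line:ins_lin} in \LInsert, which makes a fresh active node reachable and so adds its triple to $A$; the atomic step of \LDeleteMin\ that sets the DELMIN bit of the last logically-deleted node (or of $head$), which turns its successor into a logically-deleted node and removes that triple from $A$; and the atomic step of \LDelete\ that sets the MOVING bit of its target node, which turns it into a moved node and removes its triple from $A$. Every other next-pointer write --- the ``helping'' CASes inside \Search, \SearchDelete\ and \SearchPhysDel, the physical unlink in \LDelete\ (lines~\ref{line:ins_move_phys}--\ref{line:ins_move_phys2}), and the batch unlink of the logically-deleted prefix in \LDeleteMin\ --- only splices out nodes that are already marked, hence inactive, and by Lemma~\ref{lem:search properties} together with Invariant~\ref{lem:prefix} it preserves both $A$ and the prefix/suffix structure.

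Given the invariant, the response claims follow by inspecting each operation at its linearization point $t$, where, by $A$--$S$ agreement, the abstract state seen by $\sigma_{L}$ equals the set of active nodes. A \LDeleteMin\ that deletes node $v$ logically deletes the successor of the last logically-deleted node, i.e.\ the first active node; by Invariant~\ref{inv:prefix-suffix}(\ref{suffix}) and the corollary that the highest-priority element of $L$ is its first active node, $v$ carries the smallest key in $A(t) = S(t)$, which is exactly what sequential \LDeleteMin\ returns, and the implementation returns $v.key, v.val, v.tid$ (the data of the node it just logically deleted); so it is consistent. An unsuccessful \LDeleteMin\ is linearized when its traversal reaches $tail$ (Line~\ref{line:delmin_empty1}); at that point every node strictly between $head$ and $tail$ has been walked through and found logically deleted, so by Invariant~\ref{inv:prefix-suffix}(\ref{prefix}) there is no active node, i.e.\ $A(t) = S(t) = \emptyset$, and the sequential \LDeleteMin\ also returns EMPTY.

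For \LDelete\ invoked by a thread $p$: it sets the MOVING bit of the node pointed to by $lead\_largest_p$ and returns that node's $key$ and $val$. By Lemma~\ref{lem:hp}(\ref{largest}) this node is the $p$-element with the largest key among all $p$-elements in $L$; since \LDelete\ is invoked only when $L\_count_p = $ \texttt{CNTR\_MAX} $\geq 2$, Lemma~\ref{lem:lcount} gives at least two active $p$-elements, and by Lemma~\ref{lem:no-interference} and Corollary~\ref{cor:marked-both} no concurrent \LDeleteMin\ marks (logically deletes) this particular node, so it is still active when its MOVING bit is set. Hence at the linearization point $S(t) = A(t)$ contains this node as its largest-key $p$-element, the sequential \LDelete\ removes and returns precisely it, and the implementation returns the same $(key,val)$; so \LDelete\ is consistent as well. (Here I also use that $lead\_largest_p$ is accessed atomically and really points to the maximal $p$-element --- Lemma~\ref{lem:hp} --- and that, as already noted, \Search, \LInsert\ and \LDelete\ inherit the Harris properties.)

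I expect the main obstacle to be the inductive proof of $A(C) = S(C)$: exhaustively arguing that no helping CAS in a search function, no prefix unlink in \LDeleteMin, and no physical unlink in \LDelete\ ever alters the multiset of active reachable elements, which demands a per-case analysis of every next-pointer write showing it either coincides with a chosen linearization event or merely bypasses already-marked nodes while keeping all active nodes reachable and the prefix/suffix shape of Invariant~\ref{inv:prefix-suffix} intact. A secondary subtlety is confirming that at a \LDeleteMin's linearization point the node whose DELMIN bit is set is genuinely the last logically-deleted node --- so that its successor is the \emph{first} active node, hence the minimum --- which requires tracking that the traversal in Algorithm~\ref{alg:l_del-min} advances only past nodes it has verified to be logically deleted and halts at the first one that is not.
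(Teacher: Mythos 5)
Your proposal is correct and, at the level of ideas, rests on exactly the same pillars the paper uses: Lemma~\ref{lem:no-interference}, Invariant~\ref{inv:prefix-suffix}, and the inheritance of Harris' properties by the search/insert/delete helpers. Where you differ is in form: the paper's argument is a sketch that delegates most of the work to the cited correctness properties of Harris and Lind\'{e}n--Jonsson and to ``the way we assign linearization points is similar to that in~\cite{harris}''; you instead make this concrete by introducing an explicit representation invariant (what you call $A(C)=S(C)$, i.e., the multiset of active nodes equals the abstract sequential state) and proving it by induction over the execution, classifying every next-pointer write as either a chosen linearization event (which updates $A$ and $S$ identically) or a helping/unlink step that touches only already-marked nodes (preserving $A$). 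This is the standard simulation-relation style of linearizability proof, and it buys you a self-contained argument rather than one that bottoms out in a citation; correspondingly it also surfaces the real proof obligations (the case analysis of every CAS in \Search, \SearchDelete, \SearchPhysDel, the batch prefix unlink in \LDeleteMin, and the physical unlink in \LDelete) that the paper leaves to the reader. Your treatment of the unsuccessful \LDeleteMin\ and of the no-interference justification that \LDelete's target is still active at its linearization point both match the paper's intent. One small slip: when arguing that the node marked by \LDelete\ is the largest-key $p$-element, you cite Lemma~\ref{lem:hp}(\ref{largest}); the relevant part is actually Lemma~\ref{lem:hp}, item~2 (the clause about $lead\_largest_p$ always pointing to the $p$-element with the largest key), while item~\ref{largest} is about the priority relation between the element inserted by an \LInsert\ and the contents of $H_p$. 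The underlying claim you need is true, just cited to the wrong sub-item.
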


Lemmas~\ref{lem:L-within} and~\ref{lem:L-consistency}
imply that the implementation of $L$  in PIPQ is linearizable.

\begin{corollary}
\label{cor:L-lin}
The implementation of $L$ is linearizable.
\end{corollary}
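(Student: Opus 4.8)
The plan is to assemble Corollary~\ref{cor:L-lin} directly from the two preceding lemmas, which already carry the technical weight. Recall that an implementation is linearizable if, for the given execution $\alpha$, there is a legal sequential execution $\sigma$ that is equivalent to $\alpha$ (same operations, same response values) and that respects the real-time (precedence) order of $\alpha$. I will take $\sigma_L$, exactly as already defined in the excerpt (apply the operations of $\ell_L$ in order starting from the list with just the two sentinels), as the witness.

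First I would argue that $\ell_L$ is a genuine total order on the operations of $\alpha$ that extends its precedence order. Every linearization point assigned above is a single atomic step — the successful CAS of Line~\ref{line:ins_lin} for \LInsert, the atomic step that sets the DELMIN bit of the previous node for \LDeleteMin, the atomic step that sets the MOVING bit for \LDelete, and the read of the tail pointer for an unsuccessful \LDeleteMin\ — so distinct operations get distinct linearization points and $\ell_L$ is well defined. By Lemma~\ref{lem:L-within}, each operation's linearization point falls within its execution interval; hence if $op_1$ completes before $op_2$ is invoked, then $op_1$'s linearization point precedes $op_2$'s invocation, which precedes $op_2$'s linearization point, so $op_1 <_{\ell_L} op_2$. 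Thus $\ell_L$ (and therefore $\sigma_L$) respects real-time order.

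Next I would verify that $\sigma_L$ is equivalent to $\alpha$, i.e., that every operation returns the same value in both. For \LInsert\ this is immediate: as noted in the excerpt, \LInsert\ never fails (duplicates are supported) and has no meaningful return value, so it is trivially consistent with $\ell_L$. For \LDeleteMin\ and \LDelete\ this is precisely Lemma~\ref{lem:L-consistency} — including the unsuccessful \LDeleteMin\ returning EMPTY, whose linearization point is chosen so that, by Invariant~\ref{inv:prefix-suffix}, it fires exactly when $L$ has no active node at that moment, matching the sequential semantics. Combining the two lemmas yields that $\sigma_L$ is a legal sequential execution equivalent to $\alpha$ and consistent with precedence order, which is the definition of $L$ being linearizable.

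The only point that requires a moment of care — and it is minor here — is that the three operation types receive their linearization points by separate rules, so one must confirm that interleaving them into the single order $\ell_L$ still produces the correct sequential behavior for each. This is exactly what Lemmas~\ref{lem:no-interference}, \ref{lem:L-within}, and \ref{lem:L-consistency} jointly guarantee: the non-interference lemma rules out the one dangerous interaction (a concurrent \LDeleteMin\ and \LDelete\ targeting the same node), \LDeleteMin\ always operates on the logically-deleted prefix boundary while \LDelete\ operates on the sorted non-logically-deleted suffix, and consistency of responses has been established for each in isolation against the common order $\ell_L$. With those in hand, the corollary reduces to the bookkeeping assembled above.
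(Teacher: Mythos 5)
Your proposal is correct and takes essentially the same route as the paper, which simply asserts that Lemmas~\ref{lem:L-within} and~\ref{lem:L-consistency} together imply linearizability of $L$; you merely unpack that one-line inference into the standard argument (real-time order via Lemma~\ref{lem:L-within}, response equivalence via Lemma~\ref{lem:L-consistency}, with \LInsert\ trivially consistent because it never fails). The extra paragraph about the three separate linearization rules is a useful sanity check but not a departure from the paper's reasoning.
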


\noindent
{\bf Linearizability of PIPQ.}
The linearization points of \Insert\ and \DeleteMin\ operations in PIPQ, follow
naturally from the linearization points of the operations on $L$ and $H_p$. 
By Lemma~\ref{lem:hp}(\ref{insert-once}), 
every key that appears as an argument of some \Insert\ operation invoked by $p$ (in $\alpha$) is inserted
either in $L$ or in $H_p$ (not in both).
The linearization point of an \Insert\ operation $op$ by some thread $p$ in PIPQ 
depends on whether the element is inserted in $H_p$ or in $L$. 
In the first case, $op$ is linearized at the point that $p$ aquires the coarse grain lock $l_p$ of $H_p$.
In the case that $op$ inserts the new key in $L$, the linearization point of $op$ is the same as that 
of the \LInsert\ that inserts the key in $L$.
Consider now a \DeleteMin\ operation $op'$ invoked by $p$. By the way PIPQ works, 
there is a single \LDeleteMin\ that is invoked by a combiner $q$ to serve $op'$
and its execution interval is contained in the execution interval of \DeleteMin.
The linearization point of $op'$ is placed at the same place as the linearization point
of this \LDeleteMin.

\begin{lemma}
\label{lem:PIPQ-within}
The linearization point of each \Insert\ and \DeleteMin\ operation $op$ of PIPQ is within the execution interval of $op$.
\end{lemma}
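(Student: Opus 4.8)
The plan is to prove the statement separately for \Insert\ operations and for \DeleteMin\ operations, and, for \Insert, to split into two sub-cases dictated by Lemma~\ref{lem:hp}(\ref{insert-once}): whether the argument key is placed in $H_p$ or in $L$. In each case the linearization rule assigned earlier points either to an atomic step of the PIPQ operation itself (acquiring $l_p$) or to the linearization point of a nested operation on $L$; in the latter case we will lean on Lemma~\ref{lem:L-within}, which already guarantees that linearization points of \LInsert, \LDeleteMin, and \LDelete\ lie within the intervals of those operations, so it will suffice to show that the relevant nested operation's interval is contained in the interval of the PIPQ operation.

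For an \Insert\ by $p$ whose key ends up in $H_p$ (the fast path, or the two sub-cases of the slow path in Algorithm~\ref{alg:pipq-insert} that finish with a \Call{Worker-Insert}{} into $H_p$), the operation is linearized at the atomic step where $p$ successfully acquires $l_p$. Inspecting Algorithm~\ref{alg:pipq-insert}, the unique successful acquisition of $l_p$ is the winning \CAS\ in the \textbf{while} loop (line~\ref{line:pipq-ins-lock1}): it occurs strictly after the invocation of \Insert\ and, since $l_p$ is released only along the return path, strictly before the response, so the linearization point lies in the execution interval. For an \Insert\ by $p$ whose key ends up in $L$ (the slower path, or the slowest path), $p$ invokes a nested \LInsert\ that inserts the argument key, and the \Insert\ is linearized at the same point as this \LInsert, i.e.\ its successful \CAS\ of line~\ref{line:ins_lin}. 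The \LInsert's execution interval is contained in the \Insert's interval, and by Lemma~\ref{lem:L-within} its linearization point lies within the \LInsert's interval; hence it lies within the \Insert's interval. (On the slowest path the \Insert\ additionally performs a \LDelete\ to move an element down to $H_p$, but this does not affect the placement of the linearization point.)

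For a \DeleteMin\ operation $op'$ by $p$, recall from Algorithm~\ref{alg:PIPQ-deletemin} that $op'$ first sets $t\_announce[p].status$ to \true\ and returns only after observing it back at \false. By the combining structure, exactly one thread $q$ — the coordinator of $p$'s NUMA node — reads this entry as \true, runs \Call{Execute-Announced-DeleteMin}{} for slot $p$, which invokes a single \LDeleteMin\ (Algorithm~\ref{alg:do-delmin}), and then sets $status$ back to \false; the serving \LDeleteMin\ is unique because $q$ clears the slot immediately after serving it, so no coordinator ever reprocesses it, and $p$ re-announces only on a fresh \DeleteMin. We place the linearization point of $op'$ at the linearization point of this serving \LDeleteMin. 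Its interval is contained in that of $op'$: its invocation follows $q$'s read $status=\true$, which follows $p$'s write $status=\true$ (the invocation of $op'$), and its response precedes $q$'s write $status=\false$, which precedes $p$'s read $status=\false$ (the response of $op'$); if $p$ is itself the coordinator, the same bound holds with ``$p$ reads $status=\false$'' replaced by ``$p$ returns from \Call{Coordinate}{}''. By Lemma~\ref{lem:L-within} the serving \LDeleteMin's linearization point lies within its own interval — including the EMPTY case, linearized when the traversal reaches $tail$ — hence within the interval of $op'$.

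The main obstacle I expect is the containment argument for \DeleteMin: it requires a careful handoff analysis establishing (a) that the serving \LDeleteMin\ is well-defined and unique, and (b) the precise happens-before ordering of the status writes and reads (the coordinator reads $status=\true$ only after $p$ wrote it, and $p$ returns only after the coordinator, strictly after completing the \LDeleteMin, wrote $status=\false$), together with the special case where $p$ is its own coordinator. The \Insert\ cases are comparatively routine once Lemma~\ref{lem:hp}(\ref{insert-once}) and Lemma~\ref{lem:L-within} are available.
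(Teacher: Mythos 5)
Your proposal is correct and follows essentially the same route as the paper: for \Insert\ linearized in $L$ and for \DeleteMin, reduce containment of the linearization point to containment of the nested \LInsert\ / \LDeleteMin\ interval and invoke Lemma~\ref{lem:L-within}, using the combining handoff to argue containment in the \DeleteMin\ case. You are somewhat more explicit than the paper, which handles only the case where \Insert\ invokes a nested \LInsert\ and treats the fast-path ($l_p$-acquisition) case and the combining happens-before chain as implicit, but there is no substantive difference in approach.
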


\begin{proof}
Consider first a \Insert\ operation $op$. 
By inspection of the pseudocode of \Insert, it follows that the execution of the single instance of \LInsert\ 
(see Lemma~\ref{lem:hp}(\ref{insert-once})) invoked by $op$ takes place during the execution of $op$. 
Thus, Lemma~\ref{lem:L-within} implies that the claim holds for $op$. 

Consider now a \DeleteMin\ operation $op'$. The linearization point of $op'$ 
is placed at the same place as the linearization point of the unique instance of \LDeleteMin\
invoked by a combiner to execute $op'$. The combining protocol ensures
that each request is served exactly once by some combiner and
that no thread leaves the system before the combiner serves its request. 
Thus, \LDeleteMin\ is executed during the execution interval of $op'$.
Therefore, the claim for $op'$ follows from Lemma~\ref{lem:L-within}.
\end{proof}

Denote by $\ell_H$ the linearization of the worker operations (of $\alpha$).
Recall that these are the operations that are applied on the worker priority queues
(i.e. on every $H_p$). 
Fix any point $t$ in time. Since \WInsert\ and \WDeleteMin\
are linearized at specific instructions of their code,
we can define $\ell_H(p,t)$ (for a specific thread $p$),
as the subsequence of $\ell_H$ containing those worker operations that have been linearized 
by $t$ and are applied on $H_p$. By Lemma~\ref{lem:hp}(\ref{hp-lin}),
$H_p$ is linearizable. This allows us to define the abstract state of $H_p$.

\begin{definition}
For each $p$, the abstract state of $H_p$ at $t$ 
is the set of elements contained in the priority queue
that results when the operations in $\ell_H(p,t)$ are applied {\em sequentially}
on an initially empty priority queue.
\end{definition}


Recall that $\ell_L$ the linearization of the operations  (of $\alpha$) that are applied on $L$.
Since \LInsert, \LDeleteMin, and \LDelete\
are linearized at specific instructions of the code,
we can define $\ell_L(t)$, 
as the prefix of $\ell_L$ containing those operations on $L$ that have been linearized 
by $t$. By Corollary~\ref{cor:L-lin}, the implementation of $L$ in PIPQ is linearizable. 
This allows us to define the abstract state of $L$.

\begin{definition}
The abstract state of $L$ at $t$ 
is the set of elements contained in the singly-linked list 
that results when the operations in $\ell_L(t)$ are applied {\em sequentially}
on an initially empty linked list.
\end{definition}

\begin{invar}
\label{inv:two-smallest}
For each process $p$, 
the keys of the $p$-elements contained in the abstract state of $L$
are the smallest among all the $p$-elements in the union of the sets that comprise
the abstract state of $H_p$ and the abstract state of $L$. 
\end{invar}

\begin{proof}
Elements tagged by $p$ are only contained in $L$ and $H_p$.
We have to consider all steps that causes the abstract state of $L$ or the abstract state of $H_p$ 
to change and show that after each of these steps, the invariant still holds. 
Let $op$ be the worker operation that executes $s$. 

\begin{enumerate}
\item Assume first that $s$ is a step that causes the abstract state of $H_p$ to change. 
\begin{enumerate}
\item

If $s$ is a step of a \WInsert\ operation by $p$, 
then $s$ is the point where the $l_p$ is acquired by $op$. 
By the way we linearize worker operations, 
$op$ inserts its element $e$ into $H_p$ and is linearized at $s$. 
A \WInsert\ applied on $H_p$ can only be invoked by a \Insert\ by thread $p$. 
Since $e$ is inserted into $H_p$, $p$ will figure out (after $s$ where it acquires $l_p$), 
while still holding $l_p$, that $e$'s key is larger than or equal to the key of the element pointed to by $lead\_largest_p$; note that this is implied if $key$ is larger than or equal to the key of the highest priority element in $H\_p$. 
Lemma~\ref{lem:hp} implies that $lead\_largest_p$ does not change while $op$ is executed.
Moreover, it implies that no $p$-element
will be inserted into $L$ while $op$ holds $l_p$. 
These imply that the execution of $s$ does not violate the invariant.
\item If $s$ is a step of a \WDeleteMin\ operation by a thread $q$ applied on $H_p$,
$s$ is again the point where the $l_p$ is acquired by $op$. 
By the way we linearize worker operations, 
$op$ deletes the minimum element $e$ from $H_p$ and is linearized at $s$. 
After $s$, the abstract state of $H_p$ contains an element less, whereas the abstract state of $L$
remains unchanged. Since the invarant holds before $s$, it also holds after $s$. 
\end{enumerate}

\item Assume next that $s$ is a step that causes the abstract state of $L$ to change. 
\begin{enumerate}
\item Assume first that $op$, which executes $s$, is a \LInsert\ that inserts an element $e$ in $L$. 
Then, $s$ is the CAS which connects $e$ into $L$, and the \LInsert\ is linearized at $s$.  
After $s$, the abstract state of $L$ contains an element more, whereas the abstract state of $H_p$
remains unchanged.

\begin{itemize}
\item If $op$ is called by an \Insert\ $op'$ that follows the slower or the slowest path,
Lemma~\ref{lem:hp}(\ref{largest}) implies that $e$ is the element with the highest
priority among all element of $H_p$ at the time that $p$ acquired $l_p$, which was acquired at the beginning of $op'$. 
Since $p$ does not release the lock from that time until  the execution of $s$, the structure of $H_p$ does not change
during this period of time, 
thus $e$ is still an element that has higher priority than all the elements that are contained in $H_p$ at the time
$s$ is executed. Since $p$ holds $l_p$ and does not have an active operation on $H_p$ at the time that $s$ is executed,
$H_p$ is in a quiescent state at $s$. 
Since $H_p$ is linearizable (Lemma~\ref{lem:hp}(\ref{hp-lin})) the elements of this set comprise also the abstract state
of $H_p$ when $s$ is executed. 
These imply that the invariant hold after the execution of $s$.

\item If $op$ is called by a \DeleteMin\ $op'$ then $s$ is executed either by $p$ (through the invocation of $help\_upsert$) 
or by the active combiner $q$ (when it moves up an element to ensure that $L\_count_p$ is greater than 1). 
Operation $op'$ acquires $l_p$ while $op$ is being executed. 
Lemma~\ref{lem:hp}(\ref{largest}) implies that $e$ is the element with the highest
priority among all element of $H_p$ at the time $t'$ that $p$ acquired $l_p$. 
Similarly to the previous case, since $p$ does not release the lock from $t'$ until  the execution of $s$, 
the structure of $H_p$ does not change during this period of time, 
thus $e$ is still an element that has higher priority than all the elements that are contained in $H_p$ at the time
$s$ is executed. Since $p$ holds $l_p$ and does not have an active operation on $H_p$ at the time that $s$ is executed,
$H_p$ is in a quiescent state at $s$. 
Since $H_p$ is linearizable (Lemma~\ref{lem:hp}(\ref{hp-lin})) the elements of this set comprise also the abstract state
of $H_p$ when $s$ is executed. 
These imply that the invariant hold after the execution of $s$.
\end{itemize}

\item Assume next that $op$, which executes $s$, is a \LDelete\ or \LDeleteMin\ that deletes an element $e$ from $L$. 
By the way linearization points are assigned to \LDeleteMin\ and \LDelete, 
$s$ marks the LOGDEL bit or MOVING bit of $e$, respectively, and $op$ is linearized at $s$. 
After $s$, the abstract state of $L$ contains an element less, whereas the abstract state of $H_p$
remains unchanged. Since the invariant holds before the execution of $s$, it also holds after it.
\end{enumerate}
\end{enumerate}
\end{proof}

We are now ready to define the abstract state of the priority queue implemented by PIPQ
(which is a multi-level data structure consisting of all the worker-level min-heaps and $L$).

\begin{definition}
At each point $t$, the abstract state of PIPQ is defined 
by the (multi-set) union of the abstract states of $H_1$, $H_2$, ..., $H_n$ and the abstract state of $L$.
\end{definition}

Denote by $\ell$ the linearization order we derive by the linearization points 
of the \Insert\ and \DeleteMin\ operations (of $\alpha$). 
Let $\sigma$ be the sequential execution that we get by applying the operations of $\ell$ sequentially
starting from an empty priority queue. 
We say that an \Insert\ or a \DeleteMin\ operation $op$ in $\alpha$ is {\em consistent} 
with respect to $\ell$, if the response value of $op$
is the same as that of the corresponding operation in $\sigma$.
Invariant~\ref{inv:two-smallest} and Lemma~\ref{lem:PIPQ-within} imply that the response of each \DeleteMin\ operation 608462469 / 12275954

is consistent with respect to $\ell$, as stated in Lemma~\ref{lem:consistency}. 

\begin{lemma}
\label{lem:consistency}
The responses of \DeleteMin\ operations in PIPQ are consistent with respect to $\ell$. 
\end{lemma}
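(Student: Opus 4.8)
The plan is to track the value returned by a \DeleteMin\ from the combining layer down into $L$, lift it to the whole data structure through Invariant~\ref{inv:two-smallest} and Lemma~\ref{lem:lcount}, and then reconcile it with the sequential execution $\sigma$. First, I would fix a \DeleteMin\ operation $op'$ by a thread $p$ and let $t$ be its linearization point; by the way $op'$'s linearization point was defined, $t$ is the linearization point of the unique \LDeleteMin\ instance $D$ that a combiner invokes to serve $op'$, and the combining protocol guarantees that $op'$ returns (through the announce array) exactly what $D$ returns. So it suffices to characterize $D$. By Corollary~\ref{cor:L-lin} (the implementation of $L$ is linearizable) and the corollary that the highest-priority element of $L$ is its leftmost active node, $D$ returns the minimum key of the abstract state of $L$ immediately before $t$, or EMPTY when that state is empty.

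Second, I would lift this from $L$ to PIPQ. If the abstract state of $L$ at $t$ is nonempty, then for every thread $r$ whose heap $H_r$ is abstractly nonempty at $t$, Lemma~\ref{lem:lcount} gives at least two active $r$-elements in $L$ at $t$, and Invariant~\ref{inv:two-smallest} then forces the smallest $r$-element of $L$ to be of priority no lower than every $r$-element of $H_r$; minimizing over $r$, the minimum key of the abstract state of $L$ at $t$ equals the minimum key of the abstract state of PIPQ (the multiset union of the $H_r$'s and $L$) at $t$. If instead the abstract state of $L$ at $t$ is empty, the same use of Lemma~\ref{lem:lcount} makes every $H_r$ abstractly empty, so PIPQ is abstractly empty and $op'$ correctly returns EMPTY. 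In all cases, $op'$ returns the minimum key of the abstract state of PIPQ at $t$ (or EMPTY if that state is empty).

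Third --- the substantive step --- I would show that the abstract state of PIPQ at $t$ equals the state of the sequential queue in $\sigma$ just before $op'$, i.e., the multiset obtained by replaying in $\ell$-order all operations preceding $op'$ on an initially empty queue. I would prove this by induction over $\ell$ on the quantity $G = S_{PIPQ} \uplus I$, where $S_{PIPQ}$ is the abstract state of PIPQ and $I$ collects the \emph{in-flight} elements --- those that an already-executed half of an in-progress move has removed from one level without the matching re-insertion into the other level having occurred yet (the \WDeleteMin\ of an up-move inside \op{help\_upsert} or \op{Execute-Announced-DeleteMin}, or the \LDelete\ of a slowest-path \Insert). A step-by-step inspection of every action that changes some $H_r$ or $L$ (the worker inserts and delete-mins, \LInsert, \LDeleteMin, \LDelete) shows that each move is net-zero on $G$ with no transient gap, so $G$ is constant between consecutive linearization points of \Insert\ and \DeleteMin\ operations and changes at those points exactly as a sequential priority queue would --- invoking Lemma~\ref{lem:hp}(\ref{insert-once}) so every key lives in exactly one level, and the second paragraph to identify the element a \DeleteMin\ removes. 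Hence $G(t^-)$ equals the $\sigma$-state just before $op'$. To close the argument I would establish the separating property that at every \DeleteMin\ linearization point every in-flight element has priority no higher than the current minimum of $L$: for a down-move the in-flight element is the largest $r$-element that had been in $L$ while the strictly smaller element freshly added to $L$ by the same \Insert\ stays there; for an up-move the in-flight element was the minimum of some $H_r$, and Invariant~\ref{inv:two-smallest} at the moment the \WDeleteMin\ takes effect, together with the fact that the mover holds $l_r$ for the whole move and that Lemma~\ref{lem:lcount}'s restore discipline makes the combiner block on $l_r$ as soon as it would drop $L\_count_r$ below $2$, guarantees that an $r$-element no larger than the in-flight one survives in $L$ until the move completes. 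Therefore $\min G(t^-) = \min S_{PIPQ}(t^-)$, a sequential \DeleteMin\ on the $\sigma$-state returns $\min G(t^-)$, and combining with the first two paragraphs completes the proof.

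I expect the main obstacle to be that last separating property --- excluding the interleaving in which a burst of combiner \LDeleteMin\ operations strips $L$ of all $r$-elements while an $r$-element is mid-move. Resolving it requires a careful interleaving argument that combines the per-heap lock $l_r$ held for the entire move, the leader-counter discipline, and Lemma~\ref{lem:lcount}; by contrast, the bookkeeping establishing that $S_{PIPQ}\uplus I$ is invariant under the three kinds of moves is tedious but mechanical.
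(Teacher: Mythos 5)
Your proposal is correct in spirit and, importantly, it surfaces a genuine subtlety that the paper glosses over. The paper offers no real proof of this lemma; it simply asserts in one sentence that Invariant~\ref{inv:two-smallest} together with Lemma~\ref{lem:PIPQ-within} implies the claim. What you have done is expose exactly what that implication needs and why it is not automatic: the abstract state of PIPQ (the multiset union of the $H_p$'s abstract states and $L$'s abstract state) does \emph{not} coincide with the sequential state of $\sigma$ at a given linearization point, because an up-move (\op{help\_upsert}/\op{Execute-Announced-DeleteMin}) or a down-move (slowest path) leaves an element briefly in neither level's abstract state. Your formulation $G = S_{PIPQ} \uplus I$ is precisely the bookkeeping device that closes this gap, and your separating property (every in-flight key is at least the minimum of $L$ at every \LDeleteMin\ linearization point) is exactly the statement that is needed to recover $\min G = \min S_{PIPQ} = \min L$, which Invariant~\ref{inv:two-smallest} and Lemma~\ref{lem:lcount} alone cannot give you because they only constrain the levels' abstract states, not the in-flight residue.

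Your sketch of the separating property is sound and I would only press on two points. First, for the up-move, your justification invokes Lemma~\ref{lem:lcount}, but that lemma speaks about $L\_count_r$ at \LDeleteMin\ invocation assuming $H_r$ nonempty; during the move, $H_r$ has just had its minimum removed and may now be empty, so that lemma's hypothesis can fail. The argument still works, but the real reason is the one you hint at: the mover holds $l_r$ for the whole move, the only decrementer of $L\_count_r$ is the combiner after an \LDeleteMin, and the combiner's restore discipline makes it block on $l_r$ (or wait for the counter to recover) rather than issuing a second \LDeleteMin\ that would strip the last $r$-element. Combined with the fact that, at the instant $e$ was popped from $H_r$, every $r$-element in $L$ already satisfied $\texttt{key}\le e$ by Invariant~\ref{inv:two-smallest}, and that no new $r$-element can appear in $L$ while $l_r$ is held by anyone else, you get that at most one such $r$-element can be deleted before the combiner stalls, so some witness survives. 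Spelling that out explicitly (rather than leaning on Lemma~\ref{lem:lcount}'s stated hypothesis) would make the last step airtight. Second, the paper's linearization point for \WInsert\ (``at the time it acquires $l_p$'') is literally ambiguous in the slowest path, where a single $l_p$ acquisition encloses \LInsert, \LDelete, and \WInsert; under the literal reading, the moved-down element would momentarily appear in both $H_p$ and $L$. Your $G$-invariant implicitly relies on the sensible reading (the \WInsert\ of a down-move is linearized after the corresponding \LDelete), which you should state, since the paper does not.

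In short: the paper asserts the lemma without proof; your proposal is a correct and substantially more rigorous way to establish it, and the one thing it adds beyond the paper's stated machinery — the in-flight multiset $I$ and the separating property — is genuinely needed, not optional bookkeeping.
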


Lemma~\ref{lem:consistency} implies that PIPQ is linearizable. 

\begin{theorem}
PIPQ is a linearizable implementation of a priority queue. 
\end{theorem}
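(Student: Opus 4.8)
The plan is to assemble the lemmas already established into the standard definition of linearizability. We must exhibit, for the arbitrary execution $\alpha$, a total order $\ell$ on its \Insert\ and \DeleteMin\ operations such that (i) $\ell$ extends the real-time order of these operations (if $op_1$ finishes before $op_2$ begins, then $op_1$ precedes $op_2$ in $\ell$), and (ii) executing the operations one at a time in the order $\ell$, starting from an empty priority queue, reproduces for every operation the response it returned in $\alpha$ (i.e. every operation is consistent with respect to $\ell$ and $\sigma$). The order $\ell$ is the one induced by the linearization points already assigned to \Insert\ and \DeleteMin.

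First, property (i) is immediate from Lemma~\ref{lem:PIPQ-within}: since each operation's linearization point falls within its own execution interval, if $op_1$ completes before $op_2$ is invoked then the linearization point of $op_1$ precedes that of $op_2$, so $\ell$ is a linear extension of the real-time order.

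Second, I would verify property (ii) by cases on the operation type, reading the behaviour off the abstract state of PIPQ (the multiset union of the abstract states of $H_1,\ldots,H_n$ and of $L$). For an \Insert\ operation $op$ by thread $p$ there is no response, so the only requirement is that the abstract state of PIPQ grows by exactly the inserted key at $op$'s linearization point; by Lemma~\ref{lem:hp}(\ref{insert-once}) the key lands in exactly one of $H_p$ or $L$, and by construction the linearization point of $op$ coincides with that of the corresponding \WInsert\ on $H_p$ (the moment $l_p$ is acquired) or of the corresponding \LInsert\ on $L$, so the abstract state of that component, and hence of PIPQ, gains precisely that key there. For a \DeleteMin\ operation $op'$, consistency is exactly the statement of Lemma~\ref{lem:consistency}: $op'$ is served by a single \LDeleteMin\ whose interval is nested inside $op'$ (Lemma~\ref{lem:PIPQ-within}) and which returns the highest-priority active node of $L$; Invariant~\ref{inv:two-smallest}, combined with Lemma~\ref{lem:lcount} (which guarantees that whenever an \LDeleteMin\ runs, every non-empty $H_p$ has at least two active $p$-elements in $L$), shows that this node carries the minimum key over the entire abstract state of PIPQ, and that the EMPTY response is returned exactly when that state is empty. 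Combining (i) and (ii) yields that $\alpha$ is linearizable with witness $\ell$; since $\alpha$ was arbitrary, PIPQ is a linearizable priority queue.

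I do not expect a genuine obstacle at this final stage — the theorem is essentially a corollary of Lemmas~\ref{lem:PIPQ-within} and~\ref{lem:consistency}. The load-bearing work has already been discharged: Corollary~\ref{cor:L-lin} (the leader list is itself linearizable), the nesting of the internal \LDeleteMin\ inside \DeleteMin\ guaranteed by the combining protocol, and above all Invariant~\ref{inv:two-smallest}, which is what makes ``the minimum of $L$'' coincide with ``the global minimum of PIPQ''. The only residual care is bookkeeping: checking that the abstract-state definitions glue together so that each step changing some $H_p$ or $L$ changes the PIPQ abstract state in the way $\sigma$ prescribes, and observing that response-free \Insert\ operations constrain nothing beyond that state update.
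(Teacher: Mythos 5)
Your proposal is correct and follows exactly the same route as the paper: the theorem is presented there as a one‑line corollary (``Lemma~\ref{lem:consistency} implies that PIPQ is linearizable''), and what you have written out — real‑time order from Lemma~\ref{lem:PIPQ-within}, trivial consistency of response‑less \Insert\ operations, and consistency of \DeleteMin\ from Lemma~\ref{lem:consistency} backed by Invariant~\ref{inv:two-smallest} and Lemma~\ref{lem:lcount} — is precisely the reasoning that line compresses. You have simply made explicit the standard linearizability check (real‑time‑respecting total order plus response consistency against the abstract state) that the paper leaves implicit.
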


%
%


\section{Complexity Analysis and Discussion}
\label{sec:complexity}

Given that PIPQ is blocking, and the leader-level linked list is lock-free (not wait-free), it is impossible to give a worst-case bound for either. However, for completeness, it is worth discussing the single-threaded worst-case bound, which is $O(lg(m) + k)$ for both insert and delete-min, where $m$ is the size of the relevant worker-level heap, and $k$ is the size of the leader-level linked list. This is directly inherited from the worst-case performance of its components.

We believe the above worst-case analysis is not a practical issue, due to several reasons. First, for delete-min, this worst case will only occur when the coordinator must upsert an element from a worker heap (i.e., when it removes an element and finds that its relevant counter value is less than 2). In all other cases, the coordinator removes the highest-priority element, which is the first non-logically-deleted element in the list. Also, our helping mechanism offloads most of this ``upsert'' mechanism to the pending delete-min operations that are waiting for the coordinator, which further reduces the probability that the coordinator will ever need to ``upsert'' elements. For insert, the worst case represents the scenario in which an element is inserted at the leader level, and an element is subsequently moved down to the worker level. Similar to delete-min, as shown in Figure~\ref{fig:microbench-paths}, this scenario (which we call the “slowest path”) rarely occurs in practice.

Even when this worst case scenario is unavoidable, our design mitigates its effects by configuring \texttt{CNTR\_MIN} and \texttt{CNTR\_MAX}. In particular, reducing \texttt{CNTR\_MAX} makes the size of the linked list as small as the number of threads (which is practically constant and comparable to the cost of any alternative with logarithmic-time performance). Also, by increasing \texttt{CNTR\_MIN} we increase the chance that helper threads upsert elements ahead of the coordinator, which reduces the likelihood of this worst-case scenario. In fact, some of our experimental trials that we did not include in the paper aimed at assessing the effect of replacing the leader-level linked list with other data structures, and we observed no performance gains in any of them.


\section{Evaluation}
\label{sec:eval}

We conduct our experiments on a machine running Ubuntu 22.04.3 LTS equipped with four Intel Xeon Platinum 8160 processors containing a total of 96 cores split between four NUMA nodes. Our code is written in C++ and compiled with \texttt{-std=c++20} \texttt{-O3} \texttt{-mcx16}.
Our data points represent the average of three trials performed for five seconds each. Threads are pinned to cores using a NUMA node by NUMA node policy. Hyperthreading is disabled.

We compare PIPQ with the following two state-of-the-art, strict priority queues: Lotan-Shavit~\cite{lotan-shavit} and Lind\'{e}n-Jonsson~\cite{linden}. Lotan-Shavit~\cite{lotan-shavit} is a priority queue implemented on top of Fraser's skiplist~\cite{fraser-sl}. The delete-min operation is implemented by traversing the lowest-level linked list, and atomically marking the first unmarked node as logically deleted. This data structure is not linearizable but is quiescently consistent~\cite{multiproc-prog-book}. Lind\'{e}n-Jonsson~\cite{linden} extends the Lotan-Shavit design by batching the physical deletions once a certain threshold of logically deleted nodes has been met. Lind\'{e}n-Jonsson outperforms Sundell and Tsigas' linearizable priority queue algorithm~\cite{sundell-tsigas} and Herlihy and Shavit's~\cite{multiproc-prog-book} lock-free adaptation of the LS priority queue~\cite{lotan-shavit}.

We test different applications, workloads, and data access patterns:
a traditional microbenchmark to evaluate concurrent data structures; a designated thread experiment, where each thread performs either insert \textit{or} delete-min, but not both; a phased experiment, in which the workload changes in phases; and a real application benchmark, the single-source shortest path graph algorithm.

\subsection{Microbenchmark}
\label{sec:microbench}

Results for the microbenchmark are shown in Figures~\ref{fig:microbench}, \ref{fig:microbench-lat} and \ref{fig:microbench-paths} where we evaluate throughput (millions of operations per second), latency (microseconds), and which path is followed by PIPQ's insert operations, respectively, as we increase threads. In our microbenchmark, threads perform a series of insert and delete-min operations based on the given mixed workload. The key and value are integers ranging between 1 and 100M and are randomly generated using a uniform distribution.
We include four workloads: 100\% insert, 95\% insert, 50\% insert, and 100\% delete-min operations.
Via experimentation, we find that using values of 10 for \texttt{CNTR\_MIN} and 100 for \texttt{CNTR\_MAX} achieves the goal of maintaining a proper size of the leader-level linked list, as discussed in Section~\ref{sec:alg}.

We also performed experiments (not included in the paper) in which we tracked the average number of nodes traversed by the two functions that traverse $L$ past its logically deleted prefix of nodes, namely \Call{L-Insert}{} and \Call{L-DeleteMaxP}{}, for various values of \texttt{CNTR\_MIN} and \texttt{CNTR\_MAX}.
The following discussion relates to results captured for a 50\% inserts workload.
Our results show that modifying the value of \texttt{CNTR\_MAX} has small effect on the length of traversals, whereas modifying the value of \texttt{CNTR\_MIN} has a clear correlation.
Despite this, however, the overall throughput achieved by the runs hardly varied. Thus, we conclude that the traversal on $L$ does not introduce a performance cost on PIPQ.

\begin{figure}[h]
     \centering
     \begin{subfigure}{\textwidth}
         \centering
         \includegraphics[width=.4\textwidth]{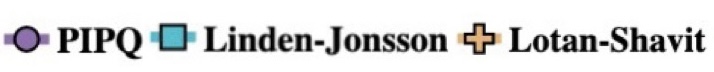}
     \end{subfigure}
     \begin{subfigure}{0.03\textwidth}
         \includegraphics[width=\textwidth]{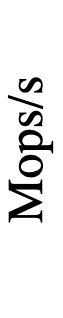}
         \vspace{20pt}
     \end{subfigure}
     \begin{subfigure}{0.22\textwidth}
         \centering
         \includegraphics[width=\textwidth]{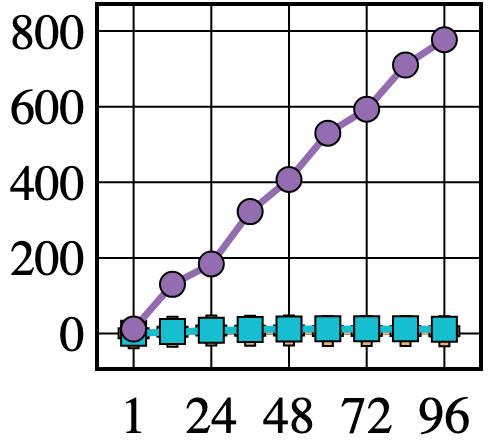}
         \caption{100\% insert}
         \label{fig:micro-100ins}
     \end{subfigure}
     \hspace{4pt}
     \begin{subfigure}{0.22\textwidth}
         \centering
         \includegraphics[width=\textwidth]{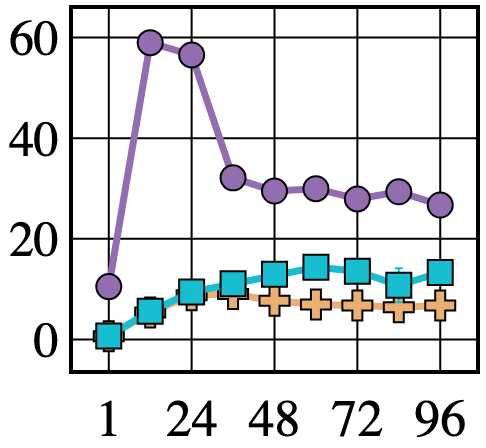}
         \caption{95\% insert}
         \label{fig:micro-95ins}
     \end{subfigure}
     \hspace{4pt}
     \begin{subfigure}{0.22\textwidth}
         \centering
         \includegraphics[width=\textwidth]{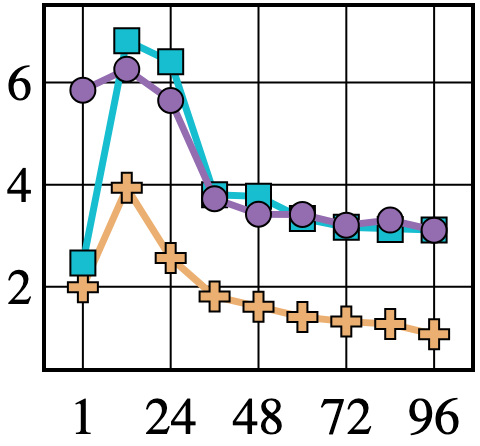}
         \caption{50\% insert}
         \label{fig:micro-50ins}
     \end{subfigure}
     \hspace{4pt}
     \begin{subfigure}{0.22\textwidth}
         \centering
         \includegraphics[width=\textwidth]{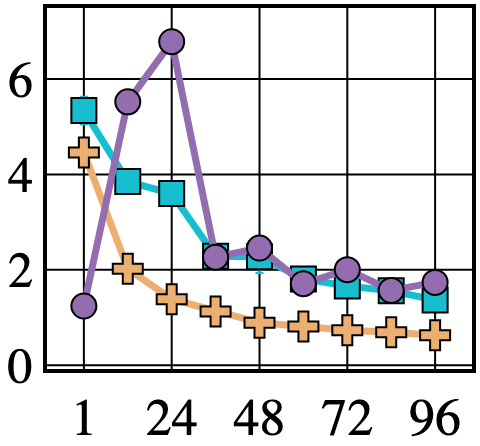}
         \caption{100\% delete-min}
         \label{fig:micro-100del}
     \end{subfigure}
     \vspace{-5pt}
        \caption{Throughput using a microbenchmark. The x-axis represents the number of threads.}
    \label{fig:microbench}
\end{figure}

\textbf{Throughput Analysis.}
Refer to Figure~\ref{fig:microbench} for the results containing throughput achieved by PIPQ and its competitors.
Overall, PIPQ performs better or as well as Lind\'{e}n-Jonsson in all experiments, both of which always perform better than Lotan-Shavit.
When insertions dominate the workload, the parallel nature of insertions provided by PIPQ showcases the performance advantages of our approach compared to competitors.

We direct attention to Figure~\ref{fig:micro-100ins}, which shows the 100\% insert workload.
Since the leader-level linked list remains small, containing up to 100 elements per worker, and there are no threads removing the minimal keys, it becomes highly likely that threads will follow the fast path of the insert operation as more elements are inserted.
Thus, insertions become nearly embarrassingly parallel and benefit from cache locality and the speed of the sequential worker min-heap operations, providing significant speedup. At 96 threads, PIPQ performs over 76x better than its closest competitor, Lind\'{e}n-Jonsson.

Figure~\ref{fig:micro-95ins} shows the results of the 95\% insert workload. In this case, we outperform Lind\'{e}n-Jonsson by almost one order of magnitude at 12 threads (i.e., within a NUMA node) and by 2.4x at 96 threads.
To explain the performance drops in the presence of the delete-min operation, we perform an experiment to determine how many operations the coordinator performs on average for the various workloads.
We experimentally find that the coordinator performs, on average, about 24 operations in the case of the 95\% insertions workload. Thus, even with a workload of only 5\% delete-min operations, nearly every thread in a NUMA node eventually has to wait for the coordinator to complete that thread's
operation (or to become the coordinator itself). Because our insert operation is so fast compared to the delete-min operation, potentially many insert requests could be served during the waiting required by the delete-min operation.

That said, however, PIPQ scales significantly better than competitors within a NUMA node, showing the algorithm's ability to increase overall performance for insert-dominant workloads despite such a pessimistic operation (i.e., delete-min) in the mix. The drop in performance after 24 threads is due to the additional synchronization needed to coordinate delete-min operations carried by threads on other NUMA nodes.
Nonetheless, PIPQ still outperforms all competitors in the 95\% insert case.

In the 50\% insert workload (Figure~\ref{fig:micro-50ins}), PIPQ performs similarly to Lind\'{e}n-Jonsson, and outperforms Lotan-Shavit. PIPQ also performs similarly to Lind\'{e}n-Jonsson for the 100\% delete-min workload (Figure~\ref{fig:micro-100del}), though exhibits performance benefits within a NUMA node (i.e., scaling up to 24 threads).
Performing so comparably to Lind\'{e}n-Jonsson for the 100\% delete-min workload for 24 threads and beyond shows an interesting correlation. Essentially, the overhead of our multi-level hierarchy, which requires moving elements between levels to ensure strict ordering, is seemingly comparable with the overhead of adjusting Lind\'{e}n-Jonsson's skip list indexing levels.

\hspace{-20pt}
\begin{minipage}{.65\textwidth}
\begin{figure}[H]
     \centering
     \begin{subfigure}{\textwidth}
         \centering
         \includegraphics[width=.6\textwidth]{plots/legend.jpg}
     \end{subfigure}\\
     \begin{subfigure}{0.32\textwidth}
         \centering
         \includegraphics[width=\textwidth]{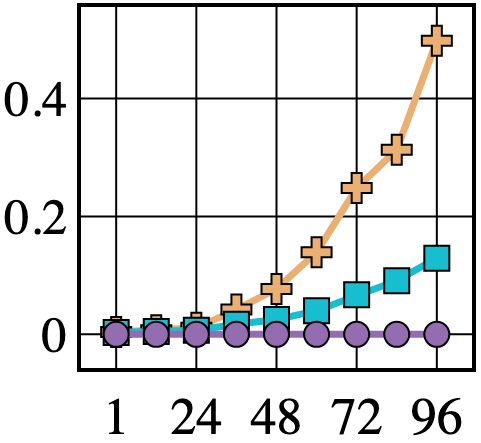}
         \caption{[ I ] 100\% insert}
         \label{fig:lat-100ins}
     \end{subfigure}
     \begin{subfigure}{0.32\textwidth}
         \centering
         \includegraphics[width=\textwidth]{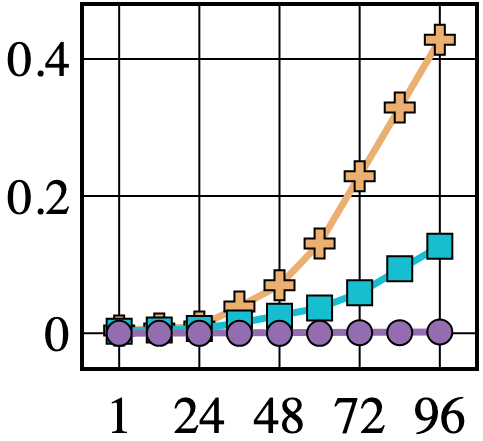}
         \caption{[ I ] 95\% insert}
         \label{fig:latins-95ins}
     \end{subfigure}
     \begin{subfigure}{0.32\textwidth}
         \centering
         \includegraphics[width=\textwidth]{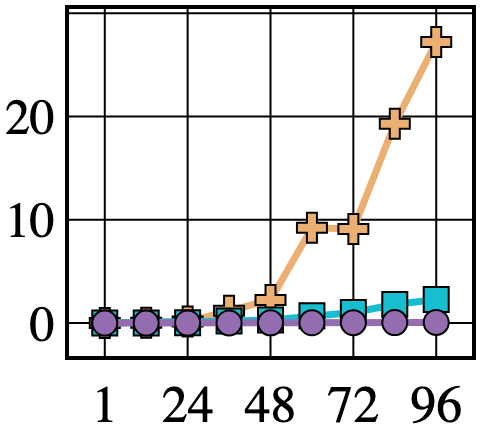}
         \caption{[ I ] 50\% insert}
         \label{fig:latins-50ins}
     \end{subfigure}\\

     \begin{subfigure}{0.3\textwidth}
         \centering
         \includegraphics[width=\textwidth]{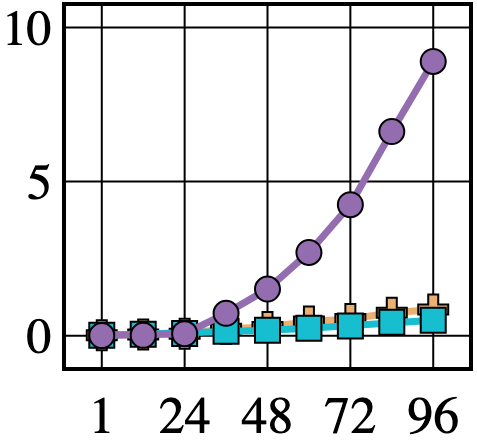}
         \caption{[ D ] 95\% insert}
         \label{fig:latdel-95ins}
     \end{subfigure}
     \hspace{4pt}
     \begin{subfigure}{0.3\textwidth}
         \centering
         \includegraphics[width=\textwidth]{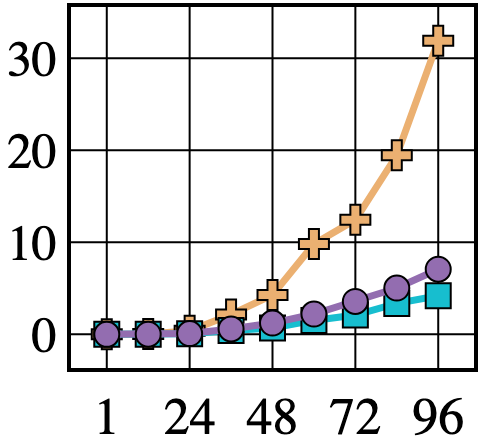}
         \caption{[ D ] 50\% insert}
         \label{fig:latdel-50ins}
     \end{subfigure}
     \begin{subfigure}{0.3\textwidth}
         \centering
         \includegraphics[width=\textwidth]{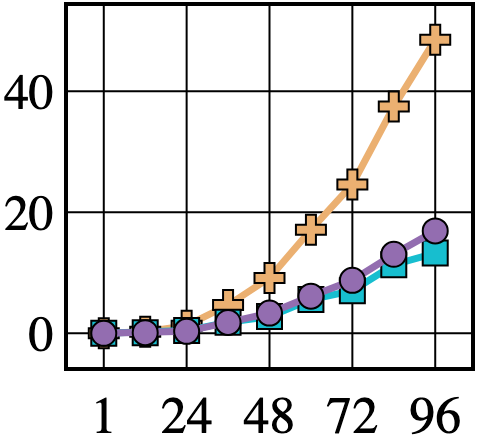}
         \caption{[ D ] 100\% del-min}
         \label{fig:lat-100del}
     \end{subfigure}
        \caption{Latency using microbenchmark (less is better). [ I ] indicates insert latency, [ D ] indicates delete-min latency. \# of threads in x-axis; microseconds in y-axis.}
    \label{fig:microbench-lat}
\end{figure}
\end{minipage}
\hspace{4pt}
\begin{minipage}{.3\textwidth}
\begin{figure}[H]
     \centering
     \begin{subfigure}{\textwidth}
         \centering
         \includegraphics[width=.9\textwidth]{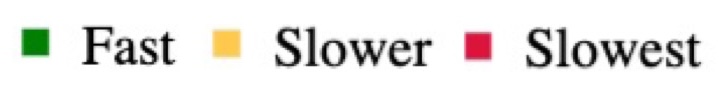}
     \end{subfigure}
     \begin{subfigure}{\textwidth}
         \centering
         \includegraphics[width=.8\textwidth]{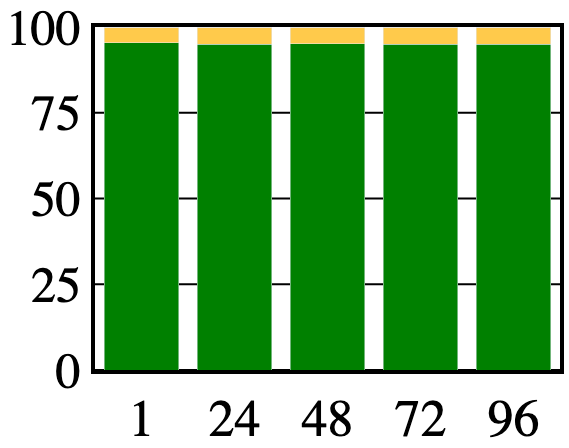}
         \caption{95\% insert}
         \label{fig:paths-95}
     \end{subfigure}\\
     \begin{subfigure}{\textwidth}
         \centering
         \includegraphics[width=0.8\textwidth]{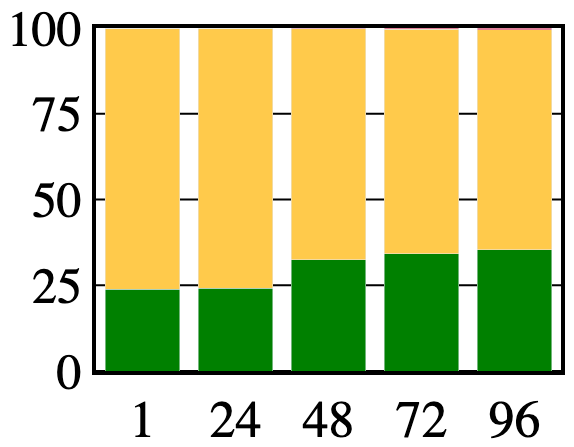}
         \caption{50\% insert}
         \label{fig:paths-50}
     \end{subfigure}
        \caption{\% path is followed by insert operation; \# of threads in x-axis.}
    \label{fig:microbench-paths}
\end{figure}
\end{minipage}
\vspace{10pt}

\textbf{Latency Analysis.}
Figure~\ref{fig:microbench-lat} shows the latency of operations for PIPQ and its competitors.
The top row of plots shows the average latency of the insert operation for relevant workloads (100\% insert, 95\% insert, and 50\% insert). In each case, PIPQ has the lowest latency compared to competitors by a large margin, revealing the benefit of PIPQ's parallel insertions.
For the 100\% insertions workload (Figure~\ref{fig:lat-100ins}) PIPQ's insert latency is 265x lower than the latency of Lind\'{e}n-Jonsson, and in the case of 95\% inserts, PIPQ's insert latency is 49x lower than that of Lind\'{e}n-Jonsson. Finally, in the case of 50\% inserts, PIPQ's latency is 40x lower than the latency of Lind\'{e}n-Jonsson.

The bottom row of the plots in Figure~\ref{fig:microbench-lat} shows the average latency of the delete-min operation.
For the 50\% insert and 100\% delete-min workloads (Figures~\ref{fig:latdel-50ins} and \ref{fig:lat-100del}, respectively), PIPQ's latency is comparable to that of Lind\'{e}n-Jonsson, both of which have lower latency than Lotan-Shavit. For the 95\% insert workload on the other hand, PIPQ's latency is higher than competitors. Note that PIPQ's delete-min latency at 96 threads in the 95\% inserts workload (Figure~\ref{fig:latdel-95ins}) is very similar to the latency of delete-min at 96 threads in the 50\% workload (Figure~\ref{fig:latdel-50ins}), thus revealing that the latency of the operation remains somewhat constant across workloads. This is due to PIPQ's use of combining, and the implicit added cost of Coordinator's operating in a sequential manner.

\textbf{Path Followed by PIPQ Insert.}
In order to understand how often the benefit of inserting at the worker-level occurs for various workloads, we perform additional experiments in which we track the path (fast, slower, and slowest) followed by each insert operation.
Recall that the fast path is when an insert is to the worker-level. The slow paths involve inserting to the leader-level, and the slowest path requires the additional step of moving an element to the worker-level.
Per the results in Figure~\ref{fig:microbench-paths}, the slowest path is rarely followed (the color red is hardly visible on the plots); the percentage of the time that it is followed is less than 0.5\% for all thread counts in both workloads, significantly less so in many cases.

\vspace{5pt}
In the 95\% insert workload (Figure~\ref{fig:paths-95}), the fast path is followed approximately 95\% of the time for all thread counts, and the slower path is followed about 4.9\% of the time for all thread counts. In the case of 50\% inserts (Figure~\ref{fig:paths-50}) the fast path is taken by inserts ranging from 24\% of the time with one thread, scaling up to 35\% of the time at 96 threads, and the slower path is followed ranging from 76\% of the time at one thread, scaling down to 64\% of the time at 96 threads. The increased percentage in the slower path being followed for the 50\% insert workload compared to the 95\% insert workload is simply due to the fact that significantly more elements are being removed in the 50\% case, so the probability that an element needs to be inserted to the leader-level given a uniformly generated workload is largely increased compared to that of the 95\% case.

\subsection{Designated Thread Experiment}

Note that designating threads causes calls to \Call{Help-Upsert}{} by delete-min operations to be ineffective, since the threads performing delete-min do not ever insert. To address this issue, we move the helping effort to the insert operation, such that after performing its insert, but before it releases its lock on its local $H_p$, an insert carries out the logic of \Call{Help-Upsert}{}, removing an element from $H_p$ and inserting it into $L$ when necessary. We omit single-threaded performance as it is not a relevant data point for this experiment.

\begin{figure}[h]
    \centering
    \begin{subfigure}{\textwidth}
        \centering
        \includegraphics[width=.35\textwidth]{plots/legend.jpg}
    \end{subfigure}
    \begin{subfigure}{0.03\textwidth}
         \includegraphics[width=\textwidth]{plots/mops-s.jpg}
         \vspace{30pt}
    \end{subfigure}
    \begin{subfigure}{0.22\textwidth}
         \centering
         \includegraphics[width=\textwidth]{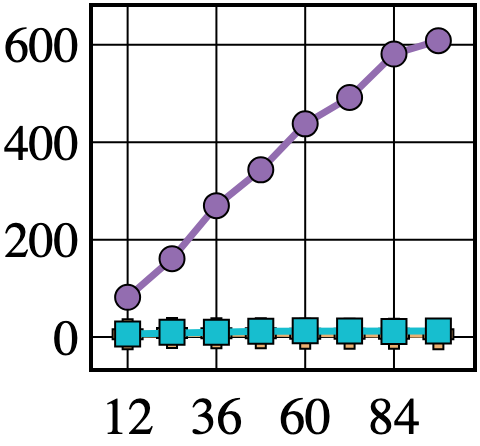}
         \caption{[ I ] d=1/24}
         \label{fig:desg-ins-4}
    \end{subfigure}
    \hspace{2pt}
    \begin{subfigure}{0.22\textwidth}
         \centering
         \includegraphics[width=\textwidth]{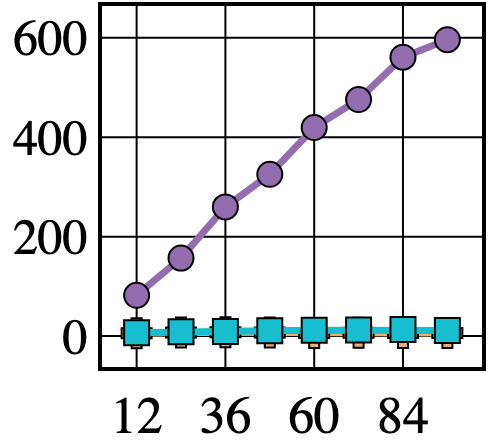}
         \caption{[ I ] d=1/12}
         \label{fig:desg-ins-8}
    \end{subfigure}
    \hspace{3pt}
    \begin{subfigure}{0.22\textwidth}
         \centering
         \includegraphics[width=\textwidth]{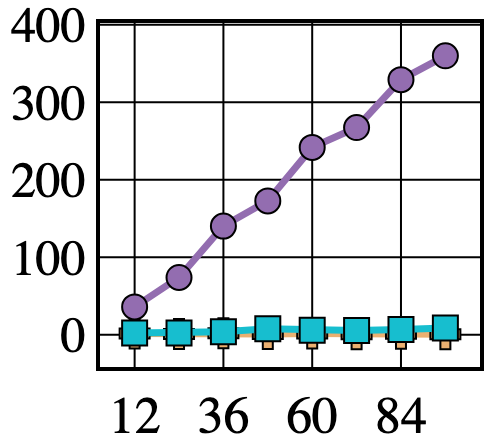}
         \caption{[ I ] d=1/2}
         \label{fig:desg-ins-48}
    \end{subfigure}
    \vspace{2pt}
    
    \begin{subfigure}{0.03\textwidth}
         \includegraphics[width=\textwidth]{plots/mops-s.jpg}
         \vspace{30pt}
    \end{subfigure}
    \begin{subfigure}{0.22\textwidth}
         \centering
         \includegraphics[width=\textwidth]{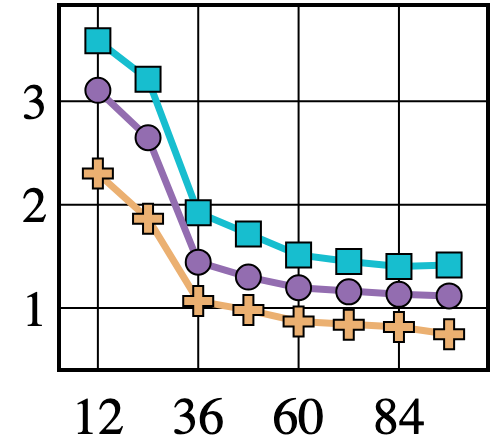}
         \caption{[ D ] d=1/24}
         \label{fig:desg-delmin-4}
    \end{subfigure}
    \hspace{2pt}
    \begin{subfigure}{0.22\textwidth}
         \centering
         \includegraphics[width=\textwidth]{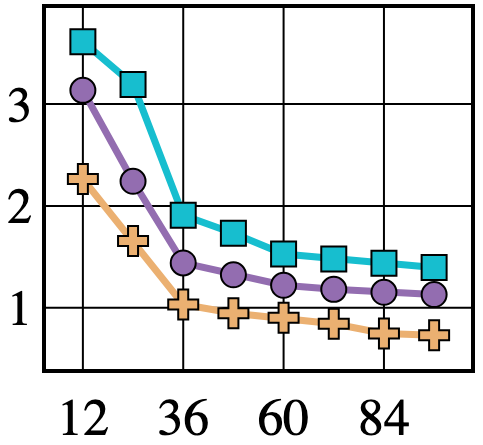}
         \caption{[ D ] d=1/12}
         \label{fig:desg-delmin-8}
    \end{subfigure}
    \hspace{3pt}
    \begin{subfigure}{0.22\textwidth}
         \centering
         \includegraphics[width=\textwidth]{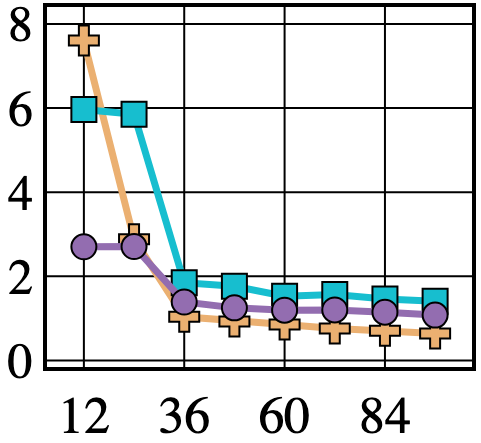}
         \caption{[ D ] d=1/2}
         \label{fig:desg-delmin-48}
    \end{subfigure}
    \vspace{-10pt}
    \caption{Throughput for the designated thread experiment. ``d'' denotes the fraction of threads performing delete-min. [ I ] indicates insert throughput, [ D ] indicates delete-min throughput. The x-axis represents the number of threads.}
    \label{fig:desg}
\end{figure}

We denote $d$ as the fraction of the total running threads that are designated to perform the delete-min operation.
We compared PIPQ with our competitors for values of $d$ = 1/24, 1/12, and 1/2. 
Making $d$ = 1/2 represents the case where half of all threads perform the delete-min operation, and the other half perform the insert operation. The values of $d$ = 1/24 and 1/12 assign more threads for insertions. The results are in Figure~\ref{fig:desg}. 
We separated the throughput of insertions and delete-mins into distinct plots to better understand the behavior of each operation.

PIPQ significantly outperforms competitors for insert throughput for all values of $d$ (see Figures~\ref{fig:desg-ins-4}, \ref{fig:desg-ins-8}, and \ref{fig:desg-ins-48}). PIPQ achieves the largest speedup for insert throughput in Figure~\ref{fig:desg-ins-4} when $d$ = 1/24, producing 50x speedup compared to the closest competitor. These experiments show that our insert operation is able to retain performance in the presence of delete-mins, as long as the threads performing insertions are not stalled due to also performing delete-min operations, as was the case in our microbenchmark experiment.
For delete-min, PIPQ achieves higher throughput than Lotan-Shavit and slightly less than that of Lind\'{e}n-Jonsson.
As a result of the high speed of inserts, when we compare the cumulative overall throughput (that is, the summation of the insert and remove-min operations), PIPQ also outperforms its closest competitor by as much as 47x.

Comparing Figures~\ref{fig:desg-delmin-48} and~\ref{fig:desg-ins-48} with Figure~\ref{fig:micro-50ins} exemplifies the effect of dedicating threads for each operation. 
In Figure~\ref{fig:micro-50ins}, it is expected that half of the threads perform delete-min, and half perform insert at any given time (similar to Figures~\ref{fig:desg-delmin-48} and~\ref{fig:desg-ins-48}).
However, PIPQ retains higher performance of insert when threads do not alternate between the two operations.

\subsection{Phased Experiment}

Some applications of priority queues involve periods of only insertions followed by periods of only delete-mins. We define this workload as ``phased'' to represent the various periods experienced by the application. 
During each phase, the total work is split among active threads. The results are shown in Figure~\ref{fig:phased}.
The experiments consist of the following phases: in each experiment, 50 million elements are inserted into the priority queue in Phase 1. Once all elements have been inserted, in Phase 2, we remove 1 million (Figure~\ref{fig:phased-1}), 5 million  (Figure~\ref{fig:phased-5}), 10 million (Figure~\ref{fig:phased-10}), and 50 million (Figure~\ref{fig:phased-50}), i.e. all, elements.

\begin{figure}[h]
    \centering
    \begin{subfigure}{\textwidth}
        \centering
        \includegraphics[width=.4\textwidth]{plots/legend.jpg}
    \end{subfigure}
    \begin{subfigure}{0.03\textwidth}
         \includegraphics[width=\textwidth]{plots/mops-s.jpg}
         \vspace{12pt}
    \end{subfigure}
    \begin{subfigure}{0.22\textwidth}
         \centering
         \includegraphics[width=\textwidth]{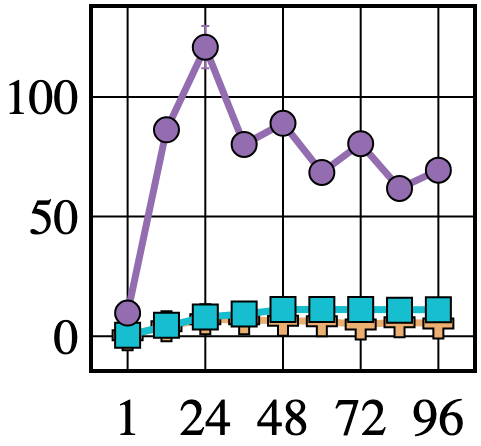}
         \caption{50->1}
         \label{fig:phased-1}
    \end{subfigure}
    \begin{subfigure}{0.22\textwidth}
         \centering
         \includegraphics[width=\textwidth]{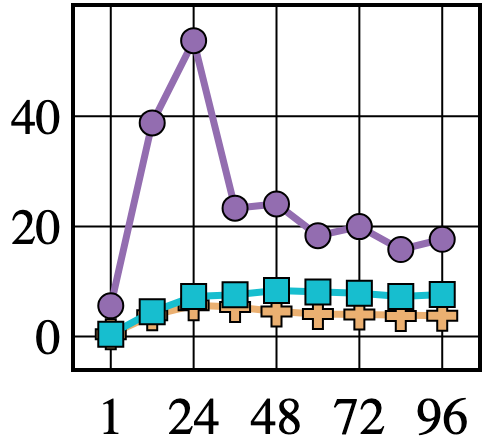}
         \caption{50->5}
         \label{fig:phased-5}
    \end{subfigure}
    \begin{subfigure}{0.22\textwidth}
         \centering
         \includegraphics[width=\textwidth]{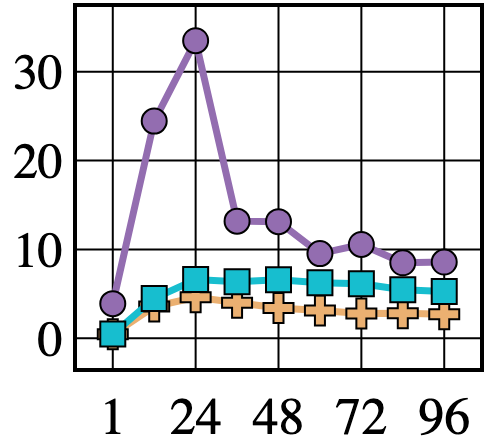}
         \caption{50->10}
         \label{fig:phased-10}
    \end{subfigure}
    \begin{subfigure}{0.22\textwidth}
         \centering
         \includegraphics[width=\textwidth]{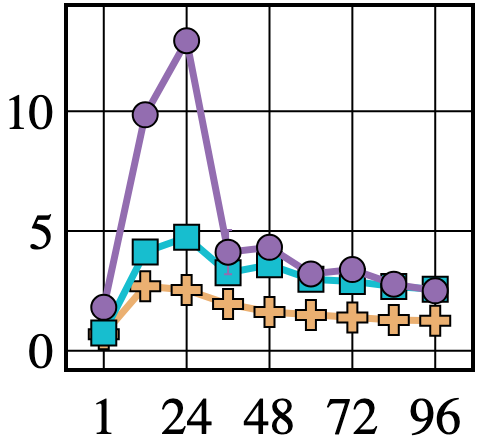}
         \caption{50->50}
         \label{fig:phased-50}
    \end{subfigure}
    \vspace{-5pt}
    \caption{Phased experiments. Workloads are denoted ``A->B'', meaning Phase 1 is A million inserts and Phase 2 is B million delete-mins. The x-axis represents threads.}
    \label{fig:phased}
\end{figure}

PIPQ outperforms competitors in the three experiments that only remove a fraction of the elements inserted, significantly so within a NUMA node (i.e., up to 24 threads). In each case, Lind\'{e}n-Jonsson is the closest competitor. As the number of delete-min operations performed in Phase 2 increases, the gap in performance becomes smaller. This aligns with one of our initial motivations that PIPQ's design fits applications where inserts outnumber deletes. In Figure~\ref{fig:phased-1}, PIPQ achieves the largest performance gains, as expected, since insertions most significantly dominate the workload in this experiment. Specifically, PIPQ produces up to 15x speedup compared to Lind\'{e}n-Jonsson at 24 threads and 6.3x speedup compared to Lind\'{e}n-Jonsson at 96 threads. In Figure~\ref{fig:phased-5}, which shows removing 10\% of inserted items, 7.4x speedup is produced compared to Lind\'{e}n-Jonsson at 24 threads, and 2.3x speedup at 96 threads. For the case of removing 20\% of the inserted items, shown in Figure~\ref{fig:phased-10}, PIPQ gains 5.1x speedup over Lind\'{e}n-Jonsson at 24 threads and 1.6x speedup at 96 threads.

In Figure~\ref{fig:phased-50}, we show when 100\% of insertions are then removed.
Similar to before, the lesser gap in performance compared to Figures~\ref{fig:phased-1}, \ref{fig:phased-5}, and \ref{fig:phased-10} is due to the asymmetry in performance between our two operations; since our insertions are very fast, the amount of time consumed during the second phase is greater compared to the first phase, and so the effect of the delete-min becomes the overwhelmingly dominating factor in performance.
That said, however, PIPQ still gains 2.7x speedup over Lind\'{e}n-Jonsson within a NUMA node.

\subsection{Single-Source Shortest Path Experiment}

\label{sec:sssp}

The Single-Source Shortest Path (SSSP) algorithm is commonly used as an application benchmark for priority queues~\cite{spray,linden,smq,multi-bucket-q}.
SSSP begins with a given source node, $s$, in a graph and finds the shortest path to all other nodes from $s$. We use the same code for the SSSP algorithm as Alhistarh et al.~\cite{spray}; as authors note, the implementation is a parallel version of Dijkstra's algorithm, which is typically used to implement sequential SSSP.

Graph applications such as SSSP,
breadth-first search, the A* path-finding algorithm, and PageRank~\cite {multi-bucket-q}, compute deterministic solutions regardless of whether the supporting priority queue is relaxed or strict.
Innovations on relaxed priority queues in the past couple of decades have improved the performance of such applications (e.g., \cite{spray,smq,multi-bucket-q}).
While recognizing that strict priority queues may not be the best option for SSSP, we include the results nonetheless in order to further analyze the performance of PIPQ compared to strict competitors in a well-known benchmark.

We run the SSSP algorithm using three different datasets from \cite{stanford-datasets}: (1) a social media network, called Orkut, (2) another social media network, called LiveJournal, and (3) the California Road Network.
Our evaluation reveals an interesting correlation between the performance of PIPQ and the density of the graph, which is different in these three datasets.
Results are shown in Figure~\ref{fig:sssp}. Note that the y-axis represents the amount of time (in seconds) that it takes to complete the SSSP algorithm for each dataset, and thus a lesser value on the y-axis indicates better performance.


\begin{figure}[h]
    \centering
    \begin{subfigure}{\textwidth}
        \centering
        \includegraphics[width=.4\textwidth]{plots/legend.jpg}
    \end{subfigure}
    \begin{subfigure}{0.026\textwidth}
         \includegraphics[width=\textwidth]{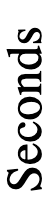}
         \vspace{30pt}
    \end{subfigure}
    \hspace{3pt}
    \begin{subfigure}{0.22\textwidth}
         \centering
         \includegraphics[width=\textwidth]{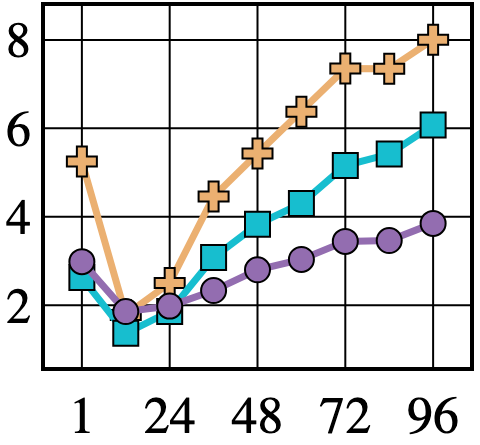}
         \caption{Orkut}
         \label{fig:orkut}
    \end{subfigure}
    \hspace{3pt}
    \begin{subfigure}{0.22\textwidth}
         \centering
         \includegraphics[width=\textwidth]{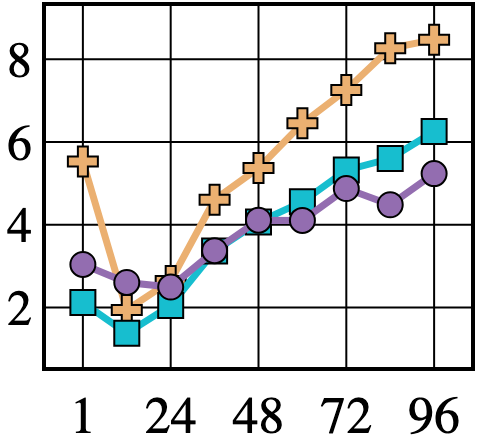}
         \caption{Live Journal}
         \label{fig:livejournal}
    \end{subfigure}
    \hspace{3pt}
    \begin{subfigure}{0.22\textwidth}
         \centering
         \includegraphics[width=\textwidth]{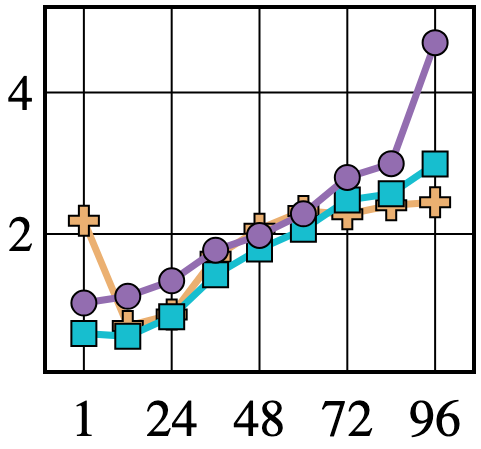}
         \caption{CA Road Network}
         \label{fig:roadnet}
    \end{subfigure}
    \vspace{-10pt}
    \caption{The Single Source Shortest Path experiments. Time in seconds is plotted; thus, \textit{a smaller y value indicates better performance.} The x-axis represents threads.}
    \label{fig:sssp}
\end{figure}

Figure~\ref{fig:orkut} contains the results of performing SSSP on the Orkut dataset. The Orkut dataset is a dense graph, containing approximately 3 million nodes and 117 million edges. Thus, there are about 39x more edges than nodes in the graph. Relative to competitors, PIPQ performs the best in this experiment compared to the other experiments. At 96 threads, PIPQ produces 1.6x speedup compared to its closest competitor, Lind\'{e}n-Jonsson.


Figure~\ref{fig:livejournal} contains results of performing the SSSP algorithm on the Live Journal dataset. The dataset contains approximately 5 million nodes and 70 million edges.
Similar to Orkut, though to a lesser degree, the Live Journal dataset contains many more edges than nodes; specifically, it contains 14x more edges than nodes.
PIPQ and Lind\'{e}n-Jonsson perform similarly in this experiment, with PIPQ able to retain performance better than Lind\'{e}n-Jonsson at higher thread counts.

The results of the final experiment are in Figure~\ref{fig:roadnet}.
The California Road Network dataset is smaller than Live Journal and Orkut, containing approximately 2 million nodes and 2.8 million edges, hence only 1.4x more edges than nodes.
All competitors perform similarly until 72+ threads, at which point PIPQ diverges.

Based on these experiments, we conclude that PIPQ is best suited for datasets that contain significantly more edges than nodes, because performing SSSP on such graphs leverages the benefits provided by PIPQ's design of having data spread amongst worker-level heaps (i.e., high locality).
Since nodes are inserted and removed concurrently throughout the course of the algorithm, a smaller ratio of edges to nodes means that the number of elements in the priority queue at any given time is low as compared to a graph with a high ratio of edges to nodes, like that of Orkut and Live Journal.
This is because when there are many edges, processing a single node often leads to many subsequent insertions.
When the size of PIPQ  is large, insertions are more likely to be to the worker level, following the fast path of our algorithm.

\section{Conclusion}

In this paper we presented PIPQ, a strict, linearizable concurrent priority queue. Its design focuses on increasing the performance and parallelism of insert operations as opposed to focusing on the sequential bottleneck of delete-min operations, as done by the majority of prior work.
PIPQ achieves this by allowing insert operations to utilize per-thread data structures in the common case. Our comprehensive evaluation confirms that with PIPQ's design, it is possible to achieve high performance and scalability under various access patterns and runtime configurations.



\bibliography{citations}

\end{document}